\documentclass[12pt]{article}
\usepackage[margin=1in]{geometry}
\usepackage[round,compress]{natbib}
\usepackage{setspace}
\usepackage{xcolor}
\usepackage{hyperref}
\usepackage{amsmath,amsthm,amssymb}
\usepackage{enumitem}
\newtheorem{theorem}{Theorem}

\usepackage{rotating,multirow,makecell,array}
\usepackage {tikz}
\usepackage{textcomp}
\usetikzlibrary{calc} 
\usepackage{subcaption}
\usetikzlibrary {positioning}
  \newtheorem{cor}{Corollary}
    \newtheorem{lem}{Lemma}
  \newtheorem{prop}{Proposition}
\theoremstyle{definition}
 \newtheorem{defn}{Definition}
 \newtheorem{setting}{Setting}
  \usepackage{multibib}
\usepackage[utf8]{inputenc}   
\usepackage[T1]{fontenc}      
\usepackage{graphicx}         
\usepackage{booktabs}         
\usepackage{epstopdf}         
\usepackage{booktabs}   
\usepackage{multirow}   
\usepackage{booktabs}
\usepackage{multirow}      
\usepackage{makecell}      
 \theoremstyle{remark}
 \newtheorem{exmp}{Example}
 \newtheorem{rem}{Remark}
 
\usepackage{booktabs}   
\usepackage{amsmath}    

\usepackage{caption}              
\ifxetex
  \usepackage{mathspec}
\fi


\DeclareMathOperator*\argmin{arg\,min}
\DeclareMathOperator*\argmax{arg\,max}

\title{Publication Design with Incentives in Mind%
\footnote{The first version of this paper was circulated in October, 2024. We thank Isaiah Andrews, Dmitry Arkhangelesky, Arun Chandrasekhar, Kevin Chen, Raj Chetty, David Cutler, Matt Gentzkow, Ed Glaeser, Matt Jackson, Max Kasy, Toru Kitagawa, Paul Niehaus, Marco Ottaviani, Anna Russo, Jesse Shapiro, Jann Spiess, Elie Tamer, Alex Tetenov, and Kaspar W\"uthrich for helpful discussion. We thank Karthik Seetharaman for exceptional research assistance. The usual disclaimer applies.  }
}
\author{Ravi Jagadeesan%
\footnote{Department of Economics, Stanford University.  Email address: \texttt{ravi.jagadeesan@gmail.com}.}
\and Davide Viviano%
\footnote{Department of Economics, Harvard University.  Email address: \texttt{dviviano@fas.harvard.edu}.}
}
\date{\today}

\begin{document}

\maketitle

\vspace{-24pt}

\begin{abstract}

The publication process both determines which research receives the most attention, and influences the supply of research through its impact on researchers' private incentives. We introduce a framework to study optimal publication decisions when researchers can choose (i) whether or how to conduct a study and (ii) whether or how to manipulate the research findings (e.g., via selective reporting or data manipulation). When manipulation is not possible, but research entails substantial private costs for the researchers, it may be optimal to incentivize cheaper research designs even if they are less accurate. When manipulation is possible, it is optimal to publish some manipulated results, as well as results that would have not received attention in the absence of manipulability. Even if it is possible to deter manipulation, such as by requiring pre-registered experiments instead of (potentially manipulable) observational studies, it is suboptimal to do so when experiments entail high research costs.
We illustrate the implications of our model in an application to medical studies.

\end{abstract}


\onehalfspacing 

\section{Introduction}

Publication decisions shape the process of scientific communication. By selecting what to publish, journals affect which findings receive the most attention and can inform the public about the state of the world. 
The design of publication rules has therefore motivated recent debates on how statistical significance should affect publication when the goal is to direct attention to the most informative results \citep[e.g.,][]{abadie2020statistical,frankel2022findings}.

However, the publication process also affects the supply of research by influencing  researchers' incentives about how to conduct research. Researchers have many degrees of freedom about how to conduct their research, such as how and where to run an experiment \citep[e.g.,][]{allcott2015site,gechter2022combining}; the size, cost, and effort associated with the study \citep[e.g.,][]{thompson2000should, grabowski2002returns}; and which findings to report from a given study \citep[][]{brodeur2020methods,elliott2022detecting}.
We refer broadly to their choices about each of these aspects of a study broadly as a \emph{research design}.

Researcher's private incentives may influence how they choose their design.
Yet,
``while economists assiduously apply incentive theory to the outside world, we use research methods that rely on the assumption that social scientists are saintly automatons'' \citep{glaeser2006researcher}. 
This raises the questions of when and how researchers' incentives should impact the design of publication processes,
and, more broadly, the optimal allocation of attention to research. 

This paper studies optimal publication decisions when researchers choose research designs based on private costs and benefits. 
We frame this question as a mechanism design problem: a social planner (principal) optimizes a publication rule, taking into account the incentives of a researcher (agent). 
The planner aims to use the publication process to efficiently allocate the attention of the audience to research findings.

More specifically, as in \cite{frankel2022findings} (building in turn on \cite{wald1950statistical}),
we suppose that research results impact the actions of an audience
who has limits on how much attention they can devote to research. The social planner seeks to publish results that are most important for the audience, net of the cost of (or taking into account constraints on) publication or attention.  
Due to attention costs, not all results will be published,
which leaves the planner with a non-trivial trade-off about which results and designs to publish.
We then introduce a model in which publication decisions affect the supply of research in the first place (Section~\ref{sec:setup}): given the publication rule, the researcher chooses the design that maximizes her value from publication (or other attention) net of research costs.

As a concrete example, consider a medical journal that is seeking to decide whether to publish results from a clinical study. 
The journal wants to convey accurate information on drug efficacy and direct the audience's attention to the most effective drugs.%
\footnote{%
For example, the stated mission of the \emph{New England Journal of Medicine} is
``to publish the best research and information at the intersection of biomedical science and clinical practice and to present this information in understandable, clinically useful formats that inform health care practice and improve patient outcomes.''
See also \cite{ana2004role} and \cite{demaria2022role} for further discussion of the role of a medical journal.
}
However, researchers respond to the design of the publication system through the size, length, cost of research studies, the composition of control groups \citep{thorlund2020synthetic}, and in some cases,  which specific findings to report \citep[e.g.,][]{riveros2013timing,shinohara2015protocol}.

We draw a dichotomy between researchers' incentives about (i) whether or how to conduct a study and (ii) which findings to report (e.g., via data manipulation or selective reporting).

We first focus on (i) and abstract from data manipulation and selective reporting (which we defer to Section \ref{sec:p_hacking});
that is, we first suppose that the research design is observable and verifiable as part of the publication process (Section~\ref{sec:design}).
For example, a researcher could choose between experiments with different mean-squared errors and costs, and, constrained by a pre-analysis plan, truthfully report an unbiased estimate of a treatment effect.
The planner can direct attention to the results of any executed study by publishing them;
publication can depend on both a study's design and its results.

We suppose that the publication process affects the supply of research through an individual rationality constraint:
for a researcher to be willing to conduct a study, they must be compensated with a large enough \emph{ex ante} publication probability.
Without these individual rationality constraints, the first-best publication rule would ignore researchers' private costs and publish only results from studies conducted with the lowest possible mean squared error;
due to attention costs,
not all such results would be published.

However, when the research cost of accurate designs is high enough,
the first-best publication rule does not compensate the researcher enough to incentivize them to execute such designs.
As a result, 
the planner faces a trade-off between directing the audience's attention to results that would not deserve it (e.g., treatments with negligible effects)
and rewarding the researcher enough to make them willing to use a costly design. 
The planner may prefer publishing results from less accurate studies---if they are sufficiently less expensive---to avoid having to publish too many results from accurate and costly experiments. 

Returning to our medical example, after providing a new drug to a treatment group, scientists can evaluate its efficacy by using experiments with a different number of participants. A larger experiment is more precise, but more costly for the researcher. 
Allowing smaller experiments can impact on the supply of medical research and new drugs by decreasing research costs.
Our analysis shows that due to the interaction between attention constraints and supply effects, publishing results from smaller experiments can be desirable when experiments are sufficiently costly for researchers to execute.

We next turn in Section~\ref{sec:p_hacking} to settings in which researchers can engage in potentially arbitrary data manipulation or selective reporting.
We suppose that the researcher can chose their research design after learning the (likely) results of their study.
For instance, in the absence of pre-specification,
researchers may engage in forms of $p$-hacking by, e.g., selecting regression specifications or control groups based on the observed outcomes.
We may also be concerned that researchers select experimental environments where they expect treatment effects to be non-representatively large---termed $g$-hacking by \cite{niederle2025experiments}.\footnote{This form of manipulation introduces a form of bias analogous to site selection bias.}

More concretely, we suppose that after observing the study's results, the researcher can report a biased statistic at a (reputational) cost increasing in the bias.
The bias is unobserved by the planner.
The audience is unaware of the possibility of manipulation in publishing findings, so takes published results at face value.
The planner seeks to choose a publication rule to minimize the audience's loss, accounting for the possibility of manipulation.

Each publication rule generates a different degree of manipulation.
For example, suppose that the planner used the publication rule that would be optimal without manipulation---i.e.,
publishing if the reported statistic is above a cutoff.
The researchers would then manipulate their results to reach the cutoff, at a substantial loss for the audience. A different approach that has been proposed is to completely deter manipulation by making publication dependent only on the design, and not on results.\footnote{In practice, this approach can be implemented by committing to publication based on pre-analysis plans, as the \emph{Journal of Development Economics} and the \emph{Journal of Clinical Epidemiology} do (among others).}
However, this approach can incur substantial costs by directing the audience's attention to results that would not substantially affect their actions.

We show that the optimal rule has three key features.
First, it increases the cutoff for which findings always get published compared to settings without manipulation. 
Second, just below this cutoff, it randomizes publication decisions, making the researcher indifferent about whether to (or how much to) manipulate their results.
In particular,
the planner publishes
some findings that would not be published without the possibility of manipulation.
Third, some manipulation does occur in equilibrium for results that would merit attention absent manipulation.
Unlike manipulation under standard cutoff rules, this form of manipulation  benefits the planner by directing attention to results that impact the audience's action enough to merit attention.

To gain some intuition for these features of the optimum,
consider first the optimal publication rule without manipulation.
The planner can eliminate researchers' incentives to manipulate by publishing some results that are below the cutoff, but that would not merit attention. To publish fewer such studies, the planner should increase the cutoff for a result to be guaranteed publication. Increasing the cutoff however also reduces the number of studies that would be published in the absence of manipulation. 
Therefore, the planner then encourages researchers with results that would be published without manipulation, but are below the new cutoff, to engage in some small manipulation to increase their chances of publication.
The loss from publishing some slightly manipulated studies is second-order relative to the gain from publishing more results that substantially impact the audience's~action. 

To formally characterize the optimal publication rule under manipulability,
we formulate the social planner's problem as a mechanism design problem
with ``false moral hazard''
due to the researcher choosing a manipulation after learning the true results. 
The absence of direct transfers and the inability to reward the researcher with a publication probability above one make
the mechanism design problem effectively one with limited transfers.
As a result, standard methods as in \cite{mirrlees1971exploration} and \cite{myerson1981optimal} do not apply.
We solve the mechanism design problem by identifying the precise pattern of binding incentive~constraints.

In Section~\ref{sec:preanalysisplans}, we combine the models from Sections~\ref{sec:design} and~\ref{sec:p_hacking} to ask whether/when the planner should incentivize researchers to run costly experiments that adhere to pre-analysis plans, rather than allow for (cheaper) observational studies.%
\footnote{Our analysis also applies to whether the planner should mandate researchers to send a costly signal (e.g., a detailed pre-analysis plan) that deters them from engaging in manipulation.}
Without accounting for the supply effects of making research more costly, the planner would always require a nonmanipulable experiment \citep{spiess2018optimal,kasy2023optimal}. However,
due to supply effects,
the planner may prefer some observational studies, even when manipulation may occur.

Finally, in Section \ref{Sec:application}, we bring our model to the data and study the optimal publication rules for medical studies, taking into account researchers' best response. We first focus on optimal publication rules in contexts with manipulation. We calibrate the model using about 800,000 $p$-values from studies in hundreds of medical and pharmaceutical journals collected by \cite{head2015extent}. Setting attention costs to capture a standard $5\%$-level $t$-test, we find that the threshold at which results should always be published increases from 1.96  to 2.64. However, many results strictly smaller than 1.96 are published in equilibrium. 
Using the optimal rule makes the average bias of published findings drops by more than a factor of 2. 

We then ask when an observational study with possible manipulation may be preferable to an (equally precise) clinical trial without manipulation, building on the framework of Section~\ref{sec:preanalysisplans}.
Using our calibration, we determine how costly the full clinical trial must be for an observational study to be preferable for the planner.
We compare these costs with estimates of the expected cost-benefit ratio from clinical trials in 13 therapeutic areas. Our calculations illustrate two broad takeaways. First, although experiments dominate observational studies on average in several therapeutic areas, some areas with high research costs may merit the use of observational studies.
However, achieving gains from allowing observational~studies~requires using a different publication standard for them that accounts~for~their~manipulability.

Our results speak to a policy debate regarding the design of controls in clinical trials. 
After providing a new drug to a treatment group, scientists can evaluate its efficacy by using either an experimental placebo group or a synthetic control group obtained from historical medical records \citep{popat2022addressing,yin2022exploring}.
Using a synthetic control group can have large impact on the supply of medical research and new drugs by decreasing research costs \citep[][]{jahanshahi2021use, FDA2023evaluation, wong2014examination}. 
However, using a synthetic control group may increase the estimate's mean-squared error due to lack of randomization, as well as open the door to the manipulation of how the control is specified.
Our analysis highlights that allowing synthetic controls may be desirable if the publication or approval rules is designed to account for their manipulability---a consideration that has not been raised in the current policy debate on the use of synthetic control groups. 

\paragraph*{Related literature.}
This paper connects to a growing literature that develops economic models to analyze statistical protocols. 
In the context of scientific communication, \cite{andrews2019identification},
\cite{abadie2020statistical}, \cite{andrews2021model}, \cite{kitagawa2023optimal}, and (most closely related to our paper) \cite{frankel2022findings} have analyzed how research findings are or should be reported to inform the public.
Our analysis builds on this literature by introducing a model that incorporates researchers' incentives.  This allows us to study how researchers' incentives shape the optimal design of the optimal publication process.

We connect to a broad literature on statistical decision theory \citep[e.g.,][]{wald1950statistical,savage1951theory, manski2004, hirano2009asymptotics, tetenov2012statistical, KitagawaTetenov_EMCA2018} focusing in particular on settings with private researcher incentives.%
\footnote{In addition to the references discussed in detail below, related work in this line includes \cite{chassang2012selective}, \cite{manski2016sufficient}, \cite{banerjee2017decision}, \cite{banerjee2020theory}, \cite{williams2021preregistration}, \cite{bates2022principal,bates2023incentive}, \cite{frankel2022improving}, \cite{libgober2022false}, and \cite{yoder2022designing}. 
}
We develop a unified framework that allows us to analyze both settings in which researchers may choose the research design absent private information, and settings in which researchers can choose the design and manipulate reported findings with private information. This allows us to 
formally study ideas such as when/whether unsurprising results should be published, and whether manipulation should occur in equilibrium, as asked by \cite{glaeser2006researcher}.

In particular, an important distinction from some of these models studying approval decisions, such as \cite{tetenov2016economic}, \cite{bates2022principal, bates2023incentive}, and \cite{viviano2021should}, is that we consider data manipulation and selective reporting.   We also incorporate costs for the researcher of manipulating (and of executing the design), unlike analyses with manipulation by \cite{spiess2018optimal} and \cite{kasy2023optimal}---which we show leads to qualitatively different optimal publication rules.
\cite{mccloskey2022incentive} and \cite{andrews2019identification} propose statistical adjustments for $p$-hacking or publication bias holding researchers' behavior fixed. 
We instead study optimal publication rules in equilibrium, taking into account researchers' best responses.
Last, \cite{henry2019research} and \cite{di2017persuasion,di2021strategic} study decisions with sequential access to the data and with selective sampling, which are different from the question of selective design (and reporting) choice~studied~here.

A large empirical and econometric literature has documented several aspects of the research process, including 
selective reporting, data manipulation, specification search, as well as site selection bias and observational studies' bias 
\citep[e.g.,][]{allcott2015site,olken2015promises, banerjee2020praise,brodeur2020methods, rosenzweig2020external, miguel2021evidence, elliott2022detecting, gechter2022combining, rhys2024much}. Our contribution here is to provide a formal model that studies how incentives interact with several of these choices, shedding light on optimal publication rules and how they compare to ones used in practice.

\section{Setup} \label{sec:setup}

Consider three agents: a researcher, a (representative) audience, and a social planner. The audience and social planner are interested in learning a parameter $\theta \in \mathbb{R}$. 
All agents share a common prior $\theta \sim \mathcal{N}(0, \eta^2)$, whose mean is normalized to 0 without loss of generality.\footnote{Our results continue to hold for $\theta \sim \mathcal{N}(\mu,\eta^2)$, with the audience adjusting their action accordingly.} 

A researcher conducts a study to inform the audience about $\theta$, and seeks to publish their findings. A study is summarized by $(X, \Delta),$ where $\Delta$ denotes the design and $X$ the results observed in the study. If a study is conducted, it will be evaluated according to a publication rule $p(X,\Delta)$ with values in $[0,1]$.
Here, $p(X,\Delta)$ represents the probability of publishing the study, which is assumed to be a Borel measurable function of $(X,\Delta)$.
We assume that 
\begin{equation} \label{eqn:distribution1}
X(\Delta) | \theta \sim \mathcal{N}(\theta + \beta_\Delta, S_{\Delta}^2). 
\end{equation}
Here, $S_\Delta^2$ is the variance of design $\Delta$. 
The quantity $\beta_\Delta$ represents the component of the mean of design $\Delta$ that the audience does not understand;
we refer to $\beta_\Delta$ as the \emph{bias} of design $\Delta$.

Specifically, conditional on publication, the audience forms posterior beliefs about $\theta$ using Bayes' Rule assuming $\beta_\Delta = 0$---i.e., $\theta \sim \mathcal{N}\big(\frac{X \eta^2}{S_\Delta^2 + \eta^2},\frac{S_\Delta^2 \eta^2}{S_\Delta^2 + \eta^2}\big)$.
(We can recenter results to incorporate the component of $\mathbb{E}[X(\Delta)- \theta|\theta]$ that the audience understands.)
Conditional on non-publication, the audience's posterior mean equals its prior mean (zero), as would arise from Bayes' Rule with $p(\cdot)$ symmetric in $X$ and $\beta_\Delta$ symmetric around~0.
Thus, the audience's action $a_p^\star(X,\Delta)$ is 
$$
a_p^\star(X, \Delta) = \begin{cases} 
 \frac{X \eta^2}{S_\Delta^2 + \eta^2} & \text{ if the study is published}  \\ 
0  &\text{ otherwise}
\end{cases}.
$$

Given results $X=X(\Delta)$ for a design $\Delta$, and a parameter $\theta$, the planner incurs a loss 
\begin{equation} \label{eqn:loss}
\begin{aligned} 
\mathcal{L}_p(X, \Delta, \theta) = \mathbb{E}_p\Big[\big(\theta - a_p^\star(X, \Delta)\big)^2\Big] - c_a p(X, \Delta),
\end{aligned} 
\end{equation}
conditional on $X$ and $\theta$,
where $\mathbb{E}_p$ denotes expectation with respect to any stochasticity in the publication decision rule.
This is the (expected) loss of the audience, net of 
attention costs (or shadow costs of attention constraints) proportional to $c_a$.

%

We normalize the value of publication for the researcher to 1.  Given a design $\Delta$, a publication rule $p$, and results $X$, the researcher's expected payoff (conditional on $X$) is
\begin{equation} \label{eqn:payoff_researcher}
v_p(X, \Delta) = p(X, \Delta) - C_\Delta,
\end{equation} 
where $C_\Delta \le 1$ is the researcher's cost of executing design $\Delta$.%
\footnote{We assume $C_\Delta \le 1$ simply to rule out trivial cases in which design $\Delta$ is never chosen by the researcher.}
%
As is standard, whenever the researcher is indifferent between two designs, we implicitly assume she chooses the design that minimizes the planner's expected~loss.

\section{Publication rules under verifiable designs} \label{sec:design}

This section studies optimal publication when the planner can observe the research design, and condition publication on it.
We focus on designs $\Delta$ that are \emph{unbiased} in that $\beta_\Delta = 0$, and defer the analysis of designs that are biased due to manipulation to the following section.


\begin{figure}
 \centering
 \vspace{-24pt}
    \begin{tikzpicture}
\coordinate (1) at (-4,3);
\coordinate (2) at (-2,3);
\coordinate (3) at (-2,5);
\coordinate (4) at (-4,5);
\coordinate (5) at ($(1)!.5!(2)$); 
\coordinate (6) at ($(2)!.5!(3)$);
\coordinate (7) at ($(3)!.5!(4)$);
\coordinate (8) at ($(1)!.5!(4)$);
\coordinate (9) at ($(1)!.5!(3)$);

\coordinate (10) at (-4,0);
\coordinate (11) at (-2,0);
\coordinate (12) at (-2,2);
\coordinate (13) at (-4,2);
\coordinate (14) at ($(10)!.5!(11)$); 
\coordinate (15) at ($(11)!.5!(12)$);
\coordinate (16) at ($(12)!.5!(13)$);
\coordinate (17) at ($(10)!.5!(13)$);
\coordinate (18) at ($(10)!.5!(14)$);

\coordinate (21) at (-10,4);
\coordinate (22) at (3.5,4);
\coordinate (23) at (3.5,5);
\coordinate (24) at (-10, 5);

\coordinate (31) at (-10,3.5);
\coordinate (32) at (-7,3.5);
\coordinate (33) at (-7,2.5);
\coordinate (34) at (-10, 2.5);

\coordinate (41) at (-6.5,3.5);
\coordinate (42) at (-3.5,3.5);
\coordinate (43) at (-3.5,2.5);
\coordinate (44) at (-6.5, 2.5);

\coordinate (51) at (-3,3.5);
\coordinate (52) at (0,3.5);
\coordinate (53) at (0,2.5);
\coordinate (54) at (-3, 2.5);

\coordinate (61) at (0.5,3.5);
\coordinate (62) at (3.5,3.5);
\coordinate (63) at (3.5,2.5);
\coordinate (64) at (0.5, 2.5);

\coordinate (71) at (-10,2);
\coordinate (72) at (-8.6,2);
\coordinate (73) at (-8.6,1);
\coordinate (74) at (-10, 1);

\coordinate (81) at (-8.4,2);
\coordinate (82) at (-7,2);
\coordinate (83) at (-7,1);
\coordinate (84) at (-8.4, 1);

\coordinate (91) at (-6.5,2);
\coordinate (92) at (-5.1,2);
\coordinate (93) at (-5.1,1);
\coordinate (94) at (-6.5, 1);

\coordinate (101) at (-4.9,2);
\coordinate (102) at (-3.5,2);
\coordinate (103) at (-3.5,1);
\coordinate (104) at (-4.9, 1);

\coordinate (111) at (-3,2);
\coordinate (112) at (-1.6,2);
\coordinate (113) at (-1.6,1);
\coordinate (114) at (-3, 1);

\coordinate (121) at (-1.4,2);
\coordinate (122) at (0,2);
\coordinate (123) at (0,1);
\coordinate (124) at (-1.4, 1);

\coordinate (131) at (0.5,2);
\coordinate (132) at (1.9,2);
\coordinate (133) at (1.9,1);
\coordinate (134) at (0.5, 1);

\coordinate (141) at (2.1,2);
\coordinate (142) at (3.5,2);
\coordinate (143) at (3.5,1);
\coordinate (144) at (2.1, 1);

\draw[->] (-8,4.3)  -- (1.8,4.3);

  \node[circle] (g) at (-8,4.3) {$|$};
   \node[circle] (g) at (-5,4.3) {$|$};
     \node[circle] (g) at (-2,4.3) {$|$};
     \node[circle] (g) at (0.5,4.3) {$|$};
  \node[circle] (g) at (-8,3) {Design};
   \node[circle] (g) at (-5,3) {Experiment};
    \node[circle] (g) at (-2,3) {Evaluation};
    \node[circle] (g) at (0.5,3) {Audience};
   \node[circle] (g) at (-8,5) {$\Delta$};

    \node[circle] (g) at (-5,5) {$X \sim \mathcal{N}(\theta, S_\Delta^2)$};

      \node[circle] (g) at (-2,5) {$p(X, \Delta)$};
\node[circle] (g) at (0.5,5) {$a^\star_p(X, \Delta)$};





    \end{tikzpicture}
\vspace{-24pt}
\caption{ Illustration of the variables in the model under observable and verifiable research designs. First, researchers pre-specify the population of interest. Second, they run an experiment and draw a statistic $X$. Third, the planner evaluates the experiment based on a decision rule $p(X, \Delta) \in [0,1]$. Finally, the audience forms a posterior about the estimand of interest, and accordingly takes action $a^\star_p(X,\Delta)$.} \label{fig:time}
\end{figure}


Our analysis proceeds in three steps.
We first characterize the optimal publication rule subject to the constraint of incentivizing the researcher to implement a particular design $\Delta$.
We then characterize which designs are worth incentivizing relative to an outside option.
Last, we characterize the optimal publication rule that chooses between multiple designs.

\subsection{Preliminary analysis for implementing a particular design} \label{sec:simple1}

As a first step, we characterize the constrained optimal publication when the planner must make implementing a particular design $\Delta$ individually rational for the researcher---i.e., when the planner must guarantee that the researcher is willing to conduct the study.

\begin{defn}
\label{defn:1}
A \emph{constrained optimal publication rule} for a design $\Delta$ is a publication rule
$p_\Delta^\star$ that minimizes $\mathbb{E}\left[\mathcal{L}_p(X(\Delta), \Delta, \theta) \right]$ subject to $\mathbb{E}[v_p(X(\Delta), \Delta)] \ge 0.$
\end{defn}

Our first result shows that the constrained optimal publication rule then takes a threshold form, where the threshold $t^\star_\Delta$ for publication depends on the prior, the mean squared error and research cost of the design $\Delta$, and the publication cost.

\begin{prop}[Constrained optimal publication rule] \label{prop:t_test}
If $\Delta$ is an unbiased design,
then a constrained optimal publication rule for $\Delta$ is
the threshold rule $p_\Delta^\star(X) = 1\left\{|X| \ge t^\star_\Delta\right\}$,~where%
\footnote{The proof shows that this rule is uniquely optimal (up to sets of measure 0).}
\[t^\star_\Delta = \min\left\{\frac{S^2_\Delta + \eta^2}{\eta^2} \sqrt{c_a},\left|\Phi^{-1}(C_\Delta/2)\right| \sqrt{S_\Delta^2 + \eta^2}\right\}.\]
\end{prop}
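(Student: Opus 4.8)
The plan is to fix the design $\Delta$ (so the publication rule may be viewed as a Borel function of $X$ alone), rewrite the constrained minimization as a linear program in $p$, and solve it by a Lagrangian/bang-bang argument.

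First I would integrate out $\theta$. Since $\Delta$ is unbiased, $X=X(\Delta)$ has marginal law $\mathcal{N}(0,\,S_\Delta^2+\eta^2)$, and the audience's action $a_p^\star(X,\Delta)=X\eta^2/(S_\Delta^2+\eta^2)$ is exactly the posterior mean $\mathbb{E}[\theta\mid X]$. Writing $\mathcal{L}_p=p(X)\,(\theta-a_p^\star(X,\Delta))^2+(1-p(X))\,\theta^2-c_a\,p(X)$, conditioning on $X$, and using $\mathbb{E}\bigl[(\theta-\mathbb{E}[\theta\mid X])^2\mid X\bigr]=\operatorname{Var}(\theta\mid X)$, the objective reduces — up to an additive constant not depending on $p$, and writing $c_a$ for the per-publication shadow cost of attention — to a linear functional
\[
\mathbb{E}[\mathcal{L}_p]\;=\;\text{const}\;-\;\mathbb{E}\!\left[\,p(X)\,G_\Delta(X)\,\right],\qquad G_\Delta(X)\;=\;\left(\frac{X\eta^2}{S_\Delta^2+\eta^2}\right)^{\!2}-c_a,
\]
where $G_\Delta(X)$ is the net benefit of publishing a study reporting $X$: the reduction $(a_p^\star(X,\Delta))^2$ in the audience's expected squared error, minus the attention cost $c_a$. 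So the problem is to maximize $\mathbb{E}[p(X)\,G_\Delta(X)]$ over measurable $p\colon\mathbb{R}\to[0,1]$ subject to the single linear constraint $\mathbb{E}[p(X)]\ge C_\Delta$ (which is precisely $\mathbb{E}[v_p]\ge0$). The structural fact to exploit is that $G_\Delta$ is even in $X$, strictly increasing in $|X|$, and changes sign at $|X|=\tau_0:=\frac{S_\Delta^2+\eta^2}{\eta^2}\sqrt{c_a}$; in particular the optimum will be symmetric.

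Next I would attach a multiplier $\lambda\ge0$ to the participation constraint and maximize the Lagrangian $\mathbb{E}[p(X)(G_\Delta(X)+\lambda)]-\lambda C_\Delta$ pointwise in $p$: its maximizer is $p(X)=1\{G_\Delta(X)+\lambda\ge0\}$, which, since $G_\Delta(X)+\lambda$ is increasing in $|X|$, is the symmetric threshold rule $1\{|X|\ge\tau(\lambda)\}$ with $\tau(\lambda)=\frac{S_\Delta^2+\eta^2}{\eta^2}\sqrt{(c_a-\lambda)^+}$ — a continuous nonincreasing function of $\lambda$, equal to $\tau_0$ at $\lambda=0$ and decreasing to $0$. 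Then two cases. If $\mathbb{P}(|X|\ge\tau_0)\ge C_\Delta$, then $\lambda=0$ already gives a feasible rule, complementary slackness holds trivially, and by Lagrangian sufficiency $p_\Delta^\star=1\{|X|\ge\tau_0\}$ is optimal, so $t_\Delta^\star=\tau_0$. Otherwise, the intermediate value theorem yields some $\lambda\in(0,c_a)$ with $\mathbb{P}(|X|\ge\tau(\lambda))=C_\Delta$ (take $\lambda=c_a$ at the endpoint $C_\Delta=1$): the participation constraint binds with equality, complementary slackness holds again, and the corresponding symmetric threshold rule is optimal. Since $X\sim\mathcal{N}(0,S_\Delta^2+\eta^2)$, solving $\mathbb{P}(|X|\ge t)=C_\Delta$ gives $t=|\Phi^{-1}(C_\Delta/2)|\sqrt{S_\Delta^2+\eta^2}$. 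Finally, because $t\mapsto\mathbb{P}(|X|\ge t)$ is strictly decreasing, the first case occurs exactly when $\tau_0\le|\Phi^{-1}(C_\Delta/2)|\sqrt{S_\Delta^2+\eta^2}$; so in both cases $t_\Delta^\star=\min\bigl\{\tau_0,\ |\Phi^{-1}(C_\Delta/2)|\sqrt{S_\Delta^2+\eta^2}\bigr\}$, which is the claimed formula.

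For the uniqueness asserted in the footnote, observe that $G_\Delta(X)+\lambda$ vanishes only on the null set $\{|X|=\tau(\lambda)\}$, so any maximizer of the Lagrangian agrees with $p_\Delta^\star$ almost everywhere; and any other constrained optimum $p'$ attains the same value of $\mathbb{E}[p\,G_\Delta]$ while satisfying $\lambda(\mathbb{E}[p']-C_\Delta)\ge0$, hence is itself a Lagrangian maximizer, forcing $p'=p_\Delta^\star$ a.e. I expect the main obstacle to be the case analysis around the participation constraint — in particular, verifying that in the binding regime the planner does not wish to publish beyond the participation floor (which follows from $G_\Delta$ being negative at the relevant threshold) and that the two candidate thresholds $\tau_0$ and $|\Phi^{-1}(C_\Delta/2)|\sqrt{S_\Delta^2+\eta^2}$ order themselves exactly as the $\min$ requires in each regime. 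The remaining steps (Gaussian conjugacy, the integral $\mathbb{P}(|X|\ge t)$, and the algebra reducing $\mathbb{E}[\mathcal{L}_p]$ to the displayed form) are routine.
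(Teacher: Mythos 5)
Your proposal is correct and follows essentially the same route as the paper's proof in Appendix A.1: condition on $X$ to reduce the objective to a linear functional of $p$ (the paper's Lemma~\ref{lem:loss}), attach a multiplier $\lambda \ge 0$ to the participation constraint, maximize the Lagrangian pointwise to obtain the threshold $\frac{S_\Delta^2+\eta^2}{\eta^2}\sqrt{(c_a-\lambda)_+}$, and pin down $\lambda$ by complementary slackness to get the $\min$ formula. The only remark is that your displayed $\mathcal{L}_p$ carries $-c_a p$ while your $G_\Delta$ treats $c_a$ as an additive cost of publication; the latter is what the paper's Lemma~\ref{lem:loss} uses and is clearly the intended model, so this is a notational slip rather than a gap.
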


Here, we write $\Phi$ for the cumulative distribution function of a standard normal.

The proof is in Appendix \ref{app:proof1}. 
To understand the intuition behind Proposition~\ref{prop:t_test},
first suppose that the research cost is $C_\Delta = 0$,
so there is no individual rationality constraint for the researcher.
Then, as in \cite{frankel2022findings},
as the planner's publication cost $c_a$ is nonzero,
the planner will publish results that move the audience's optimal action enough to justify incurring the attention costs $c_a$:
i.e., results $|X| \ge \gamma^\star_\Delta,$
where
\[\gamma^\star_\Delta = \frac{S^2_\Delta + \eta^2}{\eta^2} \sqrt{c_a}.\]
When this cutoff rule guarantees an \emph{ex ante} publication probability of at least $C_\Delta$,
the individual rationality constraint does not bind.
In this case, we say the design is cheap.

\begin{defn}
\label{defn:cheap}
An unbiased design $\Delta$ is \emph{cheap} if $C_\Delta < \mathbb{P}\left(|X(\Delta)| \ge \gamma_\Delta^\star\right)$ 
and \emph{expensive} otherwise. 
\end{defn}

Note that higher publication costs $c_a$,
and lower prior variances $\eta^2$,
both raise the threshold $\gamma^\star_\Delta$ and hence make designs more likely to be expensive.
Whether a design is cheap depends on how informative it is (relative to attention costs). 

For expensive designs, 
the cutoff rule from \cite{frankel2022findings} does not provide a large enough \emph{ex ante} publication probability to entice the researcher to conduct the study in the first place.
Hence, 
the planner needs to commit to publishing more results in order to satisfy the researcher's individual rationality constraint.
It is optimal for the planner to publish results that move the audience's action the most, even if these results do not move the audience's action enough to justify the attention cost $c_a$.
Hence, the planner sets a cutoff that ensures an \emph{ex ante} publication chance of $C_\Delta$---i.e., a cutoff of
\[\left|\Phi^{-1}(C_\Delta/2)\right| \sqrt{S_\Delta^2 + \eta^2}.\]
This second cutoff is below $\gamma^\star_\Delta$ for (and only for) expensive designs, and is the optimal cutoff for such designs. 
Thus, if research costs are large enough that the implemented design is expensive,
the researcher's incentives play a central role in determining the optimal publication rule, unlike in \cite{frankel2022findings}.

The following corollary summarizes and formalizes the preceding discussion.  

\begin{cor}
\label{cor:t_testCheapExpensive}
\begin{enumerate}[label=(\alph*)]
    \item If $\Delta$ is a cheap design, then the cutoff for a constrained optimal publication rule is $t^\star_\Delta = \gamma^\star_\Delta$.
    \item If $\Delta$ is an expensive design, then the cutoff for a constrained optimal publication rule is $t^\star_\Delta = \left|\Phi^{-1}(C_\Delta/2)\right| \sqrt{S_\Delta^2 + \eta^2}$. 
\end{enumerate}
\end{cor}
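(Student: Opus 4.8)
The plan is to obtain the corollary as an immediate consequence of Proposition~\ref{prop:t_test}: that result already pins down the constrained optimal cutoff as
\[
t^\star_\Delta = \min\left\{\gamma^\star_\Delta,\ \left|\Phi^{-1}(C_\Delta/2)\right|\sqrt{S_\Delta^2+\eta^2}\right\},
\]
so all that remains is to show that the ``cheap'' condition of Definition~\ref{defn:cheap} is precisely the condition under which the first argument of this minimum is the smaller one, and that ``expensive'' is the complementary case. Thus the corollary reduces to a one-line monotonicity comparison of the two candidate cutoffs.

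First I would record the marginal law of the results: since $\Delta$ is unbiased we have $X(\Delta)\mid\theta\sim\mathcal{N}(\theta,S_\Delta^2)$ and $\theta\sim\mathcal{N}(0,\eta^2)$, hence $X(\Delta)\sim\mathcal{N}(0,S_\Delta^2+\eta^2)$ marginally, and therefore
\[
\mathbb{P}\bigl(|X(\Delta)|\ge\gamma^\star_\Delta\bigr)=2\,\Phi\!\left(-\frac{\gamma^\star_\Delta}{\sqrt{S_\Delta^2+\eta^2}}\right).
\]
Plugging this into Definition~\ref{defn:cheap}, the cheap condition $C_\Delta<\mathbb{P}(|X(\Delta)|\ge\gamma^\star_\Delta)$ is equivalent to $C_\Delta/2<\Phi(-\gamma^\star_\Delta/\sqrt{S_\Delta^2+\eta^2})$. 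Because $C_\Delta\le 1$, the argument $C_\Delta/2$ lies in $[0,1/2]$, so $\Phi^{-1}(C_\Delta/2)\le 0$ and $|\Phi^{-1}(C_\Delta/2)|=-\Phi^{-1}(C_\Delta/2)$. Applying the strictly increasing map $\Phi^{-1}$ and rearranging (multiplying through by $-1$ and by $\sqrt{S_\Delta^2+\eta^2}>0$), the cheap condition becomes exactly $|\Phi^{-1}(C_\Delta/2)|\sqrt{S_\Delta^2+\eta^2}>\gamma^\star_\Delta$. On this event the minimum in Proposition~\ref{prop:t_test} equals $\gamma^\star_\Delta$, which gives part (a); on the complementary (expensive) event the same chain of equivalences reverses the inequality, so the minimum equals $|\Phi^{-1}(C_\Delta/2)|\sqrt{S_\Delta^2+\eta^2}$, giving part (b). I would also note in passing that the expensive case forces $C_\Delta>0$ --- if $C_\Delta=0$ then $\mathbb{P}(|X(\Delta)|\ge\gamma^\star_\Delta)>0=C_\Delta$ (as $\gamma^\star_\Delta<\infty$), so $\Delta$ is cheap --- which ensures the quantile $\Phi^{-1}(C_\Delta/2)$ in the statement of (b) is finite and well defined.

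I do not expect any genuine obstacle: the substantive work is entirely contained in Proposition~\ref{prop:t_test}, and this corollary is just a restatement of its min-formula in terms of the cheap/expensive dichotomy. The only point that calls for a little care is the sign bookkeeping around $\Phi^{-1}$ --- since $C_\Delta/2\le 1/2$ the quantile is nonpositive, so one must move consistently between $\Phi^{-1}(C_\Delta/2)$ and its absolute value when inverting the defining inequality --- together with the harmless boundary case $C_\Delta=\mathbb{P}(|X(\Delta)|\ge\gamma^\star_\Delta)$, where the two candidate cutoffs coincide and either formula is correct.
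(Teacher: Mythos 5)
Your proposal is correct and matches the paper's reasoning: the paper presents this corollary as a formalization of the discussion following Proposition~\ref{prop:t_test}, which rests on exactly the observation you make, namely that the cheapness condition $C_\Delta < \mathbb{P}(|X(\Delta)|\ge\gamma^\star_\Delta)$ with $X(\Delta)\sim\mathcal{N}(0,S_\Delta^2+\eta^2)$ is equivalent to $\left|\Phi^{-1}(C_\Delta/2)\right|\sqrt{S_\Delta^2+\eta^2} > \gamma^\star_\Delta$, so the min in Proposition~\ref{prop:t_test} selects the corresponding branch. Your sign bookkeeping around $\Phi^{-1}$ and the handling of the boundary and $C_\Delta=0$ cases are all fine.
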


\subsection{Which designs are ever worth incentivizing}

As a second step, we study when a design is worth incentivizing relative to the outside option of no study.
If it is not worth doing so, then it is not worth making it individually rational for the researcher to conduct research based on design $\Delta$ in optimum.

Let $\mathcal{L}^\star_\Delta = \mathbb{E}\left[\mathcal{L}_{p_\Delta^\star}(X(\Delta), \Delta, \theta) \right]$
denote the optimal expected loss for the planner once implementing design $\Delta$.
(Here $p_\Delta^\star$ is a constrained optimal publication rule for $\Delta.$)
The expected loss if no research is published is the prior variance $\eta^2$.
Comparing these two quantities determines whether a design is worth incentivizing in the first place.

\begin{defn}
A design $\Delta$ is \emph{worthwhile} if $\mathcal{L}^\star_\Delta \le \eta^2$.
\end{defn}

We next characterize which designs are worthwhile.
If a design is cheap,
then the planner can selectively publish only results that move the audience's beliefs enough to justify incurring the attention cost.
Thus,
the \emph{ex post} loss under the constrained optimal publication rule is always lower than $\eta^2$, and so the (\emph{ex ante}) expected loss is less than $\eta^2$.

\begin{prop}[When are cheap designs worthwhile?]
\label{prop:cheapWorthwhile}
Every cheap design $\Delta$ is worthwhile.
\end{prop}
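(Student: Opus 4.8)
The plan is to bound the value of the constrained optimal rule $p_\Delta^\star$ for $\Delta$ by the loss of the simplest comparison rule, ``publish nothing,'' which by the discussion after Definition~\ref{defn:cheap} equals the prior variance $\eta^2$. The conceptual point is that when $\Delta$ is cheap the researcher's individual-rationality constraint is slack, so by Corollary~\ref{cor:t_testCheapExpensive}(a) the rule $p_\Delta^\star(X)=1\{|X|\ge\gamma_\Delta^\star\}$ is in fact the \emph{unconstrained} loss-minimizing publication rule; since the unconstrained objective decomposes additively across realizations of $X$, its minimizer makes, conditional on each realized $X$, the loss-minimizing choice between publishing and not, and hence its conditional loss never exceeds that of not publishing at that $X$.

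To make this rigorous I would write $m(X):=\mathbb{E}[\theta\mid X]=X\eta^2/(S_\Delta^2+\eta^2)$ and $\tau^2:=\mathrm{Var}(\theta\mid X)=S_\Delta^2\eta^2/(S_\Delta^2+\eta^2)$ (which does not depend on $X$). Since $\Delta$ is unbiased, publishing a result $X$ moves the audience to the posterior mean $m(X)$, so (integrating out $\theta$) the conditional loss is $\tau^2+c_a$ at a published result and $\tau^2+m(X)^2$ at an unpublished one, where the audience acts on the prior mean $0$ and incurs the extra squared error $m(X)^2$. Because $|X|\ge\gamma_\Delta^\star$ is, by the definition of $\gamma_\Delta^\star$, equivalent to $m(X)^2\ge c_a$, on the publication region one has $\tau^2+c_a\le\tau^2+m(X)^2$; hence $\mathbb{E}[\mathcal{L}_{p_\Delta^\star}(X,\Delta,\theta)\mid X]\le\tau^2+m(X)^2$ for every $X$. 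Integrating over the marginal law $X\sim\mathcal{N}(0,S_\Delta^2+\eta^2)$ and using $\mathbb{E}[m(X)^2]=\eta^4/(S_\Delta^2+\eta^2)=\eta^2-\tau^2$ (law of total variance, or direct computation) gives
\[
\mathcal{L}_\Delta^\star=\mathbb{E}\big[\mathcal{L}_{p_\Delta^\star}(X(\Delta),\Delta,\theta)\big]\le\tau^2+(\eta^2-\tau^2)=\eta^2,
\]
which is exactly the loss from publishing nothing; so $\Delta$ is worthwhile.

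I do not anticipate a genuine obstacle --- the computation is a few lines once the posterior moments are in hand --- but the argument should make explicit where cheapness enters: it is precisely Corollary~\ref{cor:t_testCheapExpensive}(a), namely that for a cheap design the optimal publication region is $\{|X|\ge\gamma_\Delta^\star\}=\{m(X)^2\ge c_a\}$, exactly the set on which publishing weakly lowers the conditional loss. For an expensive design the optimal cutoff falls strictly below $\gamma_\Delta^\star$ (Corollary~\ref{cor:t_testCheapExpensive}(b)), so the publication region strictly contains $\{m(X)^2\ge c_a\}$, the inequality $\tau^2+c_a\le\tau^2+m(X)^2$ can fail there, and the conclusion $\mathcal{L}_\Delta^\star\le\eta^2$ need no longer hold --- consistent with worthiness of expensive designs being a more delicate matter treated separately.
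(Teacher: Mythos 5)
Your proof is correct and takes essentially the same route as the paper: the paper's Lemma~\ref{lem:loss} writes the conditional loss as $\bigl[c_a - m(X)^2\bigr]p(X) + \mathbb{E}[\theta^2 \mid X]$, which is exactly your comparison of $\tau^2 + c_a$ against $\tau^2 + m(X)^2$, and cheapness enters in both arguments via Corollary~\ref{cor:t_testCheapExpensive}(a) making the publication region $\{m(X)^2 \ge c_a\}$. The only cosmetic difference is that the paper states the strict inequality $\mathcal{L}^\star_\Delta < \eta^2$, which also follows from your argument since the inequality is strict on a positive-measure set of $X$.
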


The proof is in Appendix \ref{proof:prop:cheapWorthwhile}. 
Whereas cheap design are always worthwhile, 
for expensive designs,
the situation is more delicate.
Incentivizing the researcher to implement a design requires committing to publish results that the planner would \emph{ex post} prefer not to publish.
When attention costs $c_a$ are large enough,
the cost of publishing these marginal results outweigh the benefit of publishing results that substantially affect the audience's action.
How large $c_a$ needs to be for this to occur depends on the design's cost and variance.

To formalize this intuition, it will be convenient to express our results in terms of the difference between the posterior and prior variances conditional on publication of the results of a design $\Delta$, which we denote~by
\newcommand\postvarred[1]{\operatorname{PostVarRed}(#1)}
\[\postvarred{\Delta} := \eta^2 - \frac{S_\Delta^2\eta^2}{S_\Delta^2 + \eta^2} = \frac{\eta^4}{S_\Delta^2 + \eta^2}.\]
This quantity is a measure of the informativeness of a design:
it represents how much learning the results of the design improves the expected utility of a Bayesian audience with $L^2$ loss.
Note that $\postvarred{\Delta}$ is increasing in $\eta^2$ and decreasing in  $S_\Delta^2$.
Whether a design is worthwhile then depends on how $\postvarred{\Delta}$ compares to the product of the attention and research costs, up to a small remainder that vanishes with large research costs ($C_\Delta \approx 1$).\footnote{Proposition \ref{cor:worthwhileNeccSuff} is sharp up to the term $\eta^2 (1 - C_\Delta)^3$; Lemma~\ref{lem:worthwhileNeccSuffUgly} in Appendix \ref{proof:cor:worthwhileNeccSuff} provides a more involved condition that is both necessary and sufficient for worthwhileness.}

\begin{prop}[When are expensive designs worthwhile?]
\label{cor:worthwhileNeccSuff}
Let $\Delta$ be an expensive design.
\begin{enumerate}[label=(\alph*)]
\item If $\postvarred{\Delta} \ge C_\Delta c_a + \eta^2 (1 - C_\Delta)^3$,
then $\Delta$ is worthwhile. 
\item If 
$\postvarred{\Delta} < C_\Delta c_a$,
then $\Delta$ is not worthwhile. 
\end{enumerate}
\end{prop}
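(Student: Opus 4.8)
The plan is to compute $\mathcal{L}^\star_\Delta$ explicitly for an expensive design and compare it to $\eta^2$. By Corollary~\ref{cor:t_testCheapExpensive}(b), the constrained optimal rule for an expensive design publishes exactly when $|X| \ge t^\star_\Delta$ with $t^\star_\Delta = |\Phi^{-1}(C_\Delta/2)|\sqrt{S_\Delta^2+\eta^2}$, so the \emph{ex ante} publication probability is exactly $C_\Delta$. First I would decompose the expected loss $\mathcal{L}^\star_\Delta = \mathbb{E}[\mathcal{L}_{p^\star_\Delta}(X(\Delta),\Delta,\theta)]$ into the posterior-risk term and the attention-cost term. The attention-cost term is simply $-c_a C_\Delta$. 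For the risk term, condition on publication versus non-publication: on non-publication (probability $1-C_\Delta$) the audience plays its prior action $0$, contributing $(1-C_\Delta)\,\mathbb{E}[\theta^2 \mid \text{not published}]$; on publication (probability $C_\Delta$) the audience plays the Bayes action $\tfrac{X\eta^2}{S_\Delta^2+\eta^2}$, and the conditional mean-squared error of that action for $\theta$, integrated against the marginal of $X$, is what needs care.

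The key simplification comes from the law of total variance / the martingale property of Bayesian posteriors. Unconditionally (ignoring publication), $\mathbb{E}[(\theta - \mathbb{E}[\theta\mid X])^2] = \tfrac{S_\Delta^2\eta^2}{S_\Delta^2+\eta^2} = \eta^2 - \postvarred{\Delta}$, and $\mathbb{E}[\theta^2] = \eta^2$. Writing $a^\star(X) = \tfrac{X\eta^2}{S_\Delta^2+\eta^2} = \mathbb{E}[\theta\mid X]$ and splitting $\mathbb{E}[(\theta-a^\star(X))^2]$ over the publish / don't-publish events, the ``don't publish'' piece is $\mathbb{E}[(\theta - a^\star(X))^2 \,\mathbf{1}\{|X|<t^\star_\Delta\}]$; comparing the actual loss (which uses action $0$ on that event) against this gives a correction term equal to $\mathbb{E}[(2\theta - a^\star(X)) a^\star(X)\,\mathbf{1}\{|X|<t^\star_\Delta\}]$, which by iterated expectations (pulling $\theta$ onto $a^\star(X)=\mathbb{E}[\theta\mid X]$) equals $\mathbb{E}[a^\star(X)^2\,\mathbf{1}\{|X|<t^\star_\Delta\}] = \postvarred{\Delta}^2/\eta^2 \cdot \mathbb{E}[Z^2 \mathbf{1}\{|Z|<z\}]$ after normalizing $X = \sqrt{S_\Delta^2+\eta^2}\,Z$ with $Z\sim\mathcal N(0,1)$ and $z = |\Phi^{-1}(C_\Delta/2)|$. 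Assembling: $\mathcal{L}^\star_\Delta = \eta^2 - \postvarred{\Delta} + \tfrac{\postvarred{\Delta}^2}{\eta^2}\,\mathbb{E}[Z^2\mathbf 1\{|Z|<z\}] - c_a C_\Delta$. Hence $\Delta$ is worthwhile iff $\postvarred{\Delta}\big(1 - \tfrac{\postvarred{\Delta}}{\eta^2}\kappa(C_\Delta)\big) \ge c_a C_\Delta$, where $\kappa(C_\Delta) := \mathbb{E}[Z^2\mathbf 1\{|Z|<z\}] = C_\Delta - 2z\phi(z) < C_\Delta \le 1$.

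From this exact inequality, part (b) is immediate: since $\kappa(C_\Delta)<1$ and $\postvarred{\Delta}\le \eta^2$, the bracket satisfies $1 - \tfrac{\postvarred{\Delta}}{\eta^2}\kappa(C_\Delta) \le 1$, so $\postvarred{\Delta} < c_a C_\Delta$ forces the worthwhileness inequality to fail. For part (a), I need a cruder but clean sufficient condition: bound $\tfrac{\postvarred{\Delta}}{\eta^2}\le 1$ and $\kappa(C_\Delta)\le C_\Delta - 2z\phi(z)$, so the bracket is at least $1 - \kappa(C_\Delta) = 1 - C_\Delta + 2z\phi(z) \ge 1 - C_\Delta$; then $\postvarred{\Delta}(1-C_\Delta) \ge c_a C_\Delta$... but that is not quite the stated bound, so instead I would keep $\postvarred{\Delta}$ multiplying the ``$1$'' and only crudely bound the subtracted term: $\tfrac{\postvarred{\Delta}^2}{\eta^2}\kappa(C_\Delta) \le \eta^2\kappa(C_\Delta) \le \eta^2(C_\Delta - 2z\phi(z))$, and then show $C_\Delta - 2z\phi(z) \le (1-C_\Delta)^3 + $ (something absorbed), i.e. reduce part (a) to the elementary estimate that $\mathbb{E}[Z^2\mathbf 1\{|Z|>z\}]\ge$ a suitable multiple of $(1-C_\Delta)$-type tail mass — this is where the $\eta^2(1-C_\Delta)^3$ slack is spent. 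The main obstacle is exactly this last step: controlling the Gaussian truncated-second-moment term $\kappa(C_\Delta)$ against $1-C_\Delta$ tightly enough to land the clean bound in (a), i.e. showing $1 - \tfrac{\postvarred{\Delta}}{\eta^2}\kappa(C_\Delta) \ge C_\Delta$ up to the cubic remainder, which amounts to a one-variable inequality on $z = \Phi^{-1}(\cdot)$ near the relevant range; everything else is bookkeeping with Gaussian integrals and iterated expectations.
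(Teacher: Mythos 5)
Your overall strategy is the same as the paper's: compute the planner's loss exactly under the threshold rule whose \emph{ex ante} publication probability is $C_\Delta$, observe that worthwhileness reduces to comparing $\postvarred{\Delta}$ times a truncated-Gaussian second moment against $c_a C_\Delta$, and then bound that Gaussian quantity. Part (b) survives, since it only needs the truncated second moment $\mathbb{E}[Z^2\mathbf{1}\{|Z|<z\}]$ to be nonnegative. But for part (a) you have left the crux unproven. After the (corrected) reduction, what is needed is precisely
\[
\mathbb{E}\bigl[Z^2\,\mathbf{1}\{|Z|<z\}\bigr] \;\le\; (1-C_\Delta)^3,
\qquad z=\Phi^{-1}\bigl(1-\tfrac{C_\Delta}{2}\bigr),
\]
equivalently $2z\phi(z)+C_\Delta\ge 1-(1-C_\Delta)^3$. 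You flag this as ``the main obstacle'' and say everything else is bookkeeping, but this inequality \emph{is} the mathematical content of part (a): the paper proves it as Lemma~\ref{lem:taylorUpsilon}(ii) via a two-regime argument (comparing $\Upsilon'(t)=\Phi^{-1}(1-t/2)^2$ with $3(1-t)^2$ on $t\ge\tfrac16$, and a separate concavity argument on $t\le\tfrac16$). Without it, part (a) is not established.

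Two algebra slips also need fixing, though neither is fatal to the plan. First, $a^\star(X)=\frac{\eta^2}{S_\Delta^2+\eta^2}X=\frac{\eta^2}{\sqrt{S_\Delta^2+\eta^2}}\,Z$, so $\mathbb{E}[a^\star(X)^2\mathbf{1}\{|X|<t^\star_\Delta\}]=\postvarred{\Delta}\cdot\mathbb{E}[Z^2\mathbf{1}\{|Z|<z\}]$, not $\frac{\postvarred{\Delta}^2}{\eta^2}$ times it (the two agree only when $S_\Delta=0$); the correct exact characterization is $\postvarred{\Delta}\bigl(1-\mathbb{E}[Z^2\mathbf{1}\{|Z|<z\}]\bigr)\ge c_aC_\Delta$, which matches the paper's $\postvarred{\Delta}\,\Upsilon(C_\Delta)\ge c_aC_\Delta$. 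Second, your identity $\kappa(C_\Delta)=C_\Delta-2z\phi(z)$ is wrong: integration by parts gives $\mathbb{E}[Z^2\mathbf{1}\{|Z|\ge z\}]=2z\phi(z)+C_\Delta$, hence $\kappa(C_\Delta)=1-C_\Delta-2z\phi(z)$ (your formula tends to $0$ as $C_\Delta\to 0$, whereas the truncated second moment tends to $1$), and this error propagates into your attempted lower bound on the bracket in part (a). Finally, you carry the sign $-c_aC_\Delta$ from Equation \eqref{eqn:loss} into the assembled loss but then state the worthwhileness condition as if the sign were $+c_aC_\Delta$; the latter is what Lemma~\ref{lem:loss} and all downstream results use, so your final condition is the intended one, but the derivation should be made consistent.
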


The proof is in Appendix \ref{proof:cor:worthwhileNeccSuff}.  
In particular, Proposition~\ref{cor:worthwhileNeccSuff} shows that
as $C_\Delta$ or $c_a$ increases, the posterior variance reduction must increase proportionally for a design to remain worthwhile (up to a small remainder that vanishes in the case of large research costs).

\subsection{Choosing which design to incentivize} \label{sec:simple2}

We next study the optimal publication rule when there is more than one possible design.
Without loss of generality, we suppose that both designs are worthwhile, and that
the one with a lower mean squared error has a higher research cost,
so the planner faces a non-trivial problem about which design to incentivize.

\begin{setting} \label{set:1} 
Researchers can choose between two 
designs $E,O$ that are unbiased and worthwhile.
The designs have mean squared errors $S_E^2 < S_O^2$ and costs $C_E > C_O$.
\end{setting} 

We think of $\Delta = E$ as a possibly expensive experiment and $\Delta = O$ as a lower-cost experiment or non-manipulable observational study (manipulation is studied in Section~\ref{sec:p_hacking}).

\begin{exmp}[Low-cost and costly experiment] 
Suppose that $O$ corresponds to an experiment with fewer participants than $E$.
In this case, we have $C_O < C_E$ and $S_O^2 > S_E^2$. \qed 
\end{exmp}

\begin{exmp}
[Experiment versus nonmanipulable observational study]
\label{exmp:experiment}
Suppose that $O$ corresponds to an observational study with no strategic manipulation of the results. 
Let $X(E) = \theta + \varepsilon_E$ and $X(O) = \theta + b_O + \varepsilon_O$, where $\varepsilon_E \sim \mathcal{N}(0, S_E^2)$ denote the estimation noise from the experiment, $\varepsilon_O \sim \mathcal{N}(0,\sigma_O^2)$ denotes the idiosyncratic noise from the experiment or observational study in $O$, and $b_O | \varepsilon_O \sim \mathcal{N}(0,\sigma_B^2)$ denotes a random effect, which captures unobserved bias drawn from a fixed (Gaussian) distribution.%
\footnote{For example, \cite{rhys2024much} investigates the distribution of $b_O$ through a meta-analysis. 
} 
We then have that $X(O) \sim \mathcal{N}(\theta,S_O^2)$, where the mean-squared error $S_O^2 = \sigma_O^2 + \sigma_B^2$
includes both sampling uncertainty $\sigma_O^2$ and irreducible error $\sigma_B^2$ arising from the variance of the bias.
\qed 
\end{exmp}

We next use Proposition~\ref{prop:t_test} to study  the optimal choice between the two designs.
Because the design $\Delta$ is observable by the planner and verifiable as part of the publication process, the planner can incentivize their preferred design by setting
\begin{equation} \label{eqn:optimal2}
p^\star(X, \Delta) = 1\{\Delta = \Delta^{\mathrm{planner}}\} p_\Delta^\star(X) \qquad \text{where} \qquad \Delta^{\mathrm{planner}} \in \argmin_{\Delta \in \{E, O\}} \mathcal{L}^\star_\Delta. 
\end{equation}
For instance, the planner may only accept experiments with a minimum level of precision. It is immediate that $p^\star(X, \Delta)$ minimizes the planner's expected loss. 
%
We therefore study the optimal design choice by comparing the minimized loss of the social planner when implementing the experiment versus implementing the observational study. 
More generally,
we can use similar logic to compare the effectiveness of any two designs.

\begin{defn}
Design $\Delta$ is \emph{planner-preferred} to design $\Delta'$ if $\mathcal{L}^\star_{\Delta} < \mathcal{L}^\star_{\Delta'}$.
\end{defn}

It is immediate that if a design is planner-preferred to a worthwhile design $\Delta'$, then $\Delta$ is worthwhile.
In particular, Proposition~\ref{prop:cheapWorthwhile} implies that if a design $\Delta$ is planner-preferred to a cheap, unbiased design $\Delta'$,
then $\Delta$ is worthwhile.

If the more precise experiment $E$ is cheap, then its higher research cost is irrelevant to the planner.
Therefore, the experiment is planner-preferred to $O$.

\begin{prop}
\label{prop:cheapPreferred}
In Setting~\ref{set:1},
if $E$ is cheap, then $E$ is planner-preferred to $O$.
\end{prop}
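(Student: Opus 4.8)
The plan is to show that when $E$ is cheap, its optimal loss $\mathcal{L}^\star_E$ is bounded above by $\mathcal{L}^\star_O$, regardless of whether $O$ is cheap or expensive. The key observation is that for a cheap design, the constrained optimal publication rule coincides with the ``first-best'' cutoff rule of \cite{frankel2022findings} (Corollary~\ref{cor:t_testCheapExpensive}(a)), so $\mathcal{L}^\star_E$ equals the unconstrained minimum over all publication rules of $\mathbb{E}[\mathcal{L}_p(X(E), E, \theta)]$. Hence it suffices to exhibit \emph{some} publication rule on design $O$ whose loss is at least $\mathcal{L}^\star_E$ — or, more directly, to compare the unconstrained optima for the two designs and show the one for $E$ is no larger.

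First I would write the \emph{ex ante} loss of the cutoff rule $p(X) = 1\{|X| \ge t\}$ for an unbiased design with variance $S^2$. Conditional on publication, the audience's squared error is the posterior variance $\frac{S^2\eta^2}{S^2+\eta^2} = \eta^2 - \postvarred{\Delta}$; conditional on non-publication it is $\eta^2$ minus the amount of variance ``explained'' by knowing $|X| < t$. So the loss can be written as $\eta^2 - \postvarred{\Delta}\cdot(\text{something involving the publication region}) - c_a \mathbb{P}(|X|\ge t)$, and minimizing over $t$ gives the first-best cutoff $\gamma^\star_\Delta$. I would then argue monotonicity: the unconstrained minimal loss $\min_p \mathbb{E}[\mathcal{L}_p(X(\Delta),\Delta,\theta)]$ is decreasing in $\postvarred{\Delta}$ (a more informative design can only help a planner who is free to choose the rule) and, at fixed informativeness, does not depend on $C_\Delta$. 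Since $S_E^2 < S_O^2$ implies $\postvarred{E} > \postvarred{O}$, the unconstrained optimum for $E$ is at most that for $O$. Finally, because $E$ is cheap its constrained optimum equals its unconstrained optimum, i.e. $\mathcal{L}^\star_E = \min_p \mathbb{E}[\mathcal{L}_p(X(E),E,\theta)]$, while $\mathcal{L}^\star_O \ge \min_p \mathbb{E}[\mathcal{L}_p(X(O),O,\theta)]$ trivially. Chaining the inequalities yields $\mathcal{L}^\star_E \le \min_p \mathbb{E}[\mathcal{L}_p(X(O),O,\theta)] \le \mathcal{L}^\star_O$, and strictness of $S_E^2 < S_O^2$ (hence strict inequality in the informativeness comparison, assuming $c_a>0$ so the publication region is nonempty) gives $\mathcal{L}^\star_E < \mathcal{L}^\star_O$, i.e. $E$ is planner-preferred.

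Alternatively, and perhaps more cleanly for the write-up, I would use a direct coupling: take the constrained optimal rule $p^\star_O$ for $O$, and on design $E$ use the cutoff rule that publishes exactly when the \emph{posterior action} $\frac{X\eta^2}{S_E^2+\eta^2}$ lands in the same set of action-values that $p^\star_O$ publishes — or simply compare each design against its own first-best. The main obstacle is making the ``more informative design is weakly better for an unconstrained planner'' step rigorous: one must verify that the minimized loss is genuinely monotone in $S_\Delta^2$, which requires either an explicit computation of the optimized cutoff-rule loss as a function of $(S_\Delta^2, c_a, \eta^2)$ and checking its monotonicity in $S_\Delta^2$, or a soft argument (e.g., a garbling/Blackwell-type comparison showing the high-variance design's signal is a noisier version of the low-variance one, so any loss achievable under $O$ is achievable under $E$). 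I expect the explicit-computation route to be short here since the optimal cutoff $\gamma^\star_\Delta$ and the resulting loss are available in closed form from the proof of Proposition~\ref{prop:t_test}, and I would lean on that rather than invoke Blackwell machinery.
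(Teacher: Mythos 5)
Your proposal is correct and follows essentially the same route as the paper: the paper also reduces the cheap design $E$ to a zero-cost copy $E'$ (your ``constrained optimum equals unconstrained optimum'' step, via Corollary~\ref{cor:t_testCheapExpensive}(a)), then invokes the closed-form cutoff-rule loss (Lemmas~\ref{lem:threshold} and~\ref{lem:taylor_large}) to get strict monotonicity of the optimized loss in the posterior variance reduction together with the trivial bound that the constrained optimum for $O$ is at least its unconstrained optimum. The monotonicity step you flag as the main obstacle is exactly what Lemma~\ref{lem:taylor_large} supplies by the explicit computation you anticipate, so no gap remains.
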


The proof is in Appendix \ref{proof:prop:cheapPreferred}. 
This result implies that it suffices to compare the mean-squared error of two cheap studies to identify which one is planner-preferred.  

When the experiment $E$ is expensive, the situation is more delicate.
Implementing $E$ requires committing to publish more results, which may be costly for a planner.
When the publication or attention costs $c_a$ are large enough,
the costs of publishing more results outweighs the benefits of a more precise design.

\begin{prop}
\label{prop:cpPreferred}
In Setting~\ref{set:1}, if $E$ and $O$ are expensive, then
there exists a threshold $c_a^\star(E,O,\eta) > 0 $ such that $E$ is planner-preferred to $O$ if and only if $c_a < c_a^\star(E,O,\eta)$,~where 
\[c_a^\star(E,O,\eta) = \frac{\postvarred{E} - \postvarred{O}}{C_E - C_O} - \eta^2 \frac{O\big((1-C_E)^3\big) - O\big((1-C_O)^3\big)}{C_E - C_O}.\] 
\end{prop}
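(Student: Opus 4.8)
The plan is to compute $\mathcal{L}^\star_\Delta$ explicitly for an expensive, unbiased design $\Delta$, and then subtract the two expressions for $\Delta = E$ and $\Delta = O$. By Corollary~\ref{cor:t_testCheapExpensive}(b), for an expensive design the constrained optimal rule publishes exactly when $|X(\Delta)| \ge t^\star_\Delta$ with $t^\star_\Delta = |\Phi^{-1}(C_\Delta/2)|\sqrt{S_\Delta^2 + \eta^2}$, so the \emph{ex ante} publication probability is exactly $C_\Delta$. The loss $\mathcal{L}^\star_\Delta = \mathbb{E}[(\theta - a^\star_{p^\star_\Delta}(X,\Delta))^2] - c_a C_\Delta$ then splits into the event of publication (where the audience plays the posterior mean $\frac{X\eta^2}{S_\Delta^2+\eta^2}$) and non-publication (where it plays $0$, incurring squared error $\theta^2$ in expectation $\eta^2$). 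Using the law of total variance / standard Gaussian identities, on the publication event the conditional mean-squared error of the Bayes action equals the posterior variance $\frac{S_\Delta^2\eta^2}{S_\Delta^2+\eta^2} = \eta^2 - \postvarred{\Delta}$, and on the non-publication event it is $\eta^2$ minus the (partial) variance reduction attributable to learning $X$ lies in the acceptance region. Carefully bookkeeping these terms gives $\mathcal{L}^\star_\Delta = \eta^2 - C_\Delta\,\postvarred{\Delta} - c_a C_\Delta + R_\Delta$, where $R_\Delta$ collects the contribution of the truncated-normal second moment on $\{|X| < t^\star_\Delta\}$; the key quantitative claim to verify is that $R_\Delta = \eta^2 \cdot O((1-C_\Delta)^3)$, i.e. $R_\Delta$ is cubically small as $C_\Delta \to 1$ (when almost everything is published, the acceptance region is tiny and contributes negligibly).

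Given this expansion, $E$ is planner-preferred to $O$ iff $\mathcal{L}^\star_E - \mathcal{L}^\star_O < 0$, i.e.
\[
-C_E\,\postvarred{E} - c_a C_E + C_O\,\postvarred{O} + c_a C_O + R_E - R_O < 0.
\]
Since $C_E > C_O$, dividing by $(C_E - C_O) > 0$ preserves the inequality and isolates $c_a$: rearranging gives $c_a < \dfrac{\postvarred{E} - \postvarred{O}}{C_E - C_O} - \eta^2 \dfrac{O((1-C_E)^3) - O((1-C_O)^3)}{C_E - C_O}$, which is exactly $c_a^\star(E,O,\eta)$. Wait — one must check the sign of the coefficient multiplying $c_a$ and of $\postvarred{E} - \postvarred{O}$: since $S_E^2 < S_O^2$ we have $\postvarred{E} > \postvarred{O}$, so the leading term of $c_a^\star$ is strictly positive; combined with the remainder being lower-order, this yields $c_a^\star(E,O,\eta) > 0$ (at least for the relevant parameter range, and one should note the $O(\cdot)$ terms are to be read as the honest remainders from the single-design computation, with explicit bounds from Lemma~\ref{lem:worthwhileNeccSuffUgly} or its analogue). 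The ``if and only if'' and the monotonicity in $c_a$ are then automatic because $\mathcal{L}^\star_\Delta$ is affine and strictly decreasing in $c_a$ with slope $-C_\Delta$, so $\mathcal{L}^\star_E - \mathcal{L}^\star_O$ is affine in $c_a$ with strictly negative slope $-(C_E - C_O)$, crossing zero exactly once at $c_a^\star$.

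The main obstacle is the precise evaluation and bounding of the truncated-moment remainder $R_\Delta$, i.e.\ showing $\mathbb{E}[(\theta - a^\star)^2 \mathbf{1}\{|X(\Delta)| < t^\star_\Delta\}]$ minus its ``no-information'' counterpart is $\eta^2 O((1-C_\Delta)^3)$. This requires expanding the standard-normal quantile $\Phi^{-1}(C_\Delta/2)$ and the truncated second moment of a normal near the center of its support as $C_\Delta \to 1$; the acceptance half-width behaves like a constant times $(1 - C_\Delta)$, the truncated mass like $(1-C_\Delta)$, and the relevant second-moment discrepancy like its cube. A secondary (bookkeeping) subtlety is making sure the variance decomposition on the non-publication event is handled correctly: because the rule is symmetric, the posterior mean given non-publication is $0$, but the variance reduction from conditioning on the sign/magnitude event must be tracked so that the coefficient of $\postvarred{\Delta}$ comes out to exactly $C_\Delta$ rather than something more complicated. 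Once $R_\Delta$ is controlled, everything else is the linear algebra sketched above. I would present the single-design loss computation as a self-contained lemma (or cite it from the proof of Proposition~\ref{cor:worthwhileNeccSuff}, which evidently already performs essentially this expansion), then derive Proposition~\ref{prop:cpPreferred} in a few lines.
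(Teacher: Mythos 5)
Your overall strategy---compute the exact loss of the constrained optimal threshold rule for each expensive design, subtract, and solve the resulting affine-in-$c_a$ inequality---is the same as the paper's (which packages the exact loss as Lemma~\ref{lem:taylor_large} and the remainder bound as Lemma~\ref{lem:taylorUpsilon}). But your single-design expansion is grouped incorrectly, and the error is not merely presentational. You write $\mathcal{L}^\star_\Delta = \eta^2 - C_\Delta\postvarred{\Delta} - c_aC_\Delta + R_\Delta$ with $R_\Delta = \eta^2\,O\big((1-C_\Delta)^3\big)$. The correct exact formula (Lemma~\ref{lem:taylor_large}) is $\mathcal{L}^\star_\Delta = \eta^2 + c_aC_\Delta - \Upsilon(C_\Delta)\postvarred{\Delta}$ with $\Upsilon(C_\Delta)=2z\phi(z)+C_\Delta$ and $z=\Phi^{-1}(1-C_\Delta/2)$: because the threshold rule publishes the \emph{tail} realizations of $X$, which move the audience's action disproportionately, the information benefit is $\Upsilon(C_\Delta)\postvarred{\Delta}$, and $\Upsilon(C_\Delta)=1-O\big((1-C_\Delta)^3\big)$ is close to $1$, not to $C_\Delta$. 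In your bookkeeping, the term you call $R_\Delta$ equals $-\big(\Upsilon(C_\Delta)-C_\Delta\big)\postvarred{\Delta}=-2z\phi(z)\postvarred{\Delta}$, which is of order $(1-C_\Delta)\postvarred{\Delta}$ near $C_\Delta=1$ and of order one for moderate $C_\Delta$---emphatically not cubic. The cubic bound only emerges after regrouping $-C_\Delta\postvarred{\Delta}+R_\Delta = -\postvarred{\Delta}+\big(1-\Upsilon(C_\Delta)\big)\postvarred{\Delta}$, i.e.\ after pulling out the \emph{full} posterior variance reduction as the leading term. Without that regrouping, your difference inequality carries $C_E\postvarred{E}-C_O\postvarred{O}$ where the proposition requires $\postvarred{E}-\postvarred{O}$, so the ``rearranging gives exactly $c_a^\star$'' step does not follow from the inequality you display.

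A second, compounding problem is the sign of the attention-cost term. The operative convention in all of the paper's proofs (Lemmas~\ref{lem:loss} and~\ref{lem:threshold}) is that publication \emph{adds} $c_a p$ to the loss, so the difference contains $+(C_E-C_O)c_a$ and dividing by $C_E-C_O>0$ yields an upper bound on $c_a$, i.e.\ ``$E$ is preferred iff $c_a<c_a^\star$.'' With the $-c_aC_\Delta$ you carry into your displayed inequality, solving it would give $c_a>(\cdot)$, the opposite direction. Both issues are fixable---replace your expansion by the exact identity from Lemma~\ref{lem:taylor_large} and invoke Lemma~\ref{lem:taylorUpsilon}(ii) for $1-\Upsilon(t)<(1-t)^3$---but as written the proposal's central quantitative claim is false and the final algebra does not produce the stated threshold.
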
 

The proof is in Appendix \ref{proof:prop:cpPreferred}. 
Proposition \ref{prop:cpPreferred} shows that to choose between $E$ and $O$, assuming $E$ and $O$ have high research costs ($(1 - C_O)^3 \approx 0$), it suffices to compare
\begin{equation} \label{eqn:approximation_c}
\postvarred{E} - \postvarred{O} \text{ versus }(C_E - C_O) c_a
\end{equation}
up to a small remainder.
That is, we must compare the difference in the posterior \pagebreak variance reductions to the difference in research costs, adjusted by the attention cost $c_a$. A larger $c_a$ favors less costly designs. 
The following theorem formalizes these intuitions
and sharpens them to apply even for smaller research costs.

\begin{theorem}[Comparing two designs] \label{prop:main_comparison}
In Setting~\ref{set:1}, suppose that $E$ and $O$ are expensive.
\begin{enumerate}[label=(\alph*)]
\item If $\postvarred{E} - \postvarred{O} \ge \big(1 - \frac{C_O}{C_E}\big)c_a,$
then $E$ is planner-preferred to $O$.
\item If $\postvarred{E} - \postvarred{O} \le \big(C_E - \frac{1 + 2C_O}{3} \big) c_a$, then $O$ is planner-preferred~to~$E$.
\end{enumerate}
If instead $O$ is cheap, then (a) and (b) hold with $C_O$ replaced by $P(|X(O)| \ge \gamma_O^\star)$. 
\end{theorem}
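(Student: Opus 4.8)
The plan is to collapse the comparison to a one-variable inequality by first deriving a closed form for the optimal loss $\mathcal{L}^\star_\Delta$ that covers cheap and expensive unbiased designs uniformly. By Proposition~\ref{prop:t_test} and Corollary~\ref{cor:t_testCheapExpensive}, the constrained optimal rule for $\Delta$ is the threshold rule $\mathbf{1}\{|X|\ge t^\star_\Delta\}$, which publishes with \emph{ex ante} probability $q_\Delta:=C_\Delta$ when $\Delta$ is expensive and $q_\Delta:=\mathbb{P}(|X(\Delta)|\ge\gamma^\star_\Delta)$ when $\Delta$ is cheap. First I would standardize $X(\Delta)/\sqrt{S_\Delta^2+\eta^2}\sim\mathcal N(0,1)$ and, using that $a^\star_p(X,\Delta)$ equals the posterior mean when published and $0$ otherwise, apply the law of total variance to write $\mathbb{E}[(\theta-a^\star_{p^\star_\Delta})^2]=\eta^2-\mathbb{E}[(a^\star_{p^\star_\Delta})^2\mathbf{1}\{\mathrm{pub}\}]$ and $\mathbb{E}[(a^\star_{p^\star_\Delta})^2\mathbf{1}\{\mathrm{pub}\}]=\postvarred{\Delta}\,g(q_\Delta)$, where $\phi$ denotes the standard normal density, $z(q):=|\Phi^{-1}(q/2)|$ (so $\mathbb{P}(|Z|\ge z(q))=q$), and $g(q):=q+2z(q)\phi(z(q))=\mathbb{E}[Z^2\mathbf{1}\{|Z|\ge z(q)\}]$ by integration by parts. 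Hence, in both the cheap and expensive cases,
\[
\mathcal{L}^\star_\Delta \;=\; \eta^2 - \postvarred{\Delta}\,g(q_\Delta) + c_a\,q_\Delta ,
\]
the attention term entering so that publishing a marginal result changes the loss by $c_a-(a^\star)^2$ (consistent with the cutoff $\gamma^\star_\Delta$ and with Propositions~\ref{prop:cheapWorthwhile}--\ref{cor:worthwhileNeccSuff}); this single formula is exactly what lets the statement's last sentence replace $C_O$ by $\mathbb{P}(|X(O)|\ge\gamma^\star_O)$. Consequently $E$ is planner-preferred to $O$ if and only if $\postvarred{E}g(C_E)-\postvarred{O}g(q_O)>c_a(C_E-q_O)$.

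Next I would record the elementary facts used throughout: $g$ is $C^1$ on $(0,1]$ with $g'(q)=z(q)^2\ge0$, so (as $z(\cdot)$ is decreasing) $g$ is increasing and concave, with $g(1)=1$ and $q\le g(q)\le1$; moreover (i) $q_O<C_E$ always (for expensive $O$ this is $C_O<C_E$; for cheap $O$, $q_O=2(1-\Phi(\sqrt{c_a/\postvarred{O}}))<\mathbb{P}(|X(E)|\ge\gamma^\star_E)\le C_E$, since $\postvarred{O}<\postvarred{E}$ and $E$ is expensive), and (ii) $\postvarred{O}\,z(q_O)^2\le c_a$ (because $\mathbb{P}(|X(O)|\ge\gamma^\star_O)=2(1-\Phi(\sqrt{c_a/\postvarred{O}}))$, so ``$O$ expensive/cheap'' forces $z(q_O)\le\sqrt{c_a/\postvarred{O}}$). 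Part (a) is then quick: assuming $\postvarred{E}-\postvarred{O}\ge c_a(C_E-q_O)/C_E$ and multiplying by $g(C_E)>0$,
\[
\postvarred{E}g(C_E)-\postvarred{O}g(q_O)-c_a(C_E-q_O)\;\ge\;\postvarred{O}\big(g(C_E)-g(q_O)\big)+c_a(C_E-q_O)\Big(\tfrac{g(C_E)}{C_E}-1\Big),
\]
whose first summand is positive ($\postvarred{O}>0$, $g$ strictly increasing, $C_E>q_O$) and second nonnegative ($g(C_E)\ge C_E$), giving $E$ planner-preferred; the cheap-$O$ version is verbatim with $q_O=\mathbb{P}(|X(O)|\ge\gamma^\star_O)$.

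Part (b) is the crux. We need $\postvarred{E}g(C_E)-\postvarred{O}g(q_O)<c_a(C_E-q_O)$ given $\postvarred{E}-\postvarred{O}\le c_a\bigl(C_E-\tfrac{1+2q_O}{3}\bigr)$ (vacuous unless $C_E\ge\tfrac{1+2q_O}{3}$, since otherwise the hypothesis contradicts $\postvarred{E}>\postvarred{O}$). Writing $\postvarred{E}g(C_E)-\postvarred{O}g(q_O)=\postvarred{O}(g(C_E)-g(q_O))+(\postvarred{E}-\postvarred{O})g(C_E)$, bounding $g(C_E)\le1$, and splitting $C_E-q_O=\bigl(C_E-\tfrac{1+2q_O}{3}\bigr)+\tfrac{1-q_O}{3}$, the claim reduces to $\postvarred{O}(g(C_E)-g(q_O))<\tfrac{c_a}{3}(1-q_O)$; since $g(C_E)\le g(1)=1$, it suffices (using fact (ii)) to prove the one-variable inequality $1-g(q)\le\tfrac13 z(q)^2(1-q)$ for $q\in(0,1]$. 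In the variable $w=z(q)$ (so $1-q=2\Phi(w)-1$) this is $R(w):=w\phi(w)-\bigl(\Phi(w)-\tfrac12\bigr)\bigl(1-\tfrac13 w^2\bigr)\ge0$ for $w\ge0$, which follows because $R(0)=0$ and $R'(w)=\tfrac{2w}{3}\bigl((\Phi(w)-\tfrac12)-w\phi(w)\bigr)>0$ for $w>0$ (the factor in brackets vanishes at $0$ with derivative $w^2\phi(w)\ge0$, hence is positive for $w>0$); strictness then propagates back in every non-vacuous case ($C_E<1\Rightarrow g(C_E)<1$; $C_E=1\Rightarrow q_O<1\Rightarrow R(z(q_O))>0$). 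The hard part will be executing Step~1 carefully enough to land on the \emph{single} closed form above for both the cheap and expensive cases, and then isolating the Gaussian inequality $R(w)\ge0$: it is this inequality — sharp to order $w^3$, reflecting $g(q)=1-O((1-q)^3)$ near $q=1$ — that pins down the constant $\tfrac{1+2q_O}{3}$, just as the $O((1-C)^3)$ remainders surface in Propositions~\ref{cor:worthwhileNeccSuff}--\ref{prop:cpPreferred}.
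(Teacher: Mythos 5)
Your proposal is correct and follows essentially the same route as the paper: you rederive the closed form $\mathcal{L}^\star_\Delta=\eta^2+c_a q_\Delta-\operatorname{PostVarRed}(\Delta)\,g(q_\Delta)$ (the paper's Lemma~\ref{lem:taylor_large}, with $g=\Upsilon$), prove part (a) from $g$ increasing and $g(q)\ge q$, prove part (b) from $g\le 1$ together with the Gaussian inequality $1-g(q)\le\tfrac13 z(q)^2(1-q)$ and $\operatorname{PostVarRed}(O)\,z(q_O)^2\le c_a$ (the paper's Lemma~\ref{lem:taylorUpsilon}\ref{part:upsilonUglyLowerBound}, whose proof likewise rests on $\Phi(w)-\tfrac12>w\phi(w)$), and handle cheap $O$ by substituting the publication probability for the cost exactly as the paper does via the auxiliary design $O'$.
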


The proof is in Appendix \ref{proof:prop:main_comparison}. 
Intuitively the comparison between two studies must depend on the posterior variance reduction of each study (which itself depends their mean-squared error) and the costs of each study.

\begin{figure}[b!]
    \centering
    \includegraphics[scale=0.5]{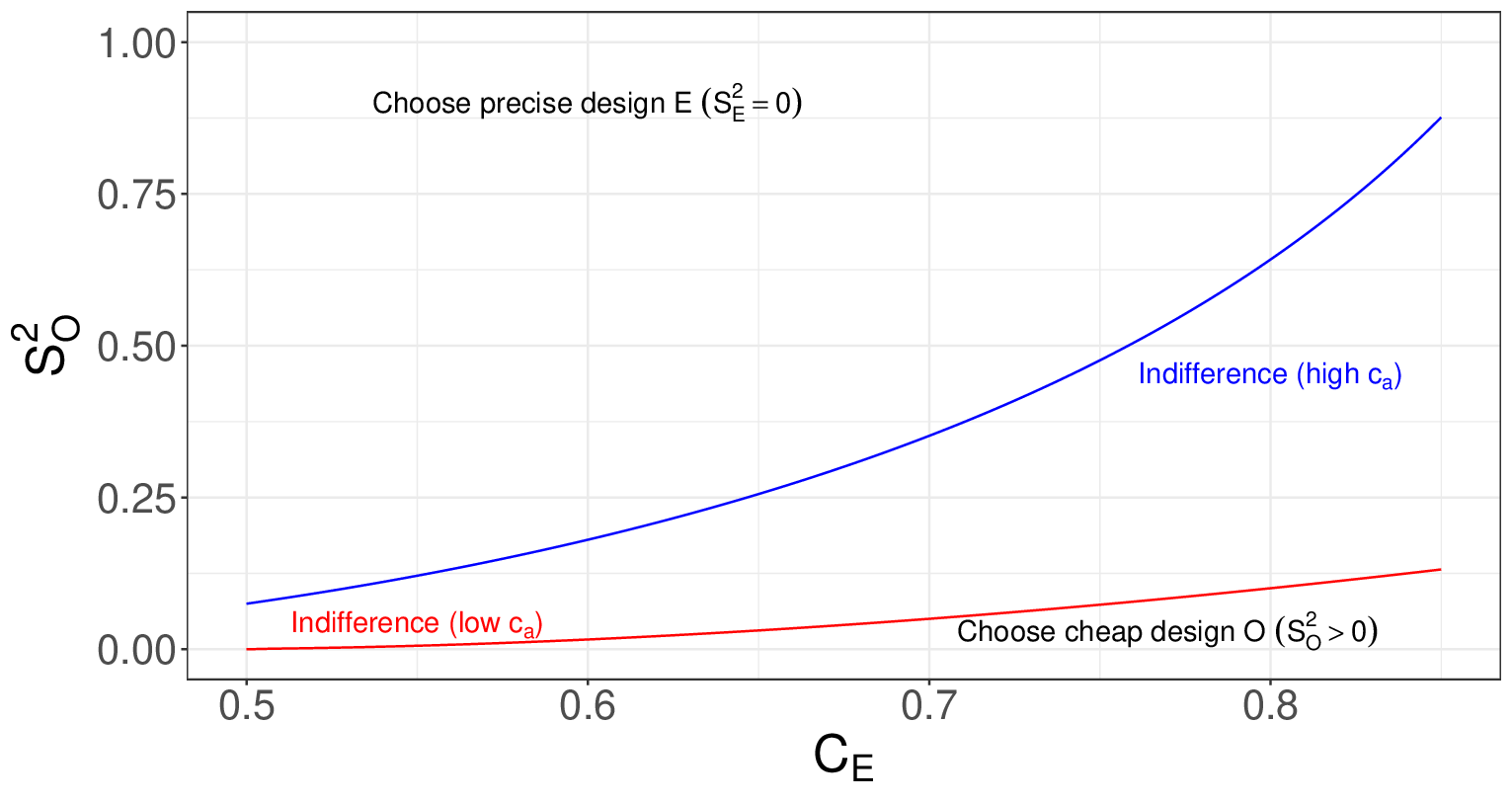}
    \caption{Comparison of two designs.  The figure depicts which of two designs is planner-preferred in Setting~\ref{set:1} for ranges of variances $S_O^2$ for the less precise design and research costs $C_E$ for the more precise design.  The less precise design is assumed to be cheap, the more precise design is assumed to have zero variance ($S_E = 0$), and we normalize $\eta^2 = 1$.  Planner preference is computed using exact expression for $\mathcal{L}_\Delta^\star$ from Lemma~\ref{lem:taylor_large} in Appendix~\ref{app:preliminary}.
    The more precise design $E$ is more likely to be preferred for higher $S_O^2$ and lower $C_E$.
    The red (resp.~blue) curve depicts design parameters for which the planner is indifferent between the two designs for a low attention cost $c_a = 0.5$ (resp.~high attention cost $c_a = 1$). 
    } 
    \label{fig:fig2}
\end{figure}

As the cost of attention $c_a$ increases, the planner's preference shifts from a more accurate design 
to a less accurate design with a smaller cost. This is because, for costly studies, the planner must internalize not only the effect of the mean-squared error on the audience's loss function, but also the research cost associated with the study. With high attention costs (large $c_a$), more costly experiments impose more stringent constraints on the publication rules, making those undesirable for the planner. For example, in this case, medical studies with smaller experiment size may be preferred over more precise experiments when variable costs are sufficiently large. 

Theorem \ref{prop:main_comparison} provides bounds that are not tight to enhance interpretability. However, more involved necessary and sufficient conditions for each design to be planner-preferred can be obtained directly from a result in the Appendix (Lemma~\ref{lem:taylor_large} in Appendix~\ref{app:preliminary}). Using these conditions, 
Figure  \ref{fig:fig2} reports the indifference curves between two experiments with different mean-squared errors and costs. Figure  \ref{fig:fig2}  shows how the planner's preference shifts towards cheaper (and noisier) designs even when the experiment has zero variance.

\begin{rem}[Calibrating parameters] As we illustrate in Section~\ref{Sec:application}, we can calibrate $\eta^2$ as the prior variance of parameters obtained from meta-studies \citep[see, e.g.,][]{rhys2024much,bartovs2023empirical}. The cost $c_a$ can be calibrated to make the cutoff $\gamma^\star_E$ for cheap experiments match the critical value of a $t$-test for a particular level---for example, $\sqrt{c_a} \frac{\eta^2 + 1}{\eta^2} = 1.96$. \qed 
\end{rem} 

\begin{rem}[Internalizing research costs in the planner's objective] A variant of the model considered here is to let the planner internalize the research costs. For instance, one could consider a planner's objective $\mathcal{L}_\Delta^\star + C_\Delta$. Our main analysis continues to hold in this setting, by taking into account that the comparisons in Equation \eqref{eqn:approximation_c} and Theorem \ref{prop:main_comparison} must account for the additional cost component in the planner's loss function. \qed 
\end{rem}

\section{Publication rules under non-verifiable designs} 
\label{sec:p_hacking}

In this section, we turn to settings where researchers may engage in data manipulation and selective reporting.  To this end, we investigate optimal publication rules when researchers choose the research design $\Delta$ using knowledge of the statistics drawn in the experiment. Here, the design $\Delta$, and its corresponding bias $\beta_\Delta$, are known to the researcher but not verifiable by the social planner.
Specifically, we consider the following setting.

\begin{setting} \label{set:model_p_hacking}
The class of designs is $\Delta \in \mathbb{R}$, with $S_\Delta^2 = S^2$ common knowledge, and $\beta_\Delta = \Delta$ known to the researcher.  Writing $X(\Delta) = \theta + \beta_\Delta + \varepsilon$ (where $\varepsilon | \theta \sim \mathcal{N}(0, S^2)$),
the researcher observes $\theta + \varepsilon$ and chooses $\Delta$ to maximizes her realized payoff $v_p(X(\Delta), \Delta)$. Research costs are given by  $C_\Delta = c_m |\beta_\Delta| + C_0$, where $0 < c_m < \infty$ and $C_0 < 1$.  The social planner chooses a (Borel measurable) publication rule $p(X,\Delta) = p(X)$ as a function of $X$ only. 
\end{setting} 

Figure \ref{fig:time2} illustrates the model: the researcher deterministically chooses the bias of the reported statistic. They, however, pay a cost $C_\Delta$ increasing in the bias.
The component $c_m|\beta_\Delta|$ of the cost $C_\Delta$ captures reputational or computational costs associated with the manipulation, assumed to be increasing and linear in the magnitude $|\beta_\Delta|$ of the bias. The component $C_0$ captures a fixed cost.\footnote{It is possible to also incorporate fixed costs of manipulation (e.g., $C(\Delta) = c_m|\beta_\Delta| + c_f 1\{\beta_{\Delta} \not= 0\} + C_0$) to capture costs of introducing any manipulation, or to include nonlinear costs of manipulation, though the specific characterization of the optimal publication rule would then be different.}
The researcher observes $\theta + \varepsilon$, and hence maximizes realized utility conditional on the observed statistics when choosing~$\Delta$.

We think of the researcher's action of deterministically choosing the bias as a stylized description of data manipulation or selective reporting, whereby researchers can choose their research design after learning the results of potential studies $X$. In practice, in the absence of a precise pre-specification, researchers can change the covariates in a regression, winsorize the data in particular ways, or make other design choices functions of the statistics.
These manipulations are all forms of $p$-hacking, and bias results in a way that is difficult or impossible for the planner to verify.%
\footnote{%
For precise experiments (i.e., when $X \approx \theta)$, our model also speaks to a related form of manipulation, which \cite{niederle2025experiments} terms $g$-hacking.   In particular, researchers may choose the specific experimental environment to be one where they expect treatment effects to be non-representatively large using private information about $\theta$---which they may obtain from piloting, theory, or intuition.
For example, researchers can choose the experimental site for a field experiment, or an experimental setting in a lab experiment, in which they expect treatment effects to be large.
These manipulations introduce bias in results analogous to site selection bias,
and are also difficult or impossible for the planner to verify.
}

\begin{figure}[t!]
 \centering
    \begin{tikzpicture}

\coordinate (1) at (-4,3);
\coordinate (2) at (-2,3);
\coordinate (3) at (-2,5);
\coordinate (4) at (-4,5);
\coordinate (5) at ($(1)!.5!(2)$); 
\coordinate (6) at ($(2)!.5!(3)$);
\coordinate (7) at ($(3)!.5!(4)$);
\coordinate (8) at ($(1)!.5!(4)$);
\coordinate (9) at ($(1)!.5!(3)$);

\coordinate (10) at (-4,0);
\coordinate (11) at (-2,0);
\coordinate (12) at (-2,2);
\coordinate (13) at (-4,2);
\coordinate (14) at ($(10)!.5!(11)$); 
\coordinate (15) at ($(11)!.5!(12)$);
\coordinate (16) at ($(12)!.5!(13)$);
\coordinate (17) at ($(10)!.5!(13)$);
\coordinate (18) at ($(10)!.5!(14)$);

\coordinate (21) at (-10,4);
\coordinate (22) at (3.5,4);
\coordinate (23) at (3.5,5);
\coordinate (24) at (-10, 5);

\coordinate (31) at (-10,3.5);
\coordinate (32) at (-7,3.5);
\coordinate (33) at (-7,2.5);
\coordinate (34) at (-10, 2.5);

\coordinate (41) at (-6.5,3.5);
\coordinate (42) at (-3.5,3.5);
\coordinate (43) at (-3.5,2.5);
\coordinate (44) at (-6.5, 2.5);

\coordinate (51) at (-3,3.5);
\coordinate (52) at (0,3.5);
\coordinate (53) at (0,2.5);
\coordinate (54) at (-3, 2.5);

\coordinate (61) at (0.5,3.5);
\coordinate (62) at (3.5,3.5);
\coordinate (63) at (3.5,2.5);
\coordinate (64) at (0.5, 2.5);

\coordinate (71) at (-10,2);
\coordinate (72) at (-8.6,2);
\coordinate (73) at (-8.6,1);
\coordinate (74) at (-10, 1);

\coordinate (81) at (-8.4,2);
\coordinate (82) at (-7,2);
\coordinate (83) at (-7,1);
\coordinate (84) at (-8.4, 1);

\coordinate (91) at (-6.5,2);
\coordinate (92) at (-5.1,2);
\coordinate (93) at (-5.1,1);
\coordinate (94) at (-6.5, 1);

\coordinate (101) at (-4.9,2);
\coordinate (102) at (-3.5,2);
\coordinate (103) at (-3.5,1);
\coordinate (104) at (-4.9, 1);

\coordinate (111) at (-3,2);
\coordinate (112) at (-1.6,2);
\coordinate (113) at (-1.6,1);
\coordinate (114) at (-3, 1);

\coordinate (121) at (-1.4,2);
\coordinate (122) at (0,2);
\coordinate (123) at (0,1);
\coordinate (124) at (-1.4, 1);

\coordinate (131) at (0.5,2);
\coordinate (132) at (1.9,2);
\coordinate (133) at (1.9,1);
\coordinate (134) at (0.5, 1);

\coordinate (141) at (2.1,2);
\coordinate (142) at (3.5,2);
\coordinate (143) at (3.5,1);
\coordinate (144) at (2.1, 1);

\draw[->] (-8,4.3)  -- (1.8,4.3);

  \node[circle] (g) at (-8,4.3) {$|$};
   \node[circle] (g) at (-5,4.3) {$|$};
     \node[circle] (g) at (-2,4.3) {$|$};
     \node[circle] (g) at (0.5,4.3) {$|$};
  \node[circle] (g) at (-8,3) {Statistic};
   \node[circle] (g) at (-5,3) {Manipulation};
    \node[circle] (g) at (-2,3) {Evaluation};
    \node[circle] (g) at (0.5,3) {Audience};
   \node[circle] (g) at (-8,5) {$X' = \theta + \varepsilon$};

    \node[circle] (g) at (-5,5) {$X = X' + \beta_\Delta$};

      \node[circle] (g) at (-2,5) {$p(X)$};
\node[circle] (g) at (0.5,5) {$a^\star(X)$};






    \end{tikzpicture}
\vspace{-24pt}
\caption{Illustration of the variables in the model with unverifiable research designs and researcher private information. First, researchers observe the vector of statistics. They then manipulate the design by introducing a bias into the statistics and maximize their private utility. The social planner does not observe the bias, and evaluates the study based on a publication rule $p(X)$ that only depends on the statistics $X$.} \label{fig:time2}
\end{figure}

As we discuss in Section \ref{sec:setup}, the audience updates their beliefs assuming that $\beta_\Delta = 0$.
Thus, we assume that the audience is unaware of the possibility of data manipulation---i.e., that they take published findings at face value.
For example, in our medical example, the audience may represent doctors or policymakers who are not familiar with (or would need to incur high costs to understand) experimental details.
By contrast, the planner is aware of the possibility of manipulation, and that the audience is unaware of it, and minimizes the audience's loss taking both of these points into account.

We assume that the variance of the residual noise $\varepsilon$ equals $S^2$ for all designs $\Delta$. 
We interpret this assumption as stating that standard errors are verifiable as part of the publication process; hence, we focus on manipulation that introduces unverifiable bias in reported~results.

\subsection{Optimal publication rule under manipulation}

The planner knows $S^2$, cannot observe or verify $\beta_\Delta$, and minimizes expected loss over $(\theta,\varepsilon)$ taking into account the researcher's (endogeneous) incentives to manipulate their results.
Formally, writing
$\mathcal{P}$ for the set of all Borel measurable functions $p(X,\Delta)$ that are constant in $\Delta$ (i.e., do not depend on the design),
an optimal publication rule is defined by 
\begin{equation} \label{eqn:p_star_manipulation}
p^\star \in \argmax_{p \in \mathcal{P}} \mathbb{E}_{\theta,\varepsilon}\left[\mathcal{L}_p(X(\Delta_p^{\star}), \Delta_p^{\star}, \theta)\right] \quad \text{with} \quad \Delta_p^\star \in \argmax_{\Delta} v_p(X(\Delta),\Delta).
\end{equation} 
Here, $\Delta_p^{\star}$ denotes an optimal response of the researcher to the publication rule given $\theta + \varepsilon$. 

The main result of this section shows that the optimal publication rule is in a class of smoothed cutoff rules.
Intuitively, a linearly smoothed cutoff rule is a deterministic publication rule below and above thresholds $X^\star - \frac{1}{k}$ and $X^\star$, respectively; it randomizes the publication chances between these two thresholds, with publication probability increasing linearly the value of the reported statistic $|X|$ with slope $k$.

\begin{defn}
A \emph{linearly smoothed cutoff rule} with cutoff $X^\star$ and slope $k$ is defined by
\[p_{X^\star,k}(X) = \begin{cases}
0 & \text{if } |X| \le X^\star - \frac{1}{k}\\
1-k(X^\star - |X|) & \text{if } X^\star - \frac{1}{k} < |X| < X^\star\\
1 & \text{if } |X| \ge X^\star
\end{cases}.\]
\end{defn}

The special case of slope $k = \infty$ and threshold $X^\star = \gamma^\star = \frac{S^2 + \eta^2}{\eta^2} \sqrt{c_a}$ corresponds to a publication rule for cheap experiments without manipulation (Corollary~\ref{cor:t_testCheapExpensive}).

We next characterize the optimal publication rule in settings with manipulation.

\begin{theorem}[Optimal publication rule under unverifiable designs] \label{thm:optimal_assymetric_info} 
In Setting~\ref{set:model_p_hacking}:
\begin{enumerate}[label=(\alph*)]
\item There exists a cutoff $X^\star \in \left(\gamma^\star,\gamma^\star + \frac{1 - C_0}{c_m}\right)$ such that
the linearly smoothed cutoff rule
$p_{X^\star,c_m}$ is optimal.
\item For each optimal publication rule $p$, there exists $X^\star \in \left(\gamma^\star,\gamma^\star + \frac{1 - C_0}{c_m}\right)$ such that $p(X) = p_{X^\star,c_m}(X)$ (resp.~$p(X) \le C_0$) for almost all $X \ge 0$ with $p_{X^\star,c_m}(X) > C_0$  (resp.~$p_{X^\star,c_m}(X) \le C_0$).
\end{enumerate}
\end{theorem}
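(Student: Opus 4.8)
\textbf{Plan of proof for Theorem~\ref{thm:optimal_assymetric_info}.}

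The plan is to reduce the infinite-dimensional problem \eqref{eqn:p_star_manipulation} to a tractable one-dimensional optimization by first understanding the researcher's best response, then using it to rewrite the planner's objective, and finally performing a variational argument. First I would analyze the researcher's problem: given a publication rule $p$ and an observed $X' = \theta + \varepsilon$, the researcher picks $\Delta = \beta_\Delta$ to maximize $p(X' + \beta_\Delta) - c_m|\beta_\Delta| - C_0$. Because the value of publication is normalized to $1$, the researcher will never pay more than $1 - C_0$ in manipulation cost, so $|\beta_\Delta| \le (1-C_0)/c_m$ in any best response; moreover, the researcher manipulates toward larger $|X|$. This lets me replace $p$ by its ``upper envelope'' $\tilde p(x) = \sup_{|b| \le (1-C_0)/c_m}\{p(x+b) - c_m|b|\}$ when computing the induced publication probability and resulting statistic; a key observation is that $\tilde p$ is automatically $c_m$-Lipschitz from below (i.e., its increments are bounded by $c_m$ times the change in $|X|$ on the region where it exceeds $C_0$), and that replacing $p$ by any rule agreeing with $\tilde p$ weakly improves the planner when the ``pre-manipulation'' mass is accounted for. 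This Lipschitz-type constraint is the shadow of the limited-transfer structure the authors highlight (no transfers, $p \le 1$).

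Second, I would set up the planner's objective as a function of the induced outcome. After substituting the researcher's best response, the planner's loss decomposes into (i) the audience's squared-error loss $\mathbb{E}[(\theta - a^\star(X))^2]$, which, using the Gaussian conjugacy from Section~\ref{sec:setup}, can be written in terms of the posterior variance reduction $\postvarred{\Delta} = \eta^4/(S^2+\eta^2)$ on the published region plus a bias penalty coming from $\beta_\Delta \ne 0$, and (ii) the attention term $-c_a\,\mathbb{E}[p]$. The crucial simplification is that, conditional on the true statistic $X' = \theta+\varepsilon$, publishing with a biased report $\beta_\Delta$ shifts the audience's action by $\frac{\eta^2}{S^2+\eta^2}\beta_\Delta$, so the marginal social cost of a small induced bias is \emph{second order} in $\beta_\Delta$ near $\beta_\Delta = 0$, whereas the marginal private cost is first order ($c_m$). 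This asymmetry is exactly what forces the optimal rule to tolerate a little manipulation near the cutoff rather than deter it entirely, and it pins the slope of the randomization region at precisely $k = c_m$: any steeper and the researcher jumps discontinuously (bad for the planner); any shallower and the planner is wasting publication probability on results that do not move the audience enough.

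Third, I would prove optimality within the class of linearly smoothed cutoff rules by a one-dimensional calculus argument, then prove that (essentially) nothing outside this class can do better. For the first part, parametrize by $X^\star$, write the planner's expected loss $\Psi(X^\star)$ explicitly (an integral against the marginal density of $X$, which is $\mathcal{N}(0, S^2+\eta^2)$, split into the deterministic-publish region, the randomization region where the researcher is indifferent and manipulates up to the cutoff, and the no-publish region), and show $\Psi$ is strictly quasiconvex with interior minimizer; the boundary checks $X^\star \downarrow \gamma^\star$ and $X^\star \uparrow \gamma^\star + (1-C_0)/c_m$ show the minimizer lies strictly inside the claimed interval (at $\gamma^\star$ the planner still has slack to raise the cutoff and shed near-worthless publications; at the right endpoint the researcher is already maximally manipulating and raising the cutoff further only loses good publications). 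For the converse (part (b)), I would argue that any optimal $p$ must, after passing to its upper-envelope $\tilde p$, (i) be $c_m$-Lipschitz on $\{\tilde p > C_0\}$ — else a local smoothing strictly improves; (ii) be monotone in $|X|$ — else swapping publication mass toward larger $|X|$ improves both terms; and (iii) hit $1$ for all large $|X|$ and stay at the ``floor'' value $\le C_0$ (where the researcher would never bother manipulating, since the gain is below the fixed participation slack) for small $|X|$. These three properties force $\tilde p = p_{X^\star, c_m}$ on $\{p_{X^\star,c_m} > C_0\}$, and since $\tilde p$ and $p$ induce the same audience behavior they give the same loss, yielding the stated a.e.\ characterization. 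The main obstacle I anticipate is step three's converse: carefully justifying the exchange/smoothing perturbations requires handling the endogeneity of the researcher's best response (a perturbation of $p$ changes $\Delta_p^\star$), so the variational inequalities must be written in terms of the upper envelope $\tilde p$ and the induced pushforward measure on $X$, and one must verify that the perturbed rules remain admissible ($[0,1]$-valued, measurable) and that the induced best response is well-defined (ties broken in the planner's favor, as assumed). Establishing that the binding incentive constraints are exactly ``the researcher is indifferent over the whole randomization interval'' — the pattern the authors mention — is what makes the Lipschitz slope equal to $c_m$ and is the linchpin of the whole argument.
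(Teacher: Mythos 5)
Your plan gets the right objects on the table---the researcher's indirect utility (your envelope $\tilde p$), the $c_m$-Lipschitz constraint it inherits, the indifference region with slope $c_m$ resolved by planner-preferred tie-breaking, and the interior location of $X^\star$ via sign checks at the two endpoints (which is exactly how the paper's proof of part~(a) concludes, by differentiating the loss in the utility $u^\star$ delivered to type $\gamma^\star$ and checking $\partial\mathcal{E}/\partial u^\star<0$ at $u^\star=0$ and $>0$ at $u^\star=1-C_0$). But the decisive step---showing that no rule outside the parametric class does better---is where your route has a genuine gap. You propose local smoothing/exchange perturbations of $p$ and yourself flag the obstacle (the researcher's best response moves with the perturbation); you do not resolve it. Worse, the premise ``replacing $p$ by any rule agreeing with $\tilde p$ weakly improves the planner'' is false as stated: the hard cutoff rule $1\{|X|\ge t\}$ and the smoothed rule $p_{t,c_m}$ have the \emph{same} envelope, yet induce very different equilibria (under the former, every type in $(t-\tfrac{1}{c_m},t)$ strictly prefers to manipulate up to $t$; under the latter they are indifferent and, by the tie-breaking convention, do not manipulate below $\gamma^\star$). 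So the envelope alone does not determine the planner's loss, and any argument phrased purely in terms of $\tilde p$ and the pushforward of $X$ cannot close the converse.

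The paper closes this step by a different device that avoids variational calculus entirely. Every publication rule delivers \emph{some} indirect utility $u^\star\in[0,1-C_0]$ to the threshold type $\gamma^\star$. Within the class of rules delivering a fixed $u^\star$, incentive compatibility to and from type $\gamma^\star$ bounds every other type's utility ($U(Y)\le u^\star+c_m(Y-\gamma^\star)$ for $Y>\gamma^\star$ since type $\gamma^\star$ can mimic $Y$, and $U(Y)\ge u^\star+c_m(Y-\gamma^\star)$ for $Y<\gamma^\star$ since type $Y$ can mimic $\gamma^\star$). Lemma~\ref{lem:asymmetrichelp} shows the conditional loss $\mathcal{L}^\star(Y,v)$ is decreasing in the delivered utility $v$ for $|Y|>\gamma^\star$ and increasing for $|Y|<\gamma^\star$, so the optimal utility profile saturates these IC bounds on both sides; a pointwise minimization over $(p,\beta_\Delta)$ subject to delivering that utility (Lemma~\ref{lem:asymmetrichelplinearsmooth}) then yields the linearly smoothed cutoff rule type-by-type, with uniqueness of the optimizer giving part~(b). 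This is the precise sense in which ``the binding incentive constraints are to and from type $\gamma^\star$,'' and it is the ingredient your sketch is missing: without it, your step three is a program rather than a proof. If you want to salvage your route, you would need to replace the envelope-based perturbations with this type-conditional argument, or else rigorously justify each perturbation on the induced equilibrium outcome rather than on $p$ itself.
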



The proof is in Appendix \ref{proof:thm:1}. As publication probabilities are between 0 and 1, the mechanism design problem is effectively one with limited transfers.
As a result, standard techniques to eliminate transfers from the planning problem \cite{mirrlees1971exploration} and \cite{myerson1981optimal} do not apply,
and we need to deal directly with both publication probabilities and equilibrium manipulation.
To solve the mechanism design problem, we identify the precise pattern of binding incentive constraints, which are upward incentive constraints to and from type $\gamma^\star$.
Constraining the equilibrium utility level for that type to be $u$, we show that the optimal publication rule is a linearly smooth cutoff rule.
We then optimize over $u$ to bound the optimal cutoff $X^\star$.
Although the optimal cutoff $X^\star$ does not admit a simple closed-form expression, it can be computed numerically---as we show in Figure \ref{fig:manipulation}.

\subsection{Interpretation and implications for published findings}

To provide intuition for the structure of the optimal publication rule in Theorem \ref{thm:optimal_assymetric_info}, we illustrate how the possibility of manipulation affects the optimal publication rule.
For ease of exposition, we abstract from fixed research costs in our discussion (i.e., take $C_0 = 0$).
Our formal results all apply to the case of general $C_0$,
and all proofs are in Appendix~\ref{proof:prop:non_surprising}.

Suppose first that the social planner ignored the possibility of manipulation, and set a cutoff rule for a cheap experiment as in Corollary~\ref{cor:t_testCheapExpensive}. Then we would observe bunching around the publication cutoff $\gamma^\star$, as researchers with $|\theta + \varepsilon| \in \big(\gamma^\star - \frac{1}{c_m}, \gamma^\star\big)$ would introduce a bias to publish. Researchers with $|\theta + \varepsilon| < \gamma^\star - \frac{1}{c_m}$ would find it unprofitable to introduce any bias (as the cost would not compensate the benefits) and therefore would not publish.
The first line of Table~\ref{tab:sequence} and the first two panels of Figure~\ref{fig:manipulation} summarize this discussion.

\begin{table}[p!] \centering
\centering 
\begin{tabular}{@{\extracolsep{1pt}} ccccccc} 

Publication rule & Testable observation & Published results & Manipulation
\\ \hline 
\hline \\[-1.8ex] 
Optimal cutoff rule  & Large bunching  & Only results  & Large \\ ignoring manipulation & & with $|X| \ge \gamma^\star$ & \\ \hline
Add randomization  & No bunching & Many results  & None \\ below cutoff & & with $|X| < \gamma^\star$ &  \\ \hline
Optimal rule & Some bunching  & Some results  & Some \\ (Randomize + raise cutoff) & & with $|X| < \gamma^\star$  &
\end{tabular}
\caption[Caption for LOF]{Comparisons between three different publication rules. The first row corresponds to a publication rule that chooses the optimal cutoff $\gamma^\star$ assuming no manipulation. In this case, we observe large bunching at the cutoff. The second row corresponds to a publication rule that removes manipulation by introducing randomization below the cutoff $\gamma^\star$. This rule is suboptimal as too many results that do not merit attention get published. The last row corresponds to the optimal publication rule, which randomizes using a higher cutoff $X^\star > \gamma^\star$.
}
  \label{tab:sequence} 
\end{table}

\begin{figure}[p!]
\centering
\hspace{-21pt}
\includegraphics[scale = 0.6]{./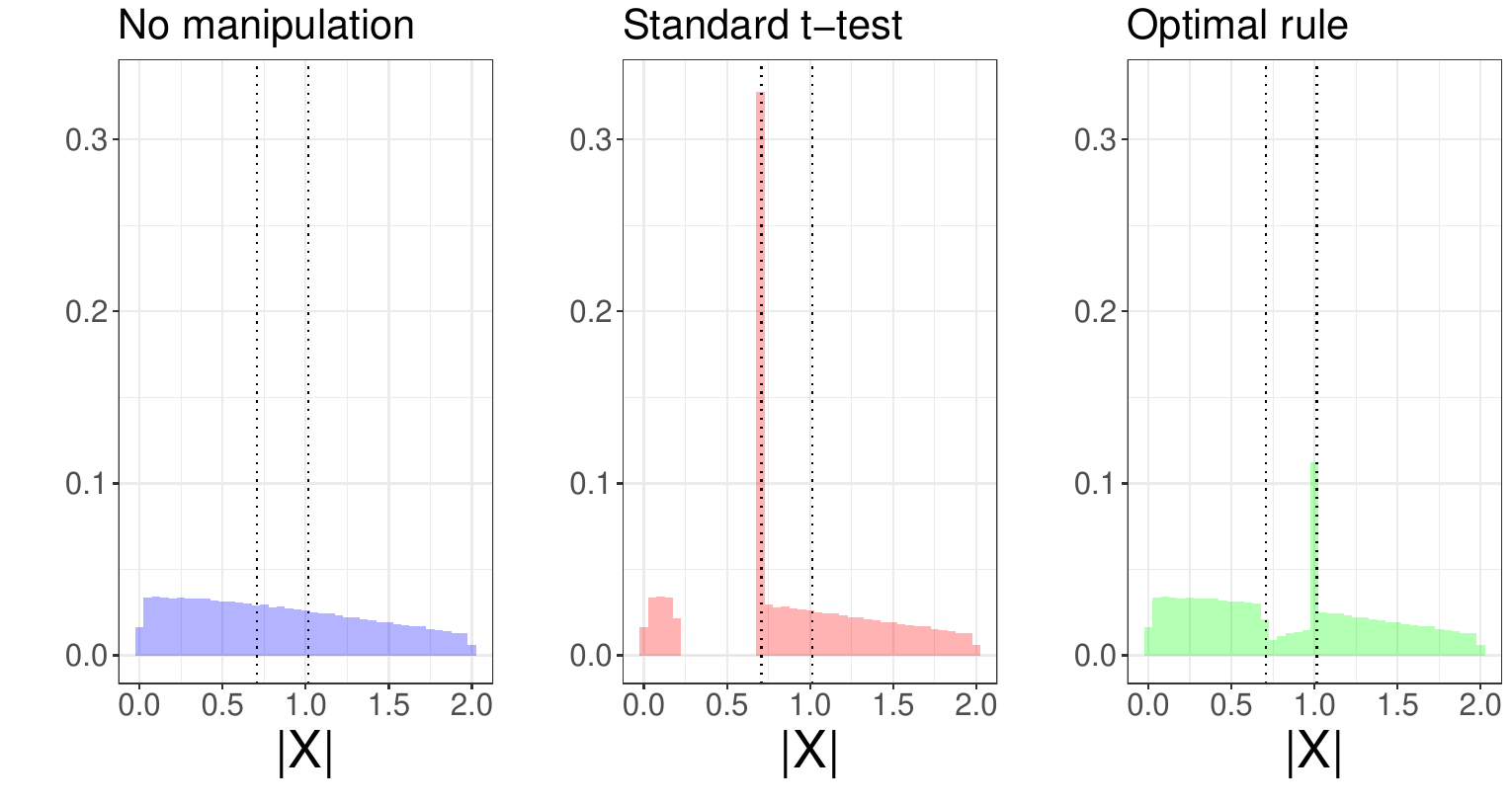}
\caption{
Distribution of $|X|$ under different publication regimes. In contexts where no manipulation is possible (left panel), we see no bunching on the distribution of $|X|$. When instead we consider a standard $t$-test rule that it optimal without manipulation, we observe  a large bunching around the critical cutoff $\gamma_E^\star$ (center panel). With the optimal publication rule under manipulation (right panel), the critical threshold for which findings are published with probability 1 is higher, whereas results in the neighborhood of $X^\star$ are published with positive probability. We observe some bunching at $X^\star$, though less extreme than the bunching at $\gamma^\star$ under the standard $t$-test rule. 
We use parameter values $c_m = 2$, $\eta^2 = 2$, $S^2 = 0$ and $c_a = 0.5$. The first dotted line corresponds to $t$-test cutoff $\gamma^\star$, and the second dotted line corresponds to the cutoff $X^\star$ for the optimal rule.} \label{fig:manipulation}
\end{figure}

Next, suppose that the planner introduces randomization in the publication rule whenever $|X| \in \big(\gamma^\star - \frac{1}{c_m}, \gamma^\star\big)$ as in a linearly smoothed cutoff rule with cutoff $\gamma^\star$. This randomization makes the researcher indifferent between manipulating and not manipulating the data, at the cost of publishing some results below $\gamma^\star$, which do not move the audience's action enough to justify incurring the attention cost.
The second line of Table~\ref{tab:sequence} summarizes this discussion.

More generally, the optimal publication rule randomizes publication for some unmanipulated results below $\gamma^\star$. This is in stark contrast with the case without manipulation.  

\begin{prop}[Some results that do not merit attention are published despite not being manipulated]
\label{prop:non_surprisingPublish}
In Setting~\ref{set:model_p_hacking},
consider any optimal publication rule $p^\star$.
For some types $|\theta + \varepsilon| < \gamma^\star$, we have $p^\star(X(\Delta^\star_{p^\star})) > C_0$ but $\beta_{\Delta^\star_{p^\star}} = 0$.
\end{prop}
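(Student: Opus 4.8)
The plan is to leverage Theorem~\ref{thm:optimal_assymetric_info}, which already tells us that every optimal publication rule agrees almost everywhere (on the region where $p_{X^\star,c_m} > C_0$) with a linearly smoothed cutoff rule $p_{X^\star,c_m}$ for some $X^\star \in (\gamma^\star, \gamma^\star + \tfrac{1-C_0}{c_m})$. So the task reduces to a statement about the researcher's best response to such a rule: I want to exhibit a positive-measure set of realized types $|\theta+\varepsilon|$ strictly below $\gamma^\star$ for which the optimal manipulation is $\beta_\Delta = 0$, yet the published (hence resulting) statistic still has publication probability exceeding $C_0$.

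First I would pin down the researcher's best response to $p_{X^\star,c_m}$ given a realized pre-manipulation statistic $X' = \theta + \varepsilon$ with $x := |X'|$. Since the cost of bias is $c_m|\beta_\Delta| + C_0$ and the marginal publication gain from increasing $|X|$ in the smoothing region is exactly the slope $c_m$, the researcher is \emph{indifferent} over all choices of bias that keep $|X|$ inside the smoothing interval $[X^\star - \tfrac{1}{c_m}, X^\star]$ (moving toward the cutoff costs $c_m$ per unit and gains $c_m$ per unit in probability). Outside that interval, moving toward the cutoff strictly helps until one reaches it, and moving past $X^\star$ is wasteful. Hence for $x \in [X^\star - \tfrac1{c_m}, X^\star]$ the researcher optimally reports $X$ anywhere in $[X^\star-\tfrac1{c_m}, X^\star] \cap [\,x - \text{(affordable bias)},\, x+\dots]$ — in particular $\beta_\Delta = 0$ (report $X = X'$) is among the optimal responses, and by the tie-breaking convention in Section~\ref{sec:setup} the planner-preferred action among these is selected. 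The key point: for these types the researcher does not need to manipulate because randomized publication already delivers the same payoff.

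The crux is then the inequality $X^\star - \tfrac{1}{c_m} < \gamma^\star$, which guarantees that the smoothing interval dips strictly below $\gamma^\star$, so the set of realized types $x \in (X^\star - \tfrac1{c_m},\, \gamma^\star)$ is a nonempty open interval (positive Lebesgue measure, hence positive probability under the Gaussian law of $\theta+\varepsilon$). For each such type the optimal response includes $\beta_\Delta = 0$, the realized $X$ satisfies $|X| = x < \gamma^\star$, and the publication probability $p^\star(X) = p_{X^\star,c_m}(x) = 1 - c_m(X^\star - x) > 0$. One still has to check $p_{X^\star,c_m}(x) > C_0$: this holds on the subinterval where $1 - c_m(X^\star - x) > C_0$, i.e. $x > X^\star - \tfrac{1-C_0}{c_m}$; since the upper bound on $X^\star$ from Theorem~\ref{thm:optimal_assymetric_info}(a) is $X^\star < \gamma^\star + \tfrac{1-C_0}{c_m}$, we get $X^\star - \tfrac{1-C_0}{c_m} < \gamma^\star$, so $(X^\star - \tfrac{1-C_0}{c_m},\, \gamma^\star)$ is again a nonempty open interval on which both $|X| < \gamma^\star$ and $p^\star(X) > C_0$ hold with $\beta_\Delta = 0$. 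Finally I would invoke Theorem~\ref{thm:optimal_assymetric_info}(b) to transfer the conclusion from the canonical representative $p_{X^\star,c_m}$ to an arbitrary optimal $p^\star$: the two agree for almost all such $X$, so the best-response analysis is unaffected on a still-positive-measure set.

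The main obstacle I anticipate is handling the researcher's \emph{indifference} carefully — because many biases are simultaneously optimal, the statement ``$\beta_\Delta = 0$'' must be read as ``$0$ is among the optimal responses and is the one selected by the planner-favorable tie-break,'' and I need to confirm that reporting $X = X'$ is indeed (weakly) planner-optimal among the researcher-optimal responses for those types (it is, since any report in the smoothing region with the same realized $|X|$ induces the same audience action and publication probability, and reporting truthfully never makes things worse for the planner). A secondary, more technical, point is being precise that the event ``$|\theta+\varepsilon| \in (X^\star - \tfrac{1-C_0}{c_m}, \gamma^\star)$'' has positive probability under $\theta \sim \mathcal N(0,\eta^2)$, $\varepsilon \sim \mathcal N(0,S^2)$ — immediate since $\theta + \varepsilon$ has a density positive on all of $\mathbb{R}$ and the interval is nondegenerate by the strict inequalities $\gamma^\star < X^\star < \gamma^\star + \tfrac{1-C_0}{c_m}$ from Theorem~\ref{thm:optimal_assymetric_info}.
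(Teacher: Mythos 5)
Your proposal is correct and follows essentially the same route as the paper: reduce to the canonical linearly smoothed cutoff rule via Theorem~\ref{thm:optimal_assymetric_info}(b), observe that types in $\bigl(X^\star - \tfrac{1-C_0}{c_m}, \gamma^\star\bigr)$ are indifferent over all biases in $[0, X^\star - |\theta+\varepsilon|]$, and use the planner-preferred tie-break (formalized in the paper by Lemma~\ref{lem:asymmetrichelp}\ref{part:asymmetrichelpunique}, which shows the loss-minimizing choice for $|\type|<\gamma^\star$ is $\tilde p = v$ and $\beta_\Delta = 0$) together with the strict inequality $X^\star < \gamma^\star + \tfrac{1-C_0}{c_m}$ to obtain a positive-measure set of such types. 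The only soft spot is your informal justification that $\beta_\Delta=0$ is the planner-optimal selection from the indifference set, but the reasoning you sketch is exactly what the paper's lemma makes rigorous.
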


However, simply randomizing for results below $\gamma^\star$ is still suboptimal as too many unsurprising results are published in the randomization regime.
The last step is to increase the threshold $X^\star$ to lower the loss from publishing results that do not move the audience's action enough.
A consequence is that some results that merit attention are not published.


\begin{prop}[Some results that merit attention are not published]
\label{prop:surprisingNotPublish}
In Setting~\ref{set:model_p_hacking}, consider any optimal publication rule $p^\star$.
For some types $\theta + \varepsilon > \gamma^\star$, we have $p^\star(X(\Delta^\star_{p^\star})) < 1$.
\end{prop}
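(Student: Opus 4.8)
The plan is to argue by contradiction, exploiting the structure established in Theorem \ref{thm:optimal_assymetric_info}. Suppose that some optimal publication rule $p^\star$ has $p^\star(X) = 1$ for almost all $X$ with $\theta + \varepsilon = X > \gamma^\star$ — that is, $p^\star$ publishes with probability one every result above $\gamma^\star$. By Theorem \ref{thm:optimal_assymetric_info}(b), on the region where $p_{X^\star,c_m}(X) > C_0$ the rule agrees almost everywhere with a linearly smoothed cutoff rule $p_{X^\star,c_m}$ for some $X^\star \in \left(\gamma^\star, \gamma^\star + \frac{1-C_0}{c_m}\right)$. The linearly smoothed cutoff rule $p_{X^\star,c_m}$ equals $1$ only for $|X| \ge X^\star$, and takes values strictly between $C_0$ and $1$ on the interval $\left(X^\star - \frac{1}{c_m}, X^\star\right)$ (intersected with the region above $C_0$). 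Since $X^\star > \gamma^\star$ strictly, there is a set of positive measure of statistics $|X|$ just below $X^\star$ but above $\gamma^\star$ on which $p_{X^\star,c_m}(X) \in (C_0,1)$, contradicting the supposition that $p^\star(X) = 1$ for all $X > \gamma^\star$.

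The one gap to close is the edge case in which $p_{X^\star,c_m}(X) \le C_0$ precisely on the relevant sliver just above $\gamma^\star$, so that Theorem \ref{thm:optimal_assymetric_info}(b) only guarantees $p^\star(X) \le C_0 < 1$ there rather than strict interior value; but $C_0 < 1$ already delivers $p^\star(X) < 1$ on a positive-measure set of types with $\theta+\varepsilon > \gamma^\star$, which is exactly the claimed conclusion. (Here I use that the researcher with $\theta+\varepsilon = X' > \gamma^\star$ who does not manipulate reports $X = X'$, so the published-probability statement about $X$ translates to a statement about types.) I should also note that a non-manipulating type with $\theta+\varepsilon$ slightly below $X^\star$ and above $\gamma^\star$ will in equilibrium choose a (possibly zero) bias, but since $p_{X^\star,c_m}$ is nondecreasing in $|X|$ with slope at most $c_m$, such a type is indifferent across biases in $[0, X^\star - |\theta+\varepsilon|]$ and any optimal response keeps the reported $|X| \le X^\star$, hence published with probability $< 1$; this is where I would invoke the tie-breaking convention if needed.

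The main obstacle is purely one of careful bookkeeping: translating the ``almost all $X$'' statement of Theorem \ref{thm:optimal_assymetric_info}(b) into a statement about a positive-measure set of \emph{types} $\theta+\varepsilon$, and making sure that the equilibrium best response $\Delta^\star_{p^\star}$ of a type in $(\gamma^\star, X^\star)$ does indeed lead to a reported statistic in the sub-cutoff randomization region rather than being pushed up to $X^\star$ by manipulation. Because manipulation is costly at rate $c_m$ and the smoothed rule has slope exactly $c_m$, manipulating from such a type up to the cutoff is weakly unprofitable, so the best response (under our tie-breaking rule favoring the planner, or simply by picking any best response) leaves the reported $|X|$ strictly below $X^\star$, giving $p^\star(X(\Delta^\star_{p^\star})) < 1$. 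Once this is pinned down the proof is complete.
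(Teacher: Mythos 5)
Your proof has a genuine gap. The proposition is not a statement about the publication \emph{rule} $p^\star(\cdot)$ as a function of the reported statistic; it is a statement about $p^\star\bigl(X(\Delta^\star_{p^\star})\bigr)$, the publication probability at the \emph{equilibrium reported statistic} of a type $Y=\theta+\varepsilon>\gamma^\star$. Your contradiction argument correctly shows (via Theorem~\ref{thm:optimal_assymetric_info}(b)) that the rule takes values strictly below $1$ on a positive-measure set of statistics in $(\gamma^\star,X^\star)$, but a type $Y$ in that interval is exactly indifferent between every bias $\beta\in[0,X^\star-Y]$ --- including $\beta=X^\star-Y$, which lands at $X^\star$ and is published with probability $1$. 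So your claims that ``any optimal response keeps the reported $|X|\le X^\star$, hence published with probability $<1$'' and that ``the best response \dots\ leaves the reported $|X|$ strictly below $X^\star$'' do not follow from indifference, and the second is in fact \emph{false} for types near $X^\star$: Proposition~\ref{prop:bunching_equilibrium} shows such types bunch at $X^\star$ and are published with probability $1$ under the planner-preferred tie-breaking. Indifference of the researcher is precisely why the tie-breaking convention matters here, and invoking it requires actually computing which design in the indifference set minimizes the planner's loss.

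That computation is the missing step, and it is where the paper's proof does its work: Lemma~\ref{lem:asymmetrichelp}\ref{part:asymmetrichelpunique} solves $\min_{p,\beta}\{(\omega^2(\beta^2-Y^2)+c_a)p\}$ subject to $p-c_m|\beta|=v$ with $v=1-c_m(X^\star-Y)$, yielding the planner-preferred publication probability $\tilde p(Y,v)=\min\bigl\{1,\tfrac{2v+\sqrt{v^2+3c_m^2(Y^2-(\gamma^\star)^2)}}{3}\bigr\}$ for $Y>\gamma^\star$. This is continuous in $Y$ and equals $v=1-c_m(X^\star-\gamma^\star)<1$ at $Y=\gamma^\star$, so it stays strictly below $1$ on a positive-measure set of types just above $\gamma^\star$ --- which is the conclusion. (Note also that your ``edge case'' cannot arise: since $X^\star<\gamma^\star+\tfrac{1-C_0}{c_m}$, one has $p_{X^\star,c_m}(\gamma^\star)=1-c_m(X^\star-\gamma^\star)>C_0$, so the sliver just above $\gamma^\star$ is always in the region where the rule exceeds $C_0$.) To repair your argument you would need to replace the assertion about best responses with this explicit loss-minimization over the indifference set, at which point you have essentially reproduced the paper's proof.
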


Given that results are only guaranteed publication if they cross a higher threshold than $\gamma^\star$, some manipulation can be beneficial to the planner to increase the publication rate of surprising findings. Therefore, in the planner's preferred equilibrium, researcher types below the cutoff $X^\star$ and above $\gamma^\star$ engage in some manipulation.  This form of manipulation is distinct from the manipulation that researchers engage in under (non-smoothed) cutoff rules, which involves results that should not be published and therefore hurts the planner.

\begin{prop}[Manipulation in equilibrium] \label{prop:manipulation_equilibrium}
In Setting~\ref{set:model_p_hacking},
consider any optimal publication rule, and let $X^\star$ be as in Theorem~\ref{thm:optimal_assymetric_info}(b).
For almost all $\theta + \varepsilon \in (\gamma^\star, X^\star)$, we have $\beta_{\Delta_{p^\star}^\star} > 0$.
\end{prop}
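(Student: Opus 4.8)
The plan is to combine the structure of the optimal publication rule established in Theorem~\ref{thm:optimal_assymetric_info}(b) with a direct analysis of the researcher's best response. By Theorem~\ref{thm:optimal_assymetric_info}(b), we may fix an optimal rule $p$ and a cutoff $X^\star \in (\gamma^\star, \gamma^\star + \tfrac{1-C_0}{c_m})$ so that $p(X) = p_{X^\star,c_m}(X)$ for almost every $X \ge 0$ with $p_{X^\star,c_m}(X) > C_0$. The key observation is that on the randomization region just below $X^\star$, the slope of $p_{X^\star,c_m}$ is exactly $c_m$, which is precisely the marginal cost of manipulation. First I would pin down where, given an observed statistic $\theta + \varepsilon = x' > 0$, the researcher's realized payoff $p(x' + \beta) - c_m\beta - C_0$ is maximized over $\beta \ge 0$. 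On the interval where $p$ coincides with $p_{X^\star,c_m}$ and has slope $c_m$, the payoff is constant in $\beta$; above $X^\star$ it is strictly decreasing (slope $-c_m < 0$); below the randomization region $p$ is flat (or bounded by $C_0$), so the payoff is strictly decreasing there too. Hence the researcher is indifferent across all $\beta$ that keep $x' + \beta$ inside $[\max(x', X^\star - \tfrac1{c_m}),\, X^\star]$, and strictly prefers this to any $\beta$ moving the reported statistic outside that band.

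Next I would use the tie-breaking convention: whenever the researcher is indifferent, she chooses the design minimizing the planner's expected loss. The second step is therefore to argue that, for $x' \in (\gamma^\star, X^\star)$, the planner strictly prefers the researcher to report at $X^\star$ (i.e., choose $\beta = X^\star - x' > 0$) rather than report truthfully at $x'$. Reporting at $X^\star$ yields publication with probability one, whereas reporting at $x'$ yields publication with probability $p_{X^\star,c_m}(x') < 1$. Because $x' > \gamma^\star$, publishing the (now biased) result at $X^\star$ moves the audience's action enough that the attention cost $c_a$ is more than justified — this is the same computation underlying the definition of $\gamma^\star$ — and the small bias $X^\star - x'$ introduces only a second-order loss relative to the first-order gain from the higher publication probability. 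More carefully, one compares the planner's conditional loss $\mathcal{L}_p$ from the two reports: the difference is (gain from extra publication probability, which is strictly positive and bounded below) minus (loss from bias, which is continuous and vanishes as $X^\star - x' \to 0$ but in any case is dominated because $x' > \gamma^\star$ strictly). Establishing this strict inequality — rather than just weak preference — for every $x'$ in the open interval is what forces $\beta_{\Delta^\star_{p^\star}} > 0$ there. This comparison is essentially already carried out in the proof of Theorem~\ref{thm:optimal_assymetric_info} when showing the upward incentive constraints to and from $\gamma^\star$ bind, so I would invoke or adapt that computation.

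The third step handles the ``almost all'' qualifier and the negative-$x'$ case. The exceptional null set is exactly the set of $X \ge 0$ on which the optimal rule $p$ may differ from $p_{X^\star,c_m}$ (those $X$ with $p_{X^\star,c_m}(X) \le C_0$, plus a measure-zero modification); since $(\gamma^\star, X^\star)$ lies in the region where $p_{X^\star,c_m} > C_0$ (as $\gamma^\star$ is where $p_{X^\star,c_m}$ first exceeds... one should check $p_{X^\star,c_m}(\gamma^\star)$ exceeds $C_0$, which follows from $X^\star < \gamma^\star + \tfrac{1-C_0}{c_m}$ giving $p_{X^\star,c_m}(\gamma^\star) = 1 - c_m(X^\star - \gamma^\star) > C_0$), the argument applies for a.e.\ $x'$ in the interval. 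By symmetry of the whole setup in $X \mapsto -X$, the case $\theta + \varepsilon \in (-X^\star, -\gamma^\star)$ is identical with signs reversed, but since the proposition is stated for $\theta + \varepsilon \in (\gamma^\star, X^\star)$ positive, this is only needed if one wants the two-sided statement. The main obstacle I anticipate is making the strict-preference comparison in step two fully rigorous and uniform: one must rule out that the planner is merely \emph{indifferent} between the researcher reporting truthfully and reporting at $X^\star$ for some subinterval, which would break the conclusion $\beta > 0$. This requires showing the publication-probability gain strictly dominates the bias loss throughout $(\gamma^\star, X^\star)$, leaning on the strict inequality $x' > \gamma^\star$ and on $X^\star - x' < \tfrac{1-C_0}{c_m}$ being small enough; I expect the cleanest route is to differentiate the planner's objective in the reported value and show it is strictly increasing on $(\gamma^\star, X^\star)$, so the planner's conditional-optimal report for a researcher who is otherwise indifferent is the right endpoint $X^\star$.
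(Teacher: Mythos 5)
Your Step 1 (the researcher is indifferent over all $\beta$ with $0\le\beta\le X^\star-(\theta+\varepsilon)$, and strictly prefers this band to anything else) matches the paper, and so does your appeal to the tie-breaking convention. The gap is in Step 2: your claim that the planner's conditional-optimal report within the indifference band is the right endpoint $X^\star$ is false, and the endpoint comparison you propose actually fails for types just above $\gamma^\star$. Along the indifference line $p=v+c_m\beta$ with $v=1-c_m(X^\star-Y)$, the planner's conditional loss (up to an additive constant) is $f(p)=\bigl[\omega^2\bigl(\tfrac{p-v}{c_m}\bigr)^2-\omega^2Y^2+c_a\bigr]p$. Comparing $\beta=0$ to $\beta=X^\star-Y$ gives
\[
f(1)-f(v)=(X^\star-Y)\Bigl[\omega^2(X^\star-Y)+c_m\bigl(c_a-\omega^2Y^2\bigr)\Bigr],
\]
and as $Y\downarrow\gamma^\star$ the publication gain $c_m(\omega^2Y^2-c_a)\to 0$ while the bias cost $\omega^2(X^\star-Y)\to\omega^2(X^\star-\gamma^\star)>0$, so the bracket is \emph{positive}: the planner strictly prefers the truthful report to full manipulation up to $X^\star$ for $Y$ near $\gamma^\star$. (Your parenthetical that the publication gain is ``bounded below'' is exactly what fails; the bias $X^\star-Y$ is not small there, only the gain is.) Consequently $f$ is not monotone on the band, and if your ``cleanest route'' were correct, bunching at $X^\star$ would occur on all of $(\gamma^\star,X^\star)$, contradicting Proposition~\ref{prop:bunching_equilibrium}, which only asserts it on a neighborhood of $X^\star$.

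The correct argument — the one the paper runs via Lemma~\ref{lem:asymmetrichelp} — is marginal rather than an endpoint comparison: $f'(v)=c_a-\omega^2Y^2<0$ for $Y>\gamma^\star$, so $\beta=0$ is not a local (hence not a global) minimizer on the band, and the first-order condition yields the interior optimizer $\tilde p(Y,v)=\min\bigl\{1,\tfrac{2v+\sqrt{v^2+3c_m^2(Y^2-(\gamma^\star)^2)}}{3}\bigr\}$, which strictly exceeds $v$ because $\sqrt{v^2+3c_m^2(Y^2-(\gamma^\star)^2)}>v$; hence $\beta=(\tilde p-v)/c_m>0$. If you replace your endpoint comparison with this derivative-at-zero argument (or directly invoke the explicit minimizer from Lemma~\ref{lem:asymmetrichelp}), the rest of your outline goes through.
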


There is no point in manipulating beyond the cutoff $X^\star$, as results $X^\star$ are published with probability 1.
As a result, there is bunching at the cutoff $X^\star$ in optimum.

\begin{prop}[Bunching at $X^\star$ in equilibrium] \label{prop:bunching_equilibrium}
In Setting~\ref{set:model_p_hacking},
consider any optimal publication rule, and let $X^\star$ be as in Theorem~\ref{thm:optimal_assymetric_info}(b).
There exists $\zeta > 0$ such that $\theta + \varepsilon + \beta_{\Delta_{p^\star}^\star} = X^\star$ for almost all  $\theta + \varepsilon \in (X^\star - \zeta,X^\star)$.
\end{prop}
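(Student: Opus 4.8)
The strategy is to combine (i) the explicit form of the optimal rule near $X^\star$ from Theorem~\ref{thm:optimal_assymetric_info}, (ii) a description of the researcher's best‑response set for types immediately below $X^\star$, showing she is \emph{indifferent} across a whole segment of reports, and (iii) the tie‑breaking convention, which forces her to pick the report the planner most prefers — and that report is $X^\star$ itself. Throughout I abbreviate $x' = \theta+\varepsilon$, write $\kappa = \tfrac{\eta^4}{(S^2+\eta^2)^2}$, and use $c_a = \kappa(\gamma^\star)^2$, which follows from $\gamma^\star = \tfrac{S^2+\eta^2}{\eta^2}\sqrt{c_a}$.

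First I would fix an optimal rule $p^\star$ and the associated $X^\star$ from Theorem~\ref{thm:optimal_assymetric_info}(b), so that $p^\star(X) = p_{X^\star,c_m}(X)$ for a.e.\ $X$ with $p_{X^\star,c_m}(X) > C_0$, i.e.\ for a.e.\ $|X| > X^\star - \tfrac{1-C_0}{c_m}$, while $p^\star \le C_0$ a.e.\ below that. Pick $\zeta \in (0, \tfrac{1-C_0}{c_m})$ (to be shrunk later) and take a researcher with $x' \in (X^\star-\zeta, X^\star)$, so $d := X^\star - x' \in (0,\zeta)$. Her realized payoff from reporting $X = x'+\beta$ is $p^\star(x'+\beta) - c_m|\beta| - C_0$. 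Along the smoothed segment, for $x'+\beta \in [x',X^\star]$ one has $p_{X^\star,c_m}(x'+\beta) = 1 - c_m(X^\star - x' - \beta)$, so the payoff equals the constant $1 - c_m(X^\star-x') - C_0$, independent of $\beta$; overshooting $X^\star$ keeps the publication probability at $1$ but strictly raises the cost; down‑manipulation (or crossing to the negative branch) lands in a region where $p^\star \le C_0$ a.e., giving payoff $\le -c_m|\beta| < 0$, which is dominated because $1 - c_m(X^\star-x') - C_0 > 1 - c_m\zeta - C_0 > 0$ by the choice of $\zeta$. Hence the best‑response set is exactly $\{\,\beta \in [0,d]\,\}$, i.e.\ reporting any $X \in [x',X^\star]$ — consistent with Proposition~\ref{prop:manipulation_equilibrium} — and the researcher is indifferent over all of them.

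Next I would invoke the tie‑breaking rule and pin down which report minimizes the planner's expected loss. Conditioning on $\theta+\varepsilon = x'$, the posterior for $\theta$ is $\mathcal N\!\big(\tfrac{\eta^2}{S^2+\eta^2}x', \tfrac{S^2\eta^2}{S^2+\eta^2}\big)$, so the planner's conditional expected loss when the researcher reports $X = x'+\beta$ equals, up to a constant independent of $\beta$,
\[
g(\beta) = \kappa\Big[p_{X^\star,c_m}(x'+\beta)\,\beta^2 + \big(1 - p_{X^\star,c_m}(x'+\beta)\big)\,x'^2\Big] - c_a\,p_{X^\star,c_m}(x'+\beta).
\]
Substituting $q = 1 - p_{X^\star,c_m}(x'+\beta) = c_m(d-\beta) \in [0,c_m d]$, i.e.\ $\beta = d - q/c_m$, and $c_a = \kappa(\gamma^\star)^2$, I would rewrite this as $h(q) = \kappa(1-q)(d-q/c_m)^2 + (\kappa x'^2 + c_a)q - c_a$ and compute $h'(q) = -\kappa(d-q/c_m)^2 - \tfrac{2\kappa}{c_m}(1-q)(d-q/c_m) + \kappa x'^2 + c_a$. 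Since $d - q/c_m \in [0,d]$ and $1-q \le 1$ on $[0,c_m d]$, one gets $h'(q) \ge h'(0) = \kappa\big[(X^\star)^2 + (\gamma^\star)^2\big] - 2\kappa d\big(X^\star + \tfrac1{c_m}\big)$, using $x'^2 - d^2 = (X^\star)^2 - 2X^\star d$. Because $X^\star > \gamma^\star \ge 0$, the bound $h'(0)$ tends to $\kappa[(X^\star)^2+(\gamma^\star)^2] > 0$ as $d \to 0$, so after shrinking $\zeta$ we have $h'(q) \ge h'(0) > 0$ for all $q \in [0, c_m d]$ whenever $d < \zeta$. Thus $h$ is strictly increasing, its minimum is at $q = 0$, i.e.\ at $\beta = d$ and report $X = X^\star$, and the tie‑breaking convention forces $\theta+\varepsilon + \beta_{\Delta^\star_{p^\star}} = X^\star$ for a.e.\ $\theta+\varepsilon \in (X^\star-\zeta, X^\star)$.

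The step I expect to be the main obstacle is the tie‑break computation: showing the planner \emph{strictly} prefers the researcher to manipulate all the way to $X^\star$ rather than stopping anywhere short of it. The economic point — near $X^\star > \gamma^\star$, publication is net‑beneficial, so the first‑order gain in publication probability along the smoothed segment dominates the second‑order loss in posterior accuracy from the extra bias — has to be turned into the sign of $h'$ \emph{uniformly} over the relevant type interval, so that a single $\zeta$ works; the bound $h'(q) \ge h'(0)$ is what makes this clean. A secondary, routine point is the "almost all" qualifier for a general optimal $p^\star$: the best‑response analysis above uses $p^\star = p_{X^\star,c_m}$ near $X^\star$, and where they differ (a null set in $X$) any deviation is either unprofitable or relevant only for a null set of types $x'$; one can also simply prove the statement for the canonical optimal rule $p_{X^\star,c_m}$ of Theorem~\ref{thm:optimal_assymetric_info}(a) and transfer to arbitrary optimal rules via part~(b).
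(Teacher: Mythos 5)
Your proof is correct and follows essentially the same route as the paper's: both restrict attention to the researcher's indifference segment of reports in $[\theta+\varepsilon,\,X^\star]$, invoke the tie-breaking convention, and show that for types sufficiently close to $X^\star$ the planner's conditional loss along that segment is minimized at the corner report $X^\star$ --- the paper reads this off the closed form $\tilde p(Y,v)=\min\bigl\{1,\tfrac{2v+\sqrt{v^2+3c_m^2(Y^2-(\gamma^\star)^2)}}{3}\bigr\}$ from Lemma~\ref{lem:asymmetrichelp} exceeding $1$ near $Y=X^\star$, $v=1$, while you establish the same corner solution by the direct monotonicity bound $h'(q)\ge h'(0)>0$. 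One cosmetic slip: the attention cost enters the planner's conditional loss as $+c_a p$ (as in Lemmas~\ref{lem:loss} and~\ref{lem:lossThmAsymmetric}, notwithstanding the sign printed in Equation~\eqref{eqn:loss}), so your bound should read $h'(0)=\kappa\bigl[(X^\star)^2-(\gamma^\star)^2\bigr]-2\kappa d\bigl(X^\star+\tfrac{1}{c_m}\bigr)$ rather than $\kappa\bigl[(X^\star)^2+(\gamma^\star)^2\bigr]-\cdots$; since Theorem~\ref{thm:optimal_assymetric_info} gives $X^\star>\gamma^\star$, this is still strictly positive for small $d$, and the argument is unaffected.
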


The third line of Table~\ref{tab:sequence} and the third panel of Figure~\ref{fig:manipulation} summarize Propositions~\ref{prop:manipulation_equilibrium} and~\ref{prop:bunching_equilibrium}.

\begin{rem} An unrelated application of our analysis is to the structure of the lottery for top picks in the National Basketball Association (NBA) draft. To help weak teams become more competitive, the NBA seeks to offer a higher chance of top picks to the lowest-ranked teams. This creates incentives for the teams near the bottom to manipulate their quality ($\theta$) by losing matches so to be classified as the worst-team (i.e., lowering $X$). To disincentive this form of manipulation, in 2019, the NBA began to offer the same chances at top picks to the three worst-ranked teams.  This form of randomization has a similar structure to the randomization features in our optimal publication rule under manipulation.  
\end{rem}

\section{Taking stock: Research costs versus manipulability}
\label{sec:preanalysisplans}

In this section, we combine the analyses of the previous two sections to study when the planner may want to incentivize a manipulable observational study instead of a more costly, but pre-registered,  experiment.
More precisely, we consider the following setting.

\begin{setting}[Pre-specification versus possible manipulation] Consider the following two possible families of designs,
all of which have variance $S^2$.
\label{set:pre_analysis}
\begin{itemize}
\item[(A)] Experiment with pre-analysis plan: 
As in Section~\ref{sec:design}, there is an unbiased design $E$.
Researchers cannot manipulate their findings, truthfully report $X = \theta + \varepsilon$, and pay a research cost $C_E$. 
The social planner chooses a publication rule $p_{E}^\star$ as in Definition \ref{defn:1} and incurs expected loss $\mathcal{L}_E^\star$. 
\item[(B)] Possible manipulation: We are in Setting~\ref{set:model_p_hacking}.  Thus, there is a family of designs $\Delta \in \mathbb{R}$.
The manipulation cost is $c_m < \infty$, and researchers can manipulate their findings after observing $\theta + \varepsilon$.
The fixed research cost is $C_0 = 0$.
The social planner chooses an optimal publication rule $p^\star$, and incurs a corresponding expected loss $\mathcal{L}_M^\star = \mathbb{E}_{\theta,\varepsilon}\left[\mathcal{L}_{p^\star}(X(\Delta_{p^\star}^{\star}), \Delta_{p^\star}^{\star}, \theta)\right]$ under the planner's preferred equilibrium. 
\end{itemize} 
\end{setting} 

In Scenario (A), researchers cannot manipulate their findings (as for, e.g., a pre-registered experiment), but pay a fixed cost $C_E$ of conducting an experiment.
In Scenario (B), researchers can manipulate their findings---as for an observational study.%
\footnote{While we assume zero research costs in Scenario (B), similar results hold for an manipulable studies that have sufficiently low research costs.}

To compare the planner's loss across the two scenarios, we introduce a quantitative measure of how much loss the expensiveness of the design $E$ in Scenario (A) entails for the planner. We call this measure the \emph{incentive cost} of $E$, as it captures the cost of incentizing the researcher to choose design $E$.
Namely, we consider the difference in the planner's loss (under the planner's optimal publication rules) between the design $E$ and a hypothetical design $E'$ with the same variance but no research costs.

\begin{defn}[Incentive costs] Given an unbiased design $E$, let $\mathrm{IC}(E) = \mathcal{L}_E^\star - \mathcal{L}_{E'}^\star$ where $E'$ is a cheap design with $C_{E'} = 0$ and variance $S_{E'}^2 = S_E^2$.  
\end{defn} 

Whenever $E$ is a cheap design, we have $\mathrm{IC}(E) =0$, while when $E$ is an expensive design, we have $\mathrm{IC}(E) >0$. Also, note that $\mathrm{IC}(E)$ is non-decreasing in $C_E$ by Lemma~\ref{lem:taylor_large} in Appendix~\ref{app:preliminary}, and can be readily computed by using the same lemma.

The next proposition provides a simple characterization of when the experiment in Scenario (A) is preferred by the planner to the observational study in Scenario (B) in terms of the incentive costs of the experiment in Scenario (A). High incentive costs for the experiment favor the observational study, and we provide a quantitative bound on incentive costs for observation studies to be preferred by the planner.

\begin{prop}[Experiment versus manipulable observational study] \label{prop:comparison_pap} 
In Setting~\ref{set:pre_analysis}:
\begin{itemize} 
\item[(a)] If $E$ is cheap (i.e., $\mathrm{IC}(E) = 0$), then 
$\mathcal{L}^\star_E < \mathcal{L}^\star_M$.
\item[(b)] If $\mathrm{IC}(E) > \frac{1 + 2 S c_m}{c_m^2}$, 
then $\mathcal{L}^\star_M < \mathcal{L}^\star_E$.
\end{itemize} 
\end{prop}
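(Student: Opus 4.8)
The plan is to compare the two optimal losses by relating both to the benchmark loss $\mathcal{L}_{E'}^\star$ of a cheap design $E'$ with the same variance $S^2$ and zero research cost. By definition, $\mathcal{L}_E^\star = \mathcal{L}_{E'}^\star + \mathrm{IC}(E)$, so it suffices to control the difference $\mathcal{L}_M^\star - \mathcal{L}_{E'}^\star$: part (a) will follow from showing this difference is strictly positive, and part (b) from showing it is at most $\tfrac{1 + 2Sc_m}{c_m^2}$.

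For part (a), the key observation is that when $E$ is cheap, the constrained optimal publication rule for $E'$ is the pure cutoff rule $p_{\gamma^\star}(X) = 1\{|X| \ge \gamma^\star\}$ with $\gamma^\star = \tfrac{S^2+\eta^2}{\eta^2}\sqrt{c_a}$, which achieves the pointwise-in-$(\theta,\varepsilon)$ minimum of the planner's ex post loss $\mathbb{E}_p[(\theta - a_p^\star(X,\Delta))^2] - c_a p(X,\Delta)$ among all publication rules applied to truthfully reported $X = \theta + \varepsilon$. In Scenario (B), the planner is further constrained: the realized report is $X(\Delta_{p^\star}^\star) = \theta + \varepsilon + \beta_{\Delta_{p^\star}^\star}$ with $\beta \neq 0$ on a positive-measure set (Proposition~\ref{prop:manipulation_equilibrium}), and the rule cannot depend on $\Delta$. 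I would argue that any such rule yields strictly higher expected loss than $\mathcal{L}_{E'}^\star$: the planner could at best replicate the $E'$ outcome if no manipulation occurred and the audience's action matched $\tfrac{(\theta+\varepsilon)\eta^2}{S^2+\eta^2}$, but under $p^\star$ the audience acts on the manipulated statistic, so $a^\star$ is systematically displaced from the value it would take under truthful reporting, strictly increasing $\mathbb{E}[(\theta - a^\star)^2]$ on the manipulation region while the attention-cost term cannot compensate (the pure cutoff already optimizes that trade-off). Making the strict inequality rigorous — ruling out that the manipulation happens to be costless to the planner on a measure-zero set only — is where I would lean on Theorem~\ref{thm:optimal_assymetric_info}(a), which pins $X^\star > \gamma^\star$ strictly, guaranteeing a genuine interval $(\gamma^\star, X^\star)$ of manipulating types.

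For part (b), I would produce an explicit feasible publication rule in Scenario (B) whose loss is at most $\mathcal{L}_{E'}^\star + \tfrac{1+2Sc_m}{c_m^2}$, and then invoke optimality of $p^\star$. The natural candidate is the linearly smoothed cutoff rule $p_{\gamma^\star + 1/c_m,\, c_m}$: it places the guaranteed-publication cutoff at $\gamma^\star + \tfrac{1}{c_m}$ with slope $c_m$, so its ramp sits on $(\gamma^\star, \gamma^\star + \tfrac{1}{c_m})$. Under this rule the marginal benefit of manipulation never exceeds its marginal cost $c_m$, so the researcher's best response involves no manipulation for any $|\theta+\varepsilon| \le \gamma^\star$ and only bounded manipulation (pushing up to at most $\gamma^\star + \tfrac1{c_m}$) for larger types; the resulting published statistics differ from the truthful ones by at most $\tfrac1{c_m}$. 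I would then bound the excess loss relative to the cheap-design benchmark: the discrepancy between $a^\star$ under this rule and under the ideal cutoff $p_{\gamma^\star}$ comes from (i) randomized publication on the ramp, contributing a term controlled by the interval length $\tfrac1{c_m}$ and the size of $a^\star$ there, and (ii) bias of magnitude at most $\tfrac1{c_m}$ entering the audience's action, which by expanding the quadratic loss contributes a cross term of order $S \cdot \tfrac1{c_m}$ (the $S$ reflecting the standard deviation of the residual noise that multiplies the bias in the cross term) plus a $\tfrac1{c_m^2}$ term. Collecting these yields exactly the bound $\tfrac{1 + 2Sc_m}{c_m^2} = \tfrac{1}{c_m^2} + \tfrac{2S}{c_m}$.

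The main obstacle I anticipate is the bookkeeping in part (b): carefully specifying the researcher's best response to $p_{\gamma^\star+1/c_m,\,c_m}$ (including ties, resolved in the planner's favor as assumed in Section~\ref{sec:setup}), and then bounding $\mathbb{E}_{\theta,\varepsilon}[(\theta - a^\star(X))^2] - c_a p(X) - \mathcal{L}_{E'}^\star$ tightly enough to land on the stated constant rather than something merely proportional to it. This requires using the Gaussian structure — in particular that $\theta + \varepsilon \sim \mathcal{N}(0, S^2 + \eta^2)$ and that $a^\star_{p_{\gamma^\star}}$ coincides with the unconstrained-optimal action — so that the only losses are the clearly identifiable ramp and bias contributions, each estimated by elementary bounds (length of ramp times bounded integrand; Cauchy–Schwarz on the bias cross term). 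Part (a) should be comparatively short once the strict positivity of the manipulation region is in hand.
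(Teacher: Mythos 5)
Part (a) of your proposal follows the paper's argument: lower-bound $\mathcal{L}^\star_M$ by the expectation of the pointwise (in $\theta+\varepsilon$) minimum of the conditional loss, which is attained by zero bias and the cutoff $1\{|X|\ge\gamma^\star\}$, note that this benchmark equals $\mathcal{L}^\star_E$ when $E$ is cheap, and obtain strictness from the positive-measure manipulation region guaranteed by Proposition~\ref{prop:manipulation_equilibrium}. That part is fine.

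In part (b), however, your choice of candidate rule creates a genuine gap. The paper's witness is the \emph{hard} cutoff $p(X)=1\{|X|\ge\gamma^\star+\tfrac{1}{c_m}\}$, not the linearly smoothed rule $p_{\gamma^\star+1/c_m,\,c_m}$. Under the hard cutoff, every type with $|\theta+\varepsilon|\ge\gamma^\star$ strictly prefers to manipulate up to the cutoff and is published with probability $1$, while types below $\gamma^\star$ do not manipulate and are not published; hence the realized publication indicator coincides exactly with $1\{|\theta+\varepsilon|\ge\gamma^\star\}$, the equilibrium rule of the costless benchmark $E'$. The excess loss over $\mathcal{L}^\star_{E'}$ is then \emph{purely} the bias contribution $\mathbb{E}[\beta^2 p]+2\mathbb{E}[\beta\varepsilon p]$, and the crude bounds $|\beta|\le 1/c_m$ and Cauchy--Schwarz give exactly $\tfrac{1}{c_m^2}+\tfrac{2S}{c_m}=\tfrac{1+2Sc_m}{c_m^2}$, exhausting the entire budget. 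Under your smoothed rule, types on the ramp $(\gamma^\star,\gamma^\star+\tfrac{1}{c_m})$ are published with probability strictly less than $1$; since $c_a-\omega^2 Y^2<0$ there, this under-publication of results that merit attention is an additional \emph{strictly positive} loss term on top of the bias terms (your item (i)). Your accounting asserts that items (i) and (ii) together "yield exactly" $\tfrac{1+2Sc_m}{c_m^2}$, but the same crude bounds on (ii) alone already produce that constant, so the total would exceed it unless you prove sharper bias bounds that leave room for (i) --- which the proposal does not do, and which would require estimating the ramp term against quantities like $\omega^2\gamma^\star/c_m$ that are not controlled by $\tfrac{1+2Sc_m}{c_m^2}$ uniformly in the parameters. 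The fix is simply to replace the smoothed candidate with the hard cutoff at $\gamma^\star+\tfrac{1}{c_m}$; the rest of your outline (decompose into a no-manipulation term equal to $\mathcal{L}^\star_{E'}$ plus bias terms, bound the latter by elementary estimates and Cauchy--Schwarz, then use $\mathcal{L}^\star_E=\mathcal{L}^\star_{E'}+\mathrm{IC}(E)$) then goes through as in the paper.
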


The proof is in Appendix~\ref{proof:pre_analysis}. When the cost of the experiment is high (and therefore $\mathrm{IC}(E)$ is large), whether the experiment or the observational study is preferred depends on on (i) the cost $C_E$ of the experiment and (ii) the cost of data manipulation $c_m$ in the observational study. 
Clearly, if $c_m$ is small, then an experiment with a pre-analysis plan may be preferred. Intuitively, with a low cost of manipulation, the planner must end up publishing a larger set of results that do not merit attention from the audience---thereby making the audience incur a possibly large attention cost. 

Now suppose that $c_m$ is large.  Then, an observational study with possible manipulation is preferred by the planner for sufficiently high research costs $C_E$ for the experiment---despite the cost of the experiment being private and paid only by the researcher. This preference arises because a sufficiently large experimental research cost $C_E$ increases the loss of the planner, who must publish results that do not merit the audience's attention.
This conclusion contrasts with analyses that abstract from the costs of experiments with pre-analysis plans, where (pre-registered) experiments always dominate (manipulable) observational studies \citep[e.g.,][]{kasy2023optimal, spiess2018optimal}.

\section{Application to medical studies} \label{Sec:application}

In this section, we study the implications of our model for medical studies.
We first calibrate our model from Section~\ref{sec:p_hacking} to meta-analyses of medical studies.
We then illustrate several properties of the optimal publication decision rules, and investigate when studies with potential manipulation may dominate experiments with no manipulation.

\subsection{Calibration}

To map the model in Section~\ref{sec:p_hacking} to the data,
we consider a family of studies $i$.
We normalize the estimates and estimands so that $S^2 = 1$.
Thus, we consider test statistics $X_i = \theta_i + \beta_i + \varepsilon_i$ where $\theta_i$ is the parameter of interest of the study $i$ defined as the size of the effect in standard deviation units, $\beta_i$ is a bias due to manipulation (potentially correlated with $\theta_i + \varepsilon_i$) and $\varepsilon_i \sim \mathcal{N}(0,1)$.\footnote{The normality assumption here captures a large-sample approximation.}
As in previous sections, we consider a prior $\theta_i \sim \mathcal{N}(0, \eta^2)$; here, the variance captures any additional prior heterogeneity arising from different studies, and the assumption of mean zero is consistent with the empirical distribution of studies in the Cochrane Database of Systematic Reviews, the leading database for systematic reviews in health care.\footnote{\url{https://www.cochranelibrary.com/cdsr/about-cdsr}, see \cite{bartovs2023empirical}.}

We use data from \cite{head2015extent} to calibrate our key parameters $(\eta^2, c_a, c_m)$
for manipulable designs.  As in Section~\ref{sec:preanalysisplans}, we impose $C_0 = 0$ in our model---i.e., that fixed costs are sunk at the stage at which researchers decide about whether/how to manipulate.%

\cite{head2015extent} collected $p$-values via text-mining the PubMed database across several disciplines. We focus here on $827,115$ $p$-values published in medical and pharmaceutical journals. 
As \citeauthor{head2015extent}'s \citeyearpar{head2015extent} data set does not contain $t$-statistics, we construct $t$-statistics by inverting a two-sided $t$-test.  That is, we construct $t$-statistics as $X_i:=\Phi^{-1}(1 - p_i/2)$, where $\Phi$ is the Gaussian CDF and $p_i$ is the $p$-value of study $i$.\footnote{Because the data does not contain information about $t$-tests directly, this assumes that the majority of $p$-values are constructed using two-sided $t$-tests. Although this may not always be the case, we note that $t$-tests are the standard practice in medical sciences \citep{fda}. It is possible to conduct our analysis assuming one-sided instead of two-sided tests.}
To adjust for publication bias in the data from \cite{head2015extent}, we use estimates from \cite{vorland2024publication} that $36\%$ of the medical studies are never published.  As a simplifying assumption, we assume that all such studies have non-significant results ($p$-value above $5\%$).

\paragraph{Calibration of $c_m$.} For competitive journals, standard $1$ or $5\%$ significance levels correspond to publication rules of the form $p^\star(X) = 1\{|X| \ge q\}$ for $q \in \{1.96,2.56\}$ where $q$ is the critical Gaussian quantile at a given $5\%$ or $1\%$ significance level.\footnote{Although some journals publish non-significant results, researchers may have incentives to manipulate their results to publish in top journals,
where most published results are significant \citep{laviolle2025trends}.}
We consider two calibrations that correspond to each of these two levels.
In each case, we would observe bunching at $q$, with researchers manipulating the findings in the interval $(q - \frac{1}{c_m}, q)$. Taking a standard bunching approach, we can therefore use the data to estimate $c_m$ by solving 
$\Phi(q) - \Phi(q - \frac{1}{c_m}) = b$, where $b$ is the share of findings in a small neighborhood around $q$. We find that about $18\%$ of findings have a $p$-value close to $0.05$ and $10\%$ of $p$-values close to $1\%$ in the set of studies reported by \cite{head2015extent}.\footnote{We choose $b$ as the share of $|X|$ between $1.95$ and $2$ for $q = 1.96$ and between $2.55$ and $2.6$ for $q = 2.56$. 
}
About $27\%$ of trials published in PubMed between 2002 and 2015 have some form of pre-registration \citep{lamberink2022clinical}. Assuming no manipulation of pre-registered findings, we estimate $b$ by multiplying these shares by $(1 - 0.36)/(1 - 0.27)$ to account for the $36\%$ of results that are not published, and $27\%$ of studies with pre-registration in PubMed. We find $c_m = 0.98$ for our 5\%-level calibration (which considers manipulation of a $t$-test of level~$5\%$), and $c_m = 0.83$ for our 1\%-level calibration (which considers manipulation of a $t$-test of level $1\%$). 

\paragraph{Calibration of $\eta^2$.} Manipulation may change the distribution of $X$, so we cannot directly estimate $\eta^2$ from second moments. However, for a publication rule of the form $1\{|X| \ge q\}$, under our model (from Setting~\ref{set:model_p_hacking} in Section~\ref{sec:p_hacking}), manipulation only occurs below the significance threshold. To construct a measure robust to manipulation, we take the $95$th percentile of the distribution of $X$, denoted as $\bar{q}_{95}$. Taking into account that an additional $36\%$ of findings is unobserved (unpublished) but non-significant by assumption, we can see that
$\bar{q}_{95} = \hat{q}_{95 - \gamma}$ where $\gamma = 0.05 \times 0.36/(1 - 0.36)$ and $\hat{q}$ is the empirical quantile of $t$-statistics from \cite{head2015extent}.  If $\hat{q}_{95 - \gamma} > 2.56$, our estimate will be robust to manipulation around the significance threshold under our model. We find that $\hat{q}_{95 - \gamma} = 3.43$---well above the critical values around which we expect manipulation. Using properties of the Gaussian distribution, 
$\eta^2 \approx (\bar{q}_{95}/2)^2 - 1$, which leads to an estimate of $\eta^2 = 1.94$.

\paragraph{Calibration of $c_a$.}
We choose $c_a$ to reflect the critical values of tests that are typically used in practice. As (absent manipulation) the critical threshold for publication is $\sqrt{c_a} \frac{1 + \eta^2}{\eta^2}$ in our model (Proposition~\ref{prop:t_test}), we consider take $\sqrt{c_a} \frac{1 + \eta^2}{\eta^2} = 1.96$ for our 5\%-level calibration (which captures a 5\%-level $t$-test), and $\sqrt{c_a} \frac{1 + \eta^2}{\eta^2} = 2.56$ for our 1\%-level calibration. 

\subsection{Optimal publication rules under \texorpdfstring{$p$}{p}-hacking}

\begin{figure}[t!]
    \centering
\includegraphics[scale = 0.65]{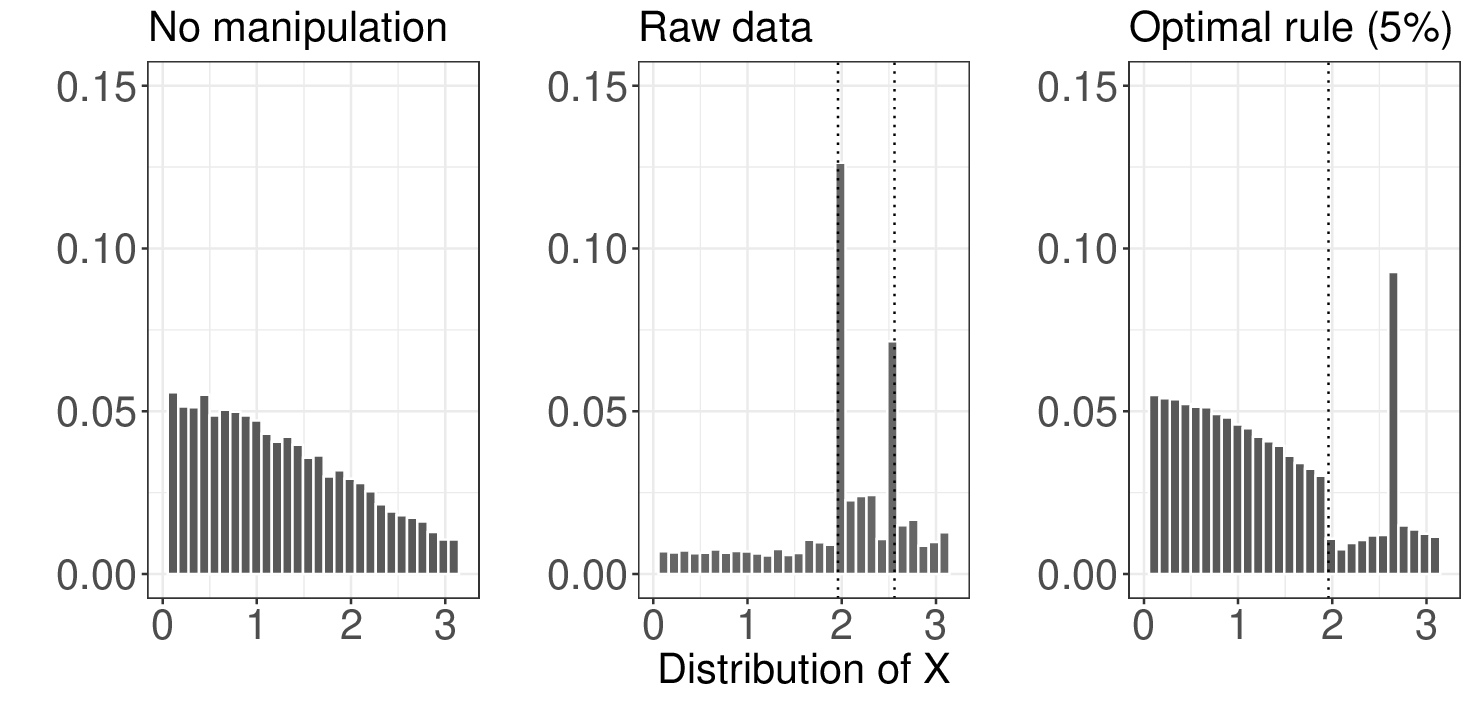}

    \caption{Simulated statistics without manipulation (left panel), empirical distribution of $t$-statistics in medical and pharmaceutical studies using data from \cite{head2015extent} (center panel), and simulated statistics with manipulation under the optimal publication rule with $c_a \frac{\eta^2 + 1}{\eta^2} = 1.96$ (right panel). Here, $\eta^2 = 1.94$ and $c_m = 0.98$ are calibrated to data from \cite{head2015extent}. In the center panel, we account for the $36\%$ of unobserved (insignificant) findings in reporting levels in the distribution. Dotted vertical lines correspond to values $1.96$ and $2.56$ in the center panel, and to value $1.96$ in the right~panel.}
    \label{fig:main}
\end{figure}

We next illustrate the impact of manipulation on the distribution of results, both in the data, and simulated in our model under the optimal publication rule.
In Figure \ref{fig:main}, we plot the distribution that we would see in the absence of manipulation in our calibration (left panel), the empirical distribution (center panel), and the distribution under the optimal publication rule under our 5\%-level calibration (right panel).
The empirical distribution exhibits bunching around $1.96$ and $2.56$, consistent with a much higher chance of publication above these significance thresholds.\footnote{In Figure~\ref{fig:raw} in Appendix~\ref{app:figs}, we report the raw distribution of $p$-values. Given rounding to two or three digit levels for $p$-values in the data, some of the bunching should be interpreted around (not exactly equal) to the critical thresholds \citep{elliott2022detecting}.}
The simulated distribution under the optimal publication rule features much less bunching.

\begin{table}[t!]
\centering
\small                                   
\setlength{\tabcolsep}{5pt}

\begin{tabular}{ccclc cc}
\toprule
\multirow{2}{*}{} &
\multirow{2}{*}{} &
\multirow{2}{*}{} &
\multirow{2}{*}{Publication rule} &
\multirow{2}{*}{\% Published} &
\multicolumn{2}{c}{Within published findings} \\
\cmidrule(lr){6-7}
 & & & & & \% Manipulated & Average Bias $|\beta|$ \\
\midrule
\multicolumn{7}{@{}l}{\textbf{5\%-Level Calibration:} $\,\eta^2 = 1.94,\;\sqrt{c_a}(1+\eta^{2})/\eta^{2}=1.96,\;c_m=0.98$}\\[-0.4ex]
\cmidrule(lr){1-7}
 & & & \makecell[l]{$t$-test rule $1\{|X| \ge 1.96\}$ \\(without manipulation)}  & 25\% & --- & --- \\ \cmidrule(lr){4-7}
 & & & \makecell[l]{$t$-test rule $1\{|X| \ge 1.96\}$ \\(with manipulation)}      & 58\% & 56\% & 0.31 \\ \cmidrule(lr){4-7}
 & & & Optimal rule ($X^{\star}=2.64$)                            & 25\% & 45\% &0.11 \\ \midrule
\addlinespace[0.8ex]
\multicolumn{7}{@{}l}{\textbf{1\%-Level Calibration:} $\,\eta^2 = 1.94,\;\sqrt{c_a}(1+\eta^{2})/\eta^{2}=2.56,\;c_m=0.83$}\\[-0.4ex]
\cmidrule(lr){1-7}
 & & & \makecell[l]{$t$-test rule $1\{|X| \ge 2.56\}$ \\(without manipulation)}  & 13\% & --- & --- \\ \cmidrule(lr){4-7}
 & & & \makecell[l]{$t$-test rule $1\{|X| \ge 2.56\}$ \\(with manipulation)}      & 42\% & 68\% & 0.45 \\ \cmidrule(lr){4-7}
 & & & Optimal rule: ($X^\star = 3.39$)                          & 13\% & 57\% & 0.18 \\
\bottomrule
\end{tabular}

 \caption{Properties of different publication rules by calibrating the parameters using data from \cite{head2015extent}. 
The two calibrations, which respectively capture 5\%-level and 1\%-level $t$-tests, have different choices of attention cost $c_a$ and manipulation cost $c_m$.
Within each calibration, the first row corresponds to the optimal publication rule assuming researchers do not manipulate the findings, which is a standard $t$-test rule. The second row shows equilibrium under that rule if researchers manipulate their findings. The last row for each calibration describes equilibrium under the optimal publication rule under manipulation.
  } 
  \label{tab:main} 
\end{table}

In Table \ref{tab:main}, we compare the simulated properties of the standard cutoff rule and the optimal rule in our calibrated model in more detail.
We calculate the cutoff $X^\star$ under the optimal publication rule, and compare the share of published findings and share of manipulated published findings between the optimal rule and the standard rules $t$-test rules $1\{|X| \ge t^*\}$ for $t^* \in \{1.96, 2.56\}$, both with and without manipulation. 

The first main takeaway is that the optimal publication rule substantially increases the threshold from $1.96$ to $2.64$ under the 5\%-level calibration (and from $2.56$ to $3.39$ under the 1\%-level calibration) due to the relatively small cost of manipulation. The new threshold $X^\star$ characterizes the point after which studies are published with probability 1. 

The second key observation is that under the optimal publication rule, the amount of manipulation within such studies is substantially lower---but non-zero. As Table~\ref{tab:main} shows, under the standard 5\%-level $t$-test rule $1\{|X| \ge 1.96\}$, about $54\%$ of the studies are published, and of these published findings the average bias is 0.31.
In contrast, if there were no manipulation, only $25\%$ of the findings would be published under the same threshold rule (this is because $X_i \sim \mathcal{N}(0, 2.94)$ unconditional on $\theta_i$). The optimal publication rule publishes a similar number of findings as in the absence of manipulation ($25\%$), and the average bias across all such published findings is $0.11$---less than half than under a standard $t$-test rule.

 Taken together, these results show that publication rules in leading medical journals should publish a similar number of findings if there was no manipulation under the standard 5\%-level $t$-test rule $1\{|X| \ge 1.96\}$. However, the critical threshold for which studies are published with probability 1 increases from 1.96 to $2.64$.  However, some results between 1.64 and 2.64 should still be published, albeit with probability less than 1. In particular, some results between 1.62 and 1.96 should be published to reduce incentives to substantially manipulate. Thus, publication decisions must depend on results through the ``publication~score'' 
$$
\begin{aligned} 
p^\star(X) = \begin{cases} 0 & \text{ if } X < X^\star - \frac{1}{c_m} \\ 1 - c_m (X^\star - |X|) & \text{ if } X^\star - \frac{1}{c_m} \le |X| \le X^\star \\ 
1 & \text{ if } X \ge X^\star 
\end{cases} 
\end{aligned} 
$$ 
where $(c_m, X^\star) = (0.98, 2.64)$, which indicates the relevance of findings for publication. 
For the 1\%-level case, one should instead take $(c_m, X^\star) = (0.83, 3.39)$ in this scoring rule.

\subsection{Pre-registered experiment versus manipulable observational study}  \label{sec:exp_application}

Returning to the exercise in Section~\ref{sec:preanalysisplans}, we ask when the planner should incentivize an experiment with costs $C_E > 0$ and no manipulation (e.g., adherence to a pre-analysis plan) over a manipulable observational study with lower research cost $C_0 = 0$.%
\footnote{%
The assumption that the observational study has no fixed research cost can be relaxed as long as its fixed costs do not affect the supply of research (individual rationality constraint is not binding).}
We suppose that the two studies share the same estimand,
and have equal precision absent bias from manipulation. 

In Figure \ref{fig:comparison1} (left panel), we report the ratio between the planner's optimal loss for the experiment and the loss for the observational study, as a function of the cost of the experiment $C_E$. We consider two publication rules for the observational study: the standard $t$-test rule that is optimal without manipulation \citep{frankel2022findings}, and the optimal publication rule accounting for manipulation (from Theorem~\ref{thm:optimal_assymetric_info}).

\begin{figure}[!t]
\centering 
\includegraphics[scale = 0.5]{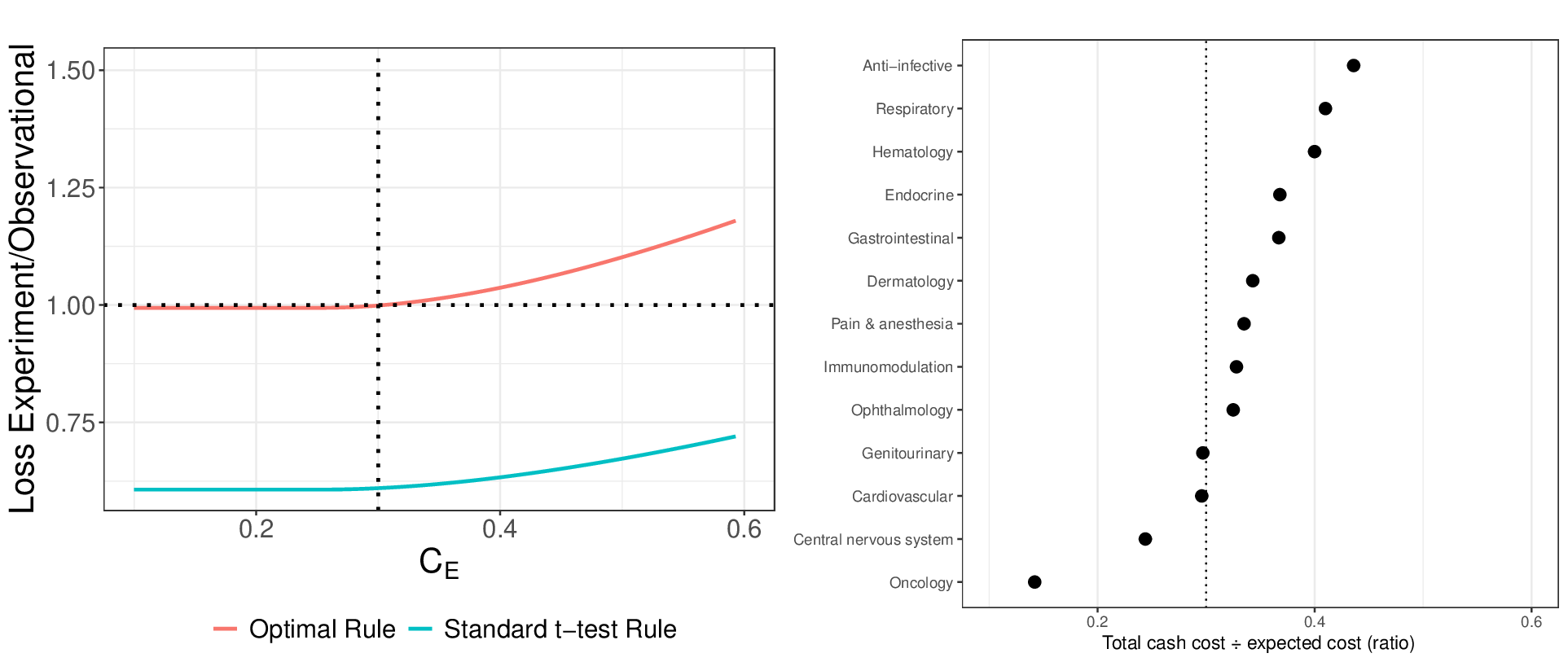}
\caption{The left panel reports the ratio between the loss of nonmanipulable experiment with varying costs $C_E$, and the loss of an inexpensive but manipulable observational study. Different colors give the losses of the observational study under a standard 5\%-level $t$-test rule $1\{|X| \ge 1.96\}$ (red), and under the estimated optimal publication rule for the 5\%-level calibration (blue). The right-hand side reports the ratio of the average cost of a Phase 3 trial across 13 pharmaceutical companies to expected capitalized costs (accounting for failures) from \cite{sertkaya2024costs}.} \label{fig:comparison1}
\end{figure} 

Under a standard $t$-test rule for the observational study, due to manipulability the observational study is dominated by the experiment, no matter the experimental cost. When we consider instead the optimal publication rule for the same observational study, 
the observational study performs much better.
Experiments with costs $C_E > 0.3$ (i.e., experiments whose research costs are above 30\% of the value of a publication) become dominated by observational studies.
This comparison reflects Proposition~\ref{prop:comparison_pap}, as high-cost experiments have high incentive costs.
Quantitatively, the underperformance of the observational study relative to the experiment is close to zero even for lower experimental research costs ($C_E < 0.3$): the expected loss from the experiment is more than $99\%$ of that of the observational study. 

To understand the magnitude of $C_E$ in different pharmaceutical industries, the right panel of Figure \ref{fig:comparison1} reports the ratio of total development cost of the drug to the expected costs (adjusting for cost of failures) using data from \citet[Columns 3 and 4 of Table 2]{sertkaya2024costs}. We interpret this as a proxy for $C_E$ for a researcher deciding whether to investigate the efficacy of a drug in a context with perfect competition (zero expected profits), which may (conservatively) capture private interests from drug companies when publication entails marketing and increased credibility \citep[e.g.][]{modi202310}.

We find substantial heterogeneity in whether an experiment or an observational study should be preferred under an optimal publication rule for the observational study. For most industries, $C_E$ is around $0.3$ or only slightly larger, suggesting that an experiment may be preferred. Three industries are however significantly above, with $C_E = 0.4$.
These results suggests that there may be welfare gains from allowing observational studies in industries in which experiments are particularly costly to implement---%
provided that publishers use different ``significance rules'' for observational studies that account for their manipulability.

\section{Conclusion}

This paper studies how researcher's incentives shape the optimal design of the scientific process.
Ignoring the researcher's incentives, it is optimal to publish the most surprising results \citep{frankel2022findings}.
When researcher's incentives matter, we show that optimal publication rules depend on private costs of research and incentives for research manipulation. 

In the absence of manipulation, we show that the planner prefers studies with larger mean-squared errors over sufficiently costly experiments. With manipulation, we show that it is optimal to (i) publish some unsurprising results and (ii) knowingly allow for manipulation at the margin. Observationally, the optimal policy would reduce the bunching of the findings around the publication cutoff. However, the optimal policy does not completely remove bunching. Even when the planner can eliminate manipulation by publishing only pre-specified experiments, this may not be the preferred policy when experiments entails large research costs.  In our application to medical studies, we highlight the importance of setting different publication rules for experimental and observational studies.

Our results have implications for the policy debate regarding the use of synthetical control groups in clinical trials.
Allowing synthetic  control groups opens the door to manipulation of the specification of the control group, similar to the manipulability of observational studies.
However, studies synthetic control groups are cheaper to implement as they can achieve similar power in a smaller experiment.
Our analysis suggests that when experiments are costly, allowing for synthetic controls may also be desirable---%
provided that publishers and approvers use different significance rules for them that account for their manipulability.

Future research should study more complex decisions by the planner and the researcher. For example, in contexts with pre-analysis plans, the planner may allow for the publication of non-prespecified findings. Finally, as our contribution lies at the intersection of econometrics and mechanism design, future research should study how other aspects of researchers' incentives impact the design of scientific communication.


\singlespacing
\bibliographystyle{chicago}
\bibliography{my_bib}

\clearpage

\appendix
\onehalfspacing

\section{Proofs omitted from Section~\ref{sec:design}}

Let $\phi$ denote the probability density function of a standard normal.

The following lemma provides a simple decomposition of the social planner's loss function conditional on the realized statistic $X$, which we use in several~proofs.

\begin{lem}[Loss function] \label{lem:loss} 
If $\Delta$ is an unbiased design, then we have
$$
\mathbb{E}\Big[\mathcal{L}_p\big(X(\Delta), \Delta, \theta\big)\Big| X(\Delta)\Big] = \Big[c_a - \frac{\eta^4}{(S_\Delta^2 + \eta^2)^2} X(\Delta)^2 \Big] p(X, \Delta) + \mathbb{E}\Big[\theta^2 \Big| X(\Delta)\Big]
$$
\end{lem}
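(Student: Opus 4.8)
The plan is to condition on the realized statistic $X(\Delta)$ and compute each of the two terms in the definition of $\mathcal{L}_p$ from Equation~\eqref{eqn:loss} separately. Recall that, since $\Delta$ is unbiased ($\beta_\Delta = 0$), the audience's posterior for $\theta$ given publication is $\mathcal{N}\!\left(\tfrac{X\eta^2}{S_\Delta^2+\eta^2}, \tfrac{S_\Delta^2\eta^2}{S_\Delta^2+\eta^2}\right)$, and this posterior is in fact correct (the audience's assumption $\beta_\Delta = 0$ holds). First I would split the expectation of the squared-error term using the law of total expectation over the publication event: with probability $p(X,\Delta)$ the audience plays $a^\star_p(X,\Delta) = \tfrac{X\eta^2}{S_\Delta^2+\eta^2}$, and with probability $1-p(X,\Delta)$ it plays $0$. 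Hence
\[
\mathbb{E}_p\!\left[(\theta - a^\star_p(X,\Delta))^2 \,\middle|\, X(\Delta)\right]
= p(X,\Delta)\,\mathbb{E}\!\left[\Big(\theta - \tfrac{X\eta^2}{S_\Delta^2+\eta^2}\Big)^2 \,\middle|\, X(\Delta)\right]
+ \bigl(1-p(X,\Delta)\bigr)\,\mathbb{E}\!\left[\theta^2 \,\middle|\, X(\Delta)\right].
\]

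Next I would evaluate the two conditional second moments. For the first, $\tfrac{X\eta^2}{S_\Delta^2+\eta^2}$ is exactly the posterior mean $\mathbb{E}[\theta \mid X(\Delta)]$, so $\mathbb{E}[(\theta - \tfrac{X\eta^2}{S_\Delta^2+\eta^2})^2 \mid X(\Delta)]$ is just the posterior variance $\tfrac{S_\Delta^2\eta^2}{S_\Delta^2+\eta^2}$, a constant in $X$. For the second term, $\mathbb{E}[\theta^2 \mid X(\Delta)] = \operatorname{Var}(\theta\mid X(\Delta)) + (\mathbb{E}[\theta\mid X(\Delta)])^2 = \tfrac{S_\Delta^2\eta^2}{S_\Delta^2+\eta^2} + \tfrac{\eta^4}{(S_\Delta^2+\eta^2)^2}X(\Delta)^2$. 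Substituting both back, the terms $\bigl(1-p(X,\Delta)\bigr)\mathbb{E}[\theta^2\mid X(\Delta)]$ and $p(X,\Delta)\tfrac{S_\Delta^2\eta^2}{S_\Delta^2+\eta^2}$ combine so that the coefficient of $p(X,\Delta)$ becomes $\tfrac{S_\Delta^2\eta^2}{S_\Delta^2+\eta^2} - \mathbb{E}[\theta^2\mid X(\Delta)] = -\tfrac{\eta^4}{(S_\Delta^2+\eta^2)^2}X(\Delta)^2$, and the residual is exactly $\mathbb{E}[\theta^2\mid X(\Delta)]$. Finally, I would incorporate the attention-cost term $-c_a\,p(X,\Delta)$ from Equation~\eqref{eqn:loss}, which contributes $+c_a$ to the coefficient of $p(X,\Delta)$ (note $\mathbb{E}_p$ acts only on the stochasticity of the publication rule, leaving $c_a p(X,\Delta)$ as is), yielding the claimed identity
\[
\mathbb{E}\Big[\mathcal{L}_p\big(X(\Delta),\Delta,\theta\big)\,\Big|\,X(\Delta)\Big]
= \Big[c_a - \tfrac{\eta^4}{(S_\Delta^2+\eta^2)^2}X(\Delta)^2\Big]p(X,\Delta) + \mathbb{E}\big[\theta^2\,\big|\,X(\Delta)\big].
\]

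This is essentially a bias--variance bookkeeping exercise, so there is no serious obstacle; the only point requiring minor care is keeping the conditioning straight — in particular that $X(\Delta)$ is the realized (possibly manipulated, but here unbiased) statistic, that the audience's posterior coincides with the Bayesian posterior precisely because $\beta_\Delta = 0$, and that $\mathbb{E}_p$ integrates out only the coin flip in the publication rule while the outer $\mathbb{E}[\,\cdot\mid X(\Delta)]$ integrates out $\theta$. I would also note that the final term $\mathbb{E}[\theta^2\mid X(\Delta)]$ does not depend on $p$, so for the purpose of optimizing over publication rules only the bracketed coefficient of $p(X,\Delta)$ matters — a remark worth flagging since subsequent proofs (e.g.\ of Proposition~\ref{prop:t_test}) exploit it.
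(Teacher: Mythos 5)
Your proof is correct and is essentially the paper's own argument: the paper's proof consists of the single observation that $\mathbb{E}[\theta\mid X(\Delta)] = \tfrac{\eta^2}{S_\Delta^2+\eta^2}X(\Delta)$, and your bias--variance bookkeeping is exactly the computation that observation is meant to summarize. The one wrinkle is the attention-cost sign: Equation~\eqref{eqn:loss} as printed has $-c_a\,p(X,\Delta)$, which taken literally would contribute $-c_a$ rather than $+c_a$ to the coefficient of $p(X,\Delta)$; like the lemma itself and everything downstream, you implicitly use the intended convention that attention costs add to the loss, so this is a typo in the paper's display rather than a gap in your argument.
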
 
\begin{proof}
The lemma follows directly from the fact that $\mathbb{E}[\theta | X(\Delta)] = \frac{\eta^2}{S_\Delta^2 + \eta^2} X(\Delta)$. 
\end{proof} 

\subsection{Proof of Proposition \ref{prop:t_test} } \label{app:proof1}

The individual rationality constraint $\mathbb{E}[v_p(X(\Delta),\Delta)] \ge 0$ can be written equivalently as $\mathbb{E}[p(X(\Delta))] \ge C_\Delta$.
Note that the objective and constraints are linear in $p$.
Using Lemma \ref{lem:loss} and placing a Lagrange multiplier $\lambda \ge 0$ on the individual rationality constraint,
the Lagrangian of the planner's problem can be written as 
$$
\mathbb{E}\bigg[\bigg(c_a - \lambda - \Big(\frac{\eta^2}{S_\Delta^2 + \eta^2} X(\Delta) \Big)^2 \bigg) p\big(X(\Delta)\big)\bigg] + \lambda C_\Delta + \eta^2 
$$
We can solve and obtain $p_\lambda(X) = 1\big\{|X| \ge \frac{S_\Delta^2+\eta^2}{\eta^2} \sqrt{(c_a - \lambda)_+}\big\}$ as the minimizer of the Lagrangian.%
\footnote{Here, we write $a_+ = \max\{a,0\}$.}
%
As $X(\Delta) \sim \mathcal{N}(0, S_\Delta^2  + \eta^2)$,
the individual rationality constraint for $p_\lambda(X)$ is
$$
2\Phi\bigg(- \sqrt{(c_a - \lambda)_+} \frac{\sqrt{S_\Delta^2 + \eta^2}}{\eta^2 }\bigg) \ge C_\Delta \Longleftrightarrow \lambda \ge c_a - \bigg(\frac{\eta^2 }{\sqrt{S_\Delta^2 + \eta^2}}\big|\Phi^{-1}\big(\tfrac{C_\Delta}{2}\big)\big|\bigg)^2. 
$$
By complementary slackness, it follows that
\[\lambda = \max\bigg\{0,c_a - \bigg(\frac{\eta^2 }{\sqrt{S_\Delta^2 + \eta^2}}\big|\Phi^{-1}\big(\tfrac{C_\Delta}{2}\big)\big|\bigg)^2\bigg\}.\]
As we then have
\[\frac{S_\Delta^2+\eta^2}{\eta^2} \sqrt{(c_a - \lambda)_+} = \min\left\{\frac{S^2_\Delta + \eta^2}{\eta^2} \sqrt{c_a},\left|\Phi^{-1}(C_\Delta/2)\right| \sqrt{S_\Delta^2 + \eta^2}\right\} = t^\star_\Delta,\]
it follows that $p_\lambda(X) = p^\star_\Delta(X)$.

\subsection{Proof of Proposition \ref{prop:cheapWorthwhile}} \label{proof:prop:cheapWorthwhile}

Since $\Delta$ is cheap, Corollary~\ref{cor:t_testCheapExpensive} implies $t^\star_\Delta = \frac{S_\Delta^2 + \eta^2}{\eta^2}$.
Proposition~\ref{prop:t_test} and Lemma~\ref{lem:loss} then~imply
\[\mathcal{L}^\star_\Delta = \mathbb{E}\bigg[\Big(c_a - \frac{\eta^4}{(S_\Delta^2 + \eta^2)^2} X(\Delta)^2\Big) 1\Big\{|X(\Delta)| \ge \frac{S_\Delta^2 + \eta^2}{\eta^2} \Big\} \bigg] + \eta^2 < \eta^2.\]

\subsection{Preliminary Steps for Analysis of General Designs}
\label{app:preliminary}

Our analysis of general designs relies on the following lemma, which provides an exact characterization of the planner's loss under a threshold rule.

\begin{lem}
\label{lem:threshold}
If $\Delta$ is an unbiased design, then for any publication rule of the form $p(x) = 1\{|x| \ge t\},$ we have
\[\mathbb{E}\big[\mathcal{L}_{p}(X(\Delta),\Delta,\theta)\big] = \eta^2 + \mathbb{P}(|X(\Delta)| \ge t) c_a - \postvarred{\Delta}\Upsilon\big(\mathbb{P}(|X(\Delta)| \ge t)\big),\]
where we write $\Upsilon(t) = 2 \Phi^{-1}\big(1 - \tfrac{t}{2}\big)\phi\big(\Phi^{-1}\big(1 - \tfrac{t}{2}\big)\big) + t.$
\end{lem}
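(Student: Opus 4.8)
The plan is to compute $\mathbb{E}\big[\mathcal{L}_{p}(X(\Delta),\Delta,\theta)\big]$ directly using the conditional loss decomposition from Lemma~\ref{lem:loss}. Taking expectations over $X(\Delta)$ in Lemma~\ref{lem:loss} and using $\mathbb{E}[\theta^2] = \eta^2$, we get
\[\mathbb{E}\big[\mathcal{L}_{p}(X(\Delta),\Delta,\theta)\big] = \eta^2 + c_a\, \mathbb{P}(|X(\Delta)| \ge t) - \frac{\eta^4}{(S_\Delta^2+\eta^2)^2}\, \mathbb{E}\big[X(\Delta)^2\, 1\{|X(\Delta)| \ge t\}\big].\]
Since $\postvarred{\Delta} = \eta^4/(S_\Delta^2+\eta^2)$, the coefficient on the truncated second moment is $\postvarred{\Delta}/(S_\Delta^2+\eta^2)$. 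So the whole task reduces to showing
\[\frac{\mathbb{E}\big[X(\Delta)^2\, 1\{|X(\Delta)| \ge t\}\big]}{S_\Delta^2+\eta^2} = \Upsilon\big(\mathbb{P}(|X(\Delta)| \ge t)\big),\]
where $\Upsilon(s) = 2\Phi^{-1}(1-\tfrac{s}{2})\phi\big(\Phi^{-1}(1-\tfrac{s}{2})\big) + s$.

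The second step is a standard truncated-moment computation for a centered normal. Write $\sigma^2 = S_\Delta^2 + \eta^2$, so $X(\Delta) \sim \mathcal{N}(0,\sigma^2)$ and $Z = X(\Delta)/\sigma \sim \mathcal{N}(0,1)$. Then $\mathbb{E}[X(\Delta)^2 1\{|X(\Delta)|\ge t\}]/\sigma^2 = \mathbb{E}[Z^2 1\{|Z| \ge t/\sigma\}]$. Setting $c = t/\sigma$ and using $\int_c^\infty z^2 \phi(z)\,dz = c\,\phi(c) + (1-\Phi(c))$ (integration by parts, since $z^2\phi(z) = -z\,\phi'(z)$ and $\int \phi = 1-\Phi$), symmetry gives $\mathbb{E}[Z^2 1\{|Z|\ge c\}] = 2c\,\phi(c) + 2(1-\Phi(c))$. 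Now let $s = \mathbb{P}(|X(\Delta)|\ge t) = \mathbb{P}(|Z| \ge c) = 2(1-\Phi(c))$, so that $1 - \Phi(c) = s/2$, hence $c = \Phi^{-1}(1-\tfrac{s}{2})$ and $\phi(c) = \phi\big(\Phi^{-1}(1-\tfrac{s}{2})\big)$. Substituting, $\mathbb{E}[Z^2 1\{|Z|\ge c\}] = 2\Phi^{-1}(1-\tfrac{s}{2})\phi\big(\Phi^{-1}(1-\tfrac{s}{2})\big) + s = \Upsilon(s)$, which is exactly what is needed. Combining with the first display completes the proof.

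There is no real obstacle here; the only things to be careful about are (i) correctly identifying the relation $s = 2(1-\Phi(c))$ so that the argument $\Phi^{-1}(1-\tfrac{s}{2})$ matches $c$, and (ii) the integration-by-parts identity $\int_c^\infty z^2\phi(z)\,dz = c\phi(c) + (1-\Phi(c))$, which uses $\phi'(z) = -z\phi(z)$. One should also note the degenerate edge case $t \le 0$ (where $c \le 0$, $s \ge 1$, and $\Phi^{-1}(1-\tfrac{s}{2})$ can be $\le 0$) still works since all the identities above are valid for any real $c$; if one prefers, it suffices to treat $t > 0$ as the relevant case for the applications. I would present the argument as: invoke Lemma~\ref{lem:loss}, take expectations, then carry out the truncated second-moment calculation and re-express in terms of $s$.
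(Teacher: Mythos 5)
Your proof is correct and follows essentially the same route as the paper's: apply Lemma~\ref{lem:loss}, take expectations, compute the truncated second moment of the centered Gaussian $X(\Delta) \sim \mathcal{N}(0, S_\Delta^2 + \eta^2)$, and re-express the cutoff via $t/\sqrt{S_\Delta^2+\eta^2} = \Phi^{-1}\big(1 - \tfrac{1}{2}\mathbb{P}(|X(\Delta)|\ge t)\big)$. The only difference is cosmetic --- you spell out the integration-by-parts identity that the paper invokes as a standard property of truncated Gaussians.
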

\begin{proof}
By Lemma~\ref{lem:loss}, we have
$$
\mathbb{E}\big[\mathcal{L}_{p}(X(\Delta), \Delta, \theta) \big| X(\Delta)\big]  = \mathbb{E}\big[\theta^2 \big| X(\Delta)\big] + c_a 1\big\{|X(\Delta)| \ge t\big\} - \Big(\frac{\eta^2 X(\Delta)}{S_\Delta^2 + \eta^2}\Big)^2 1\big\{|X(\Delta)| \ge t\big\}.
$$
Note that
\[
\mathbb{E}\left[ \Big(\frac{\eta^2 X(\Delta)}{S_\Delta^2 + \eta^2}\Big)^2 1\big\{|X(\Delta)| \ge t \big\} \right]
= \underbrace{\mathbb{E}\left[\Big(\frac{\eta^2 X(\Delta)}{S_\Delta^2 + \eta^2}\Big)^2 \,\bigg|\,  |X(\Delta)| \ge t \right]}_{(I)} \mathbb{P}\left(|X(\Delta)| \ge t \right).
\]
As $X(\Delta) \sim \mathcal{N}(0, S_\Delta^2 + \eta^2)$, it follows from the symmetry of the Gaussian distribution and properties of truncated Gaussian distributions that
$$
(I) = \mathbb{E}\left[\Big(\frac{\eta^2 X(\Delta)}{S_\Delta^2 + \eta^2}\Big)^2 \,\bigg|\, X(\Delta) \ge t\right] =  \frac{\eta^4}{S_\Delta^2 + \eta^2} \bigg[\frac{1}{\mathbb{P}\left(X(\Delta) \ge t  \right)} \frac{t}{\sqrt{S_\Delta^2+\eta^2}} \phi\Big(\frac{t}{\sqrt{S_\Delta^2+\eta^2}}\Big)+1\bigg]. 
$$

Taking expectations and using $\mathbb{P}(|X(\Delta)| \ge t) = 2 \mathbb{P}(X(\Delta) \ge t)$, it follows that
\begin{align*}
&\mathbb{E}\left[\mathcal{L}_{p}(X(\Delta), \Delta, \theta) \right]\\
&\quad = \eta^2 + c_a \mathbb{P}\left(|X(\Delta)| \ge t\right) - \frac{2\eta^4}{S_\Delta^2 + \eta^2} \frac{t}{\sqrt{S_\Delta^2+\eta^2}} \phi\Big(\frac{t}{\sqrt{S_\Delta^2+\eta^2}}\Big) - \frac{\eta^4}{S_\Delta^2 + \eta^2} \mathbb{P}\left(|X(\Delta)| \ge t  \right)\\
&\quad = \eta^2 + c_a \mathbb{P}\left(|X(\Delta)| \ge t  \right) - \postvarred{\Delta} \Upsilon\big(\mathbb{P}\left(|X(\Delta)| \ge t\right)\big),
\end{align*}
where the last equality holds as $X(\Delta) \sim \mathcal{N}(0,S_\Delta^2+\eta^2).$
\end{proof}

Using Lemma~\ref{lem:threshold}, we can characterize the planner's optimal loss.

\begin{lem}
\label{lem:taylor_large} 
If $\Delta$ is an unbiased design, then letting $\Upsilon$ be as in Lemma~\ref{lem:threshold}, we have
\[\mathcal{L}^\star_\Delta = \begin{cases}
\eta^2 + C_\Delta c_a - \postvarred{\Delta}\Upsilon(C_\Delta) &\text{if } \Delta \text{ is expensive}\\
\eta^2 + \mathbb{P}(|X(\Delta)| \ge \gamma^\star_\Delta) c_a - \postvarred{\Delta}\Upsilon(\mathbb{P}(|X(\Delta)| \ge \gamma^\star_\Delta)) &\text{if } \Delta \text{ is cheap}
\end{cases}.\]
Moreover, $\mathcal{L}^\star_\Delta$ is strictly decreasing in $\postvarred{\Delta}$ and non-decreasing in $C_\Delta$.
\end{lem}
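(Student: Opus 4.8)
The plan is to prove Lemma~\ref{lem:taylor_large} by combining Proposition~\ref{prop:t_test} (and its Corollary~\ref{cor:t_testCheapExpensive}) with the exact threshold-loss formula in Lemma~\ref{lem:threshold}, and then doing two monotonicity arguments. First I would recall that the constrained optimal rule $p^\star_\Delta$ is the threshold rule $1\{|X| \ge t^\star_\Delta\}$; so $\mathcal{L}^\star_\Delta = \mathbb{E}[\mathcal{L}_{p^\star_\Delta}(X(\Delta),\Delta,\theta)]$ is computed by plugging $t = t^\star_\Delta$ into Lemma~\ref{lem:threshold}. In the cheap case, Corollary~\ref{cor:t_testCheapExpensive}(a) gives $t^\star_\Delta = \gamma^\star_\Delta$, which directly yields the second branch. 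In the expensive case, Corollary~\ref{cor:t_testCheapExpensive}(b) gives $t^\star_\Delta = |\Phi^{-1}(C_\Delta/2)|\sqrt{S_\Delta^2 + \eta^2}$; since $X(\Delta)\sim\mathcal{N}(0,S_\Delta^2+\eta^2)$, one checks $\mathbb{P}(|X(\Delta)| \ge t^\star_\Delta) = 2\Phi(-|\Phi^{-1}(C_\Delta/2)|) = C_\Delta$ (using $C_\Delta \le 1$ so that $\Phi^{-1}(C_\Delta/2)\le 0$), and substituting $\mathbb{P}(|X(\Delta)|\ge t^\star_\Delta) = C_\Delta$ into Lemma~\ref{lem:threshold} gives the first branch.

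For the monotonicity in $\postvarred{\Delta}$: in both branches $\mathcal{L}^\star_\Delta$ has the form $\eta^2 + q c_a - \postvarred{\Delta}\,\Upsilon(q)$ where $q = \mathbb{P}(|X(\Delta)|\ge t^\star_\Delta) \in (0,1]$ is the \emph{ex ante} publication probability. The key sub-fact is $\Upsilon(q) > 0$ for $q \in (0,1]$: writing $z = \Phi^{-1}(1 - q/2) \ge 0$, we have $\Upsilon(q) = 2 z\phi(z) + 2(1 - \Phi(z)) > 0$ since $z\ge 0$ and $\Phi(z) < 1$. However, one must be careful that changing $\postvarred{\Delta}$ (equivalently $S_\Delta^2$) also changes which branch applies and changes $q$ in the cheap branch; the cleanest route is to use the variational characterization $\mathcal{L}^\star_\Delta = \min_{p}\{\,\cdot\,\}$ and an envelope/direct-comparison argument: for fixed feasible $p$, the map $\postvarred{\Delta}\mapsto \mathbb{E}[\mathcal{L}_p(X(\Delta),\Delta,\theta)]$ is affine with slope $-\Upsilon(\mathbb{P}(|X|\ge t)) < 0$ when $p$ is the relevant threshold rule, and the minimum of a family of strictly decreasing functions is strictly decreasing — though this requires care because the distribution of $X(\Delta)$ itself depends on $S_\Delta^2$. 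I expect this to be the main obstacle: disentangling the dependence of $\mathcal{L}^\star_\Delta$ on $\postvarred{\Delta}$ when $S_\Delta^2$ enters both the posterior variance reduction and the sampling distribution. I would handle it by substituting $\postvarred{\Delta} = \eta^4/(S_\Delta^2+\eta^2)$ and reparametrizing so that $q$ and $\postvarred{\Delta}$ are the free variables, showing $\partial \mathcal{L}^\star_\Delta/\partial \postvarred{\Delta} = -\Upsilon(q) < 0$ directly in the expensive branch (where $q = C_\Delta$ is held fixed), and in the cheap branch using that the derivative of $q c_a - \postvarred{\Delta}\Upsilon(q)$ with respect to $\postvarred{\Delta}$ at the optimal $q = \gamma^\star_\Delta$-induced value picks up no first-order term from the change in $q$ by the first-order condition characterizing $\gamma^\star_\Delta$ (an envelope argument), leaving only $-\Upsilon(q) < 0$.

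For monotonicity in $C_\Delta$ (with $S_\Delta^2$, hence $\postvarred{\Delta}$ and the law of $X(\Delta)$, held fixed): in the cheap range, $\mathcal{L}^\star_\Delta$ does not depend on $C_\Delta$ at all, so it is (weakly) constant there; in the expensive range, $\mathcal{L}^\star_\Delta = \eta^2 + C_\Delta c_a - \postvarred{\Delta}\Upsilon(C_\Delta)$, and differentiating gives $c_a - \postvarred{\Delta}\Upsilon'(C_\Delta)$. One computes $\Upsilon'(q) = \tfrac{d}{dq}\big(2z\phi(z) + q\big)$ with $z = \Phi^{-1}(1 - q/2)$ and $dz/dq = -1/(2\phi(z))$, giving $\Upsilon'(q) = 1 - (1 - z^2) = z^2 \ge 0$ (using $\phi'(z) = -z\phi(z)$), wait — more carefully, $\tfrac{d}{dq}(2z\phi(z)) = 2(\phi(z) + z\phi'(z))\tfrac{dz}{dq} = 2\phi(z)(1 - z^2)\cdot(-\tfrac{1}{2\phi(z)}) = -(1 - z^2) = z^2 - 1$, so $\Upsilon'(q) = z^2 - 1 + 1 = z^2 \ge 0$. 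So $\partial\mathcal{L}^\star_\Delta/\partial C_\Delta = c_a - \postvarred{\Delta} z^2$ where $z = \Phi^{-1}(1 - C_\Delta/2)$; I then need to show this is $\ge 0$ on the expensive range. The expensiveness condition $C_\Delta \ge \mathbb{P}(|X(\Delta)|\ge\gamma^\star_\Delta)$ translates, via $\gamma^\star_\Delta = \tfrac{S_\Delta^2+\eta^2}{\eta^2}\sqrt{c_a}$ and the Gaussian law, into $z\sqrt{S_\Delta^2+\eta^2} \le \gamma^\star_\Delta$, i.e.\ $z \le \tfrac{\sqrt{S_\Delta^2+\eta^2}}{\eta^2}\sqrt{c_a}$, hence $\postvarred{\Delta} z^2 \le \tfrac{\eta^4}{S_\Delta^2+\eta^2}\cdot\tfrac{S_\Delta^2+\eta^2}{\eta^4}c_a = c_a$, giving exactly $\partial\mathcal{L}^\star_\Delta/\partial C_\Delta \ge 0$. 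Finally I would note continuity of $\mathcal{L}^\star_\Delta$ at the cheap/expensive boundary (both expressions agree there since $C_\Delta = \mathbb{P}(|X(\Delta)|\ge\gamma^\star_\Delta)$ at the threshold) to conclude $\mathcal{L}^\star_\Delta$ is non-decreasing in $C_\Delta$ globally, and strictly decreasing in $\postvarred{\Delta}$, completing the proof.
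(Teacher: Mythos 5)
Your derivation of the two-branch formula is exactly the paper's: plug the optimal cutoff from Corollary~\ref{cor:t_testCheapExpensive} into Lemma~\ref{lem:threshold}, using $\mathbb{P}(|X(\Delta)|\ge t^\star_\Delta)=C_\Delta$ in the expensive case. Where you diverge is the monotonicity claims. The paper dispatches both in two lines via the variational characterization $\mathcal{L}^\star_\Delta=\eta^2+\min_{q\in[C_\Delta,1]}\{qc_a-\postvarred{\Delta}\,\Upsilon(q)\}$: for each fixed feasible $q$ the objective is strictly decreasing in $\postvarred{\Delta}$ (since $\Upsilon(q)>0$), so the minimum is; and raising $C_\Delta$ only shrinks the feasible set, so the minimum is non-decreasing. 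You instead differentiate each branch explicitly --- an envelope argument at the interior optimizer $q=\mathbb{P}(|X(\Delta)|\ge\gamma^\star_\Delta)$ in the cheap branch, and the verification $c_a-\postvarred{\Delta}\,\Upsilon'(C_\Delta)=c_a-\postvarred{\Delta}\,\Phi^{-1}\bigl(1-\tfrac{C_\Delta}{2}\bigr)^2\ge 0$ on the expensive range, which you correctly reduce to the expensiveness condition --- then patch the branches together by continuity at the boundary. Both routes are valid and rest on the same two facts ($\Upsilon>0$ and $\Upsilon'=z^2$, which is Lemma~\ref{lem:taylorUpsilon}(i)); your version does more calculus but makes explicit the sign checks that the constrained-minimum formulation hides, and you correctly identified (and resolved, by reparametrizing from $t$ to $q$) the one genuine subtlety, namely that $S_\Delta^2$ enters both $\postvarred{\Delta}$ and the sampling distribution of $X(\Delta)$ --- a point the paper's $\min_{t\ge 0}$ phrasing glosses over.
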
 
\begin{proof}
By Corollary~\ref{cor:t_testCheapExpensive}, we have $\mathbb{P}\left(|X(\Delta)| \ge t^\star_\Delta  \right) = \mathbb{P}\left(|X(\Delta)| \ge \gamma^\star_\Delta  \right)$ if $\Delta$ is cheap and $\mathbb{P}\left(|X(\Delta)| \ge t^\star_\Delta  \right) = C_\Delta$ if $\Delta$ is expensive.
Hence, the first part of the lemma follows from Lemma~\ref{lem:threshold}. 
Lemma~\ref{lem:threshold} also implies that for all $t \ge 0$, the loss $\mathbb{E}[\mathcal{L}_{1\{|x| \ge t\}}(X(\Delta),\Delta,\theta)]$ is strictly decreasing in $\postvarred{\Delta}$ and non-decreasing in $C_\Delta$.
As Proposition~\ref{prop:t_test} implies that
\[\mathcal{L}^\star_\Delta = \min_{t \ge 0} \big\{ \mathbb{E}[\mathcal{L}_{1\{|x| \ge t\}}(X(\Delta),\Delta,\theta)]\big\},\]
and the second part of the lemma follows.
\end{proof}

We next provide a few properties of the function $\Upsilon$ that will be used in the proofs.

\begin{lem} \label{lem:taylorUpsilon} 
Let $\Upsilon(t)$ be defined as in Lemma~\ref{lem:threshold}.
\begin{enumerate}[label=(\roman*)]
\item $\Upsilon(t)$ is strictly increasing on $(0,1)$ with derivative $\Upsilon'(t) = \Phi^{-1}\big(1 - \frac{t}{2}\big)^2$. \label{part:nondec}
\item $1 > \Upsilon(t) > 1 - (1-t)^3$ for all $0 < t < 1$. \label{part:upsilonBounds}
\item $\Upsilon(t) > 1 - \frac{1}{3}(1 - t) \Phi^{-1}\big(1 - \frac{t}{2}\big)^2$ for all $0 < t < 1$. \label{part:upsilonUglyLowerBound}
\end{enumerate} 
\end{lem}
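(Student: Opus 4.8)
\textbf{Proof proposal for Lemma~\ref{lem:taylorUpsilon}.}

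The plan is to work throughout with the substitution $q = \Phi^{-1}\!\bigl(1 - \tfrac{t}{2}\bigr)$, so that $t = 2(1-\Phi(q)) = 2\bar\Phi(q)$ and $q \in (0,\infty)$ as $t$ ranges over $(0,1)$. In this notation $\Upsilon(t) = 2q\phi(q) + t$. First I would establish part~\ref{part:nondec}: differentiate $\Upsilon$ in $t$ using $\frac{dq}{dt} = -\frac{1}{2\phi(q)}$ (from $\frac{dt}{dq} = -2\phi(q)$) and the identity $\phi'(q) = -q\phi(q)$. A direct computation gives $\frac{d}{dt}\bigl(2q\phi(q)\bigr) = 2\bigl(\phi(q) + q\phi'(q)\bigr)\frac{dq}{dt} = 2\phi(q)(1 - q^2)\cdot\bigl(-\frac{1}{2\phi(q)}\bigr) = q^2 - 1$, so $\Upsilon'(t) = (q^2 - 1) + 1 = q^2 = \Phi^{-1}\bigl(1-\tfrac{t}{2}\bigr)^2 > 0$ on $(0,1)$. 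This proves \ref{part:nondec}.

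For part~\ref{part:upsilonBounds}, the upper bound $\Upsilon(t) < 1$: since $\Upsilon' = q^2 > 0$ and $\Upsilon$ is continuous, it suffices to check $\lim_{t\to 1^-}\Upsilon(t) = 1$, i.e.\ $q\to 0^+$ gives $2q\phi(q)\to 0$ and $t\to 1$. For the lower bound $\Upsilon(t) > 1 - (1-t)^3$, define $g(t) = \Upsilon(t) - 1 + (1-t)^3$ and show $g(t) > 0$ on $(0,1)$ via $g(1^-) = 0$ and $g'(t) = q^2 - 3(1-t)^2 < 0$ on $(0,1)$ — so $g$ is decreasing and hence positive to the left of $1$. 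The inequality $q^2 < 3(1-t)^2 = 3\bigl(1 - 2\bar\Phi(q)\bigr)^2 = 3(2\Phi(q)-1)^2$ for $q > 0$ is the Gaussian estimate doing the work here; I would verify it by setting $h(q) = \sqrt{3}\,(2\Phi(q)-1) - q$, noting $h(0) = 0$ and $h'(q) = 2\sqrt{3}\,\phi(q) - 1 = 2\sqrt{3}\,\frac{1}{\sqrt{2\pi}}e^{-q^2/2} - 1$, which is positive for small $q$ (since $2\sqrt{3}/\sqrt{2\pi} = \sqrt{6/\pi} > 1$) and then becomes negative; one checks $h$ stays positive on $(0,\infty)$ by confirming $h(q)\to\infty$... actually more carefully $h(q)\to \sqrt{3}-\infty$, so this needs a sign analysis: $h$ increases then decreases, $h(0)=0$, $h(\infty) = -\infty$, so positivity on all of $(0,\infty)$ fails. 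The fix: we only need $q^2 < 3(1-t)^2$ where additionally $t\in(0,1)$; as $q\to\infty$, $t\to 0$ and $3(1-t)^2 \to 3$ while $q^2\to\infty$, so the bald inequality is \emph{false} for large $q$. Hence part~\ref{part:upsilonBounds}'s lower bound cannot be proved this way for all $t$ — which tells me the actual argument must be that $g(1^-)=0$ and I must instead show $g'(t) < 0$ only fails to be the right route; the correct route is to integrate $\Upsilon'$: $\Upsilon(t) = 1 - \int_t^1 q(s)^2\,ds$ where $q(s) = \Phi^{-1}(1-s/2)$, and bound $\int_t^1 q(s)^2\,ds < (1-t)^3$ using that $q(s)^2 < 3(1-s)^2$ \emph{near $s=1$} where it holds and a more careful global estimate elsewhere. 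This is the step I expect to be the main obstacle and where I would spend the most care — getting a clean, fully rigorous Gaussian tail bound $\int_t^1 \Phi^{-1}(1-s/2)^2\,ds < (1-t)^3$.

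For part~\ref{part:upsilonUglyLowerBound}, I would again use $\Upsilon(t) = 1 - \int_t^1 q(s)^2\,ds$ and the fact that $q(s) = \Phi^{-1}(1-s/2)$ is decreasing in $s$, so $q(s)^2 \le q(t)^2 = \Phi^{-1}(1-t/2)^2$ for $s \ge t$; hence $\int_t^1 q(s)^2\,ds$ is \emph{not} immediately bounded by $\frac13(1-t)q(t)^2$ this way (that would need an average-value argument). Instead, the natural bound is $\int_t^1 q(s)^2 ds \le (1-t) q(t)^2$, which only gives $\Upsilon(t) \ge 1 - (1-t)q(t)^2$, weaker than claimed by a factor $3$. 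To recover the factor $\tfrac13$ I would exploit convexity/concavity of $s \mapsto q(s)^2$ or, more robustly, compare with the leading-order behavior near $s = 1$: near $s=1$, $q(s) \approx \sqrt{\tfrac{\pi}{2}}(1-s)$ so $q(s)^2 \approx \tfrac{\pi}{2}(1-s)^2$ and $\int_t^1 \approx \tfrac{\pi}{6}(1-t)^3 \approx \tfrac13 (1-t)\cdot q(t)^2$, consistent with the claim; the rigorous version needs showing $q(s)^2 / (1-s)^2$ is monotone (decreasing in $s$) so that $q(s)^2 \le \frac{(1-s)^2}{(1-t)^2} q(t)^2$ on $[t,1]$, whence $\int_t^1 q(s)^2 ds \le \frac{q(t)^2}{(1-t)^2}\int_t^1 (1-s)^2 ds = \frac{q(t)^2}{(1-t)^2}\cdot\frac{(1-t)^3}{3} = \frac13 (1-t) q(t)^2$, giving exactly \ref{part:upsilonUglyLowerBound}. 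Establishing that $s \mapsto q(s)^2/(1-s)^2$ is monotone on $(0,1)$ — equivalently a monotonicity property of the Gaussian inverse-CDF — is the technical crux of this part, and I would prove it by differentiating and reducing to a known inequality relating $\Phi^{-1}$, $\phi$, and $\bar\Phi$ (a Mills-ratio-type bound).
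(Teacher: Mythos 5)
Part~\ref{part:nondec} of your argument is complete and coincides with the paper's computation ($\Upsilon'(t)=z(t)^2$ via $z'(t)=-1/(2\phi(z(t)))$ and $\phi'(x)=-x\phi(x)$), as does the upper bound in part~\ref{part:upsilonBounds}. The genuine gap is the lower bound $\Upsilon(t)>1-(1-t)^3$. You correctly diagnose that the pointwise derivative comparison $z(t)^2<3(1-t)^2$ fails for small $t$ (where $z(t)\to\infty$ while $3(1-t)^2\to 3$), but you then stop at ``I would need a more careful global estimate'' without supplying one --- and the integral reformulation $\int_t^1 z(s)^2\,ds<(1-t)^3$ is not easier than the original claim, so nothing has been gained. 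The paper's resolution is a two-regime argument: for $t\ge\frac16$ the pointwise bound $z(t)<\sqrt3\,(1-t)$ does hold (proved via concavity of $\Phi$ on $\mathbb R_{\ge0}$ together with an endpoint check at $z=\Phi^{-1}(11/12)$), and this gives the claim on $[\frac16,1)$ by integrating from $t=1$. For $0<t\le\frac16$ one instead shows that $g(t)=\Upsilon(t)-\bigl(1-(1-t)^3\bigr)$ is \emph{concave} there, because $g''(t)=2z(t)z'(t)+6(1-t)<0$ follows from $6\phi(z)\bigl(2\Phi(z)-1\bigr)<z$ for $z\ge\Phi^{-1}(11/12)$; since $g(0)=0$ and $g(1/6)>0$, concavity forces $g>0$ on $(0,\tfrac16]$. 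This second-derivative/concavity step on the small-$t$ regime is the missing idea in your proposal.

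For part~\ref{part:upsilonUglyLowerBound}, your strategy is sound and would work, but you again defer the key step. The monotonicity of $s\mapsto z(s)^2/(1-s)^2$ that you call ``the technical crux'' reduces, after differentiating and using $z'(s)=-1/(2\phi(z(s)))$, to the inequality $z\phi(z)<\Phi(z)-\tfrac12$ for $z>0$, which is immediate because $\Phi(z)-\tfrac12=\int_0^z\phi(u)\,du>z\phi(z)$ when $\phi$ is decreasing on $[0,z]$. This is exactly the inequality the paper uses; it then writes the conclusion as a derivative comparison $\Upsilon'(t)<\frac{d}{dt}\bigl(1-\frac13(1-t)z(t)^2\bigr)$ and integrates from $t=1$, rather than integrating $z(s)^2$ against the bound $\frac{(1-s)^2}{(1-t)^2}z(t)^2$ as you propose --- the two are equivalent. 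So part~\ref{part:upsilonUglyLowerBound} is essentially recoverable from your sketch, but as written both parts~\ref{part:upsilonBounds} (lower bound) and~\ref{part:upsilonUglyLowerBound} are left incomplete, with part~\ref{part:upsilonBounds} requiring an idea you did not find.
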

\begin{proof}
Define $z(t) = \Phi^{-1}\!\big(1 - \tfrac{t}{2}\big)$, so that $\Upsilon(t) = 2z(t) \phi\big(z(t)\big) + t$.
By the chain rule,
we have
\[\Upsilon'(t) = 2z'(t) \phi\big(z(t)\big) + 2z(t) \phi'\big(z(t)\big) z'(t) + 1.\]
Recalling that $\phi'(x) = -x\phi(x)$, we have
\[\Upsilon'(t) = 2z'(t) \phi\big(z(t)\big) - 2z(t)^2 \phi\big(z(t)\big) z'(t) + 1 = 2 z'(t) \phi\big(z(t)\big)\big[1 - z(t)^2\big] + 1.\]
The chain rule also implies that
\begin{equation}
\label{eq:zPr}
z'(t) = -\frac{1}{2 \phi(z(t))},
\end{equation}
and it follows that
\begin{equation}
\label{eq:UpsilonPr}
\Upsilon'(t) = -\big(1 - z(t)^2\big) + 1 = z(t)^2.
\end{equation}
As $z(t) > 0$ for $t < 1$, Part~\ref{part:nondec} follows.
As $\Upsilon(1) = 1$, it follows that $\Upsilon(t) < 1$ for all $0 < t < 1$.

To bound $\Upsilon(t)$ from below for $t \ge \frac{1}{6}$, note that for $z = 0$, we have
\[\big(2\Phi(z) - 1\big)\sqrt{3} - z = 0.\]
For $z = \Phi^{-1}\big(\frac{11}{12}\big)$, noting that $\Phi^{-1}\big(\frac{11}{12}\big) < 1.4$, a simple numerical calculation shows that
\[\big(2\Phi(z) - 1\big)\sqrt{3} - z > \frac{5\sqrt{3}}{6} - 1.4 > 0.\] 
As $\phi$ is decreasing on $\mathbb{R}_{\ge 0}$, $\Phi$ is concave on $\mathbb{R}_{\ge 0}$.
It follows that for $0 < z \le \Phi^{-1}\big(\frac{11}{12}\big),$~we~have
\[\big(2\Phi(z)-1\big)\sqrt{3} > z.\]
Taking $z = z(t)$, it follows that
$(1-t)\sqrt{3} > z(t)$ for $\frac{1}{6} \le t < 1$.
In particular, for $\frac{1}{6} \le t < 1$, using \eqref{eq:UpsilonPr} yields that 
\[\Upsilon'(t) = z(t)^2 < 3(1-t)^2 = \frac{d}{dt} \Big(1 - (1-t)^3\Big).\]
Since $\Upsilon(1) = 1$, we have $\Upsilon(t) > 1 - (1-t)^3$ for $\frac{1}{6} \le t < 1$.

To bound $\Upsilon(t)$  from below for $0 < t \le \frac{1}{6}$,
note that $\Phi^{-1}\big(\frac{11}{12}\big) > 1$ and $\phi\big(\Phi^{-1}\big(\frac{11}{12}\big)\big) < \frac{1}{6}$ by simple numerical calculations.
As $\phi$ is decreasing on $\mathbb{R}_{\ge 0}$,
for $z \ge \Phi^{-1}\big(\frac{11}{12}\big)$, we then have
\[6\phi(z) \big(2\Phi(z) - 1\big) < \big(2\Phi(z) - 1\big) < 1 < z.\]
Taking $z = z(t)$, it follows that
$6 \phi(z(t)) (1-t) < z(t)$ for $0 < t \le \frac{1}{6}$.
Using \eqref{eq:zPr} then implies that $6(1-t) < -2z(t)z'(t)$ for $0 < t \le \frac{1}{6}$.
Differentiating \eqref{eq:UpsilonPr} then yields that for $0 < t \le \frac{1}{6}$, we have
\[\frac{d^2}{dt^2}\Big(\Upsilon(t) - \big(1 - (1-t)^3\big)\Big) = 2z(t)z'(t) + 6(1-t) < 0.\]
Hence, the function $\Upsilon(t) - \big(1 - (1-t)^3\big)$ is concave on $[0,\frac{1}{6}]$.
The previous paragraph showed that the function is positive at $t = \frac{1}{6}$, and a straightforward calculation shows that it vanishes at $t = 0$.
Hence, the function is positive for $0 < t \le \frac{1}{6}$, completing the proof of Part~\ref{part:upsilonBounds}.

To prove Part~\ref{part:upsilonUglyLowerBound},
note that as $\phi$ is decreasing on $\mathbb{R}_{\ge 0}$, we have that $\Phi(z) > \frac{1}{2} + z \phi(z)$ for $z > 0$.
Taking $z = z(t)$, it follows that for $0 < t < 1$, we have
\[\frac{1-t}{2} = \Phi\big(z(t)\big) - \frac{1}{2} > z(t) \phi\big(z(t)\big).\]
Using \eqref{eq:zPr}, we therefore have that $-(1-t)z'(t) > z(t)$ for $0 < t < 1$.
Using \eqref{eq:UpsilonPr} then yields that for $0 < t < 1$, we have
\[\Upsilon'(t) = z(t)^2 < \frac{1}{3} z(t)^2 - \frac{2}{3} (1-t) z(t) z'(t) = \frac{d}{dt} \Big(1 - \frac{1}{3}(1-t)z(t)^2\Big).\]
Since $\Upsilon(1) = 1$, Part~\ref{part:upsilonUglyLowerBound} follows by the definition of $z(t)$.
\end{proof}

\subsection{Proof of Proposition \ref{cor:worthwhileNeccSuff} } \label{proof:cor:worthwhileNeccSuff}

The following characterization of worthwhile designs is immediate from Lemma~\ref{lem:taylor_large}.

\begin{lem}
\label{lem:worthwhileNeccSuffUgly}
Let $\Upsilon(t)$ be as defined in Lemma~\ref{lem:taylor_large}.
An expensive design $\Delta$ is worthwhile if and only if
$\Upsilon(C_\Delta)\postvarred{\Delta} \ge C_\Delta c_a.$
\end{lem}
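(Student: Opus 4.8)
The plan is to derive the stated equivalence directly from the expensive-design branch of Lemma~\ref{lem:taylor_large}. By definition, the expensive design $\Delta$ is worthwhile precisely when $\mathcal{L}^\star_\Delta \le \eta^2$. Lemma~\ref{lem:taylor_large} gives the closed form $\mathcal{L}^\star_\Delta = \eta^2 + C_\Delta c_a - \postvarred{\Delta}\,\Upsilon(C_\Delta)$ in the expensive case, so I would substitute this into the inequality $\mathcal{L}^\star_\Delta \le \eta^2$.

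The remaining step is a one-line rearrangement: subtracting $\eta^2$ from both sides yields $C_\Delta c_a - \postvarred{\Delta}\,\Upsilon(C_\Delta) \le 0$, which is equivalent to $\Upsilon(C_\Delta)\,\postvarred{\Delta} \ge C_\Delta c_a$. This establishes both directions of the ``if and only if'' simultaneously, since each manipulation is an equivalence.

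There is essentially no obstacle here — the lemma is, as the text states, immediate once the formula from Lemma~\ref{lem:taylor_large} is in hand; the only thing to be careful about is invoking the correct (expensive) branch of that formula and recording that the equivalence is exact, with no approximation or remainder term (in contrast to Proposition~\ref{cor:worthwhileNeccSuff}, which replaces $\Upsilon(C_\Delta)$ by explicit bounds via Lemma~\ref{lem:taylorUpsilon}).
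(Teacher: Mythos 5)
Your proposal is correct and matches the paper's argument exactly: the paper declares the lemma ``immediate from Lemma~\ref{lem:taylor_large},'' and the intended derivation is precisely your substitution of the expensive-case formula $\mathcal{L}^\star_\Delta = \eta^2 + C_\Delta c_a - \postvarred{\Delta}\Upsilon(C_\Delta)$ into the worthwhileness condition $\mathcal{L}^\star_\Delta \le \eta^2$ followed by the one-line rearrangement. Nothing is missing.
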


To derive Proposition~\ref{cor:worthwhileNeccSuff} from Lemma~\ref{lem:worthwhileNeccSuffUgly}, note that Lemma~\ref{lem:taylorUpsilon}\ref{part:upsilonBounds} implies that
\begin{align*}
\Upsilon(C_\Delta)\postvarred{\Delta} &\ge \postvarred{\Delta} \big(1 - (1-C_\Delta)^3\big)
\ge \postvarred{\Delta} - \eta^2(1-C_\Delta)^3\\
\Upsilon(C_\Delta)\postvarred{\Delta} &\le \postvarred{\Delta}
\end{align*}
where the second inequality on the first line holds as $\postvarred{\Delta} \le \eta^2$.

\subsection{Proof of Proposition \ref{prop:cheapPreferred}} \label{proof:prop:cheapPreferred}

Consider an unbiased design $E'$ with $S_{E'} = S_E$ and $C_{E'} = 0$.
As $E'$ is also cheap,
Lemma~\ref{lem:taylor_large} implies $\mathcal{L}^\star_{E'} = \mathcal{L}^\star_E$.
Since $S_E^2 \le S_O^2$ 
(in Setting~\ref{set:1}),
we have $\postvarred{E'} = \postvarred{E} > \postvarred{O}$.
Hence, Lemma~\ref{lem:taylor_large} implies $\mathcal{L}^\star_{E'} \le \mathcal{L}^\star_O$.
It follows that $\mathcal{L}^\star_E = \mathcal{L}^\star_{E'} \le \mathcal{L}^\star_O$.


\subsection{Proof of Proposition \ref{prop:cpPreferred}}  \label{proof:prop:cpPreferred}

It follows from Lemma~\ref{lem:taylor_large} that
\begin{multline*}
\mathcal{L}_E^\star - \mathcal{L}_O^\star = (C_E - C_O) c_a - \big(\postvarred{E} - \postvarred{O}\big)\\
+ \postvarred{E}(1 -\Upsilon(E)) - \postvarred{O}(1 -\Upsilon(O)).
\end{multline*}
Hence, $E$ is planner-preferred to $O$ if and only if $c_a < c_a^\star(E,O,\eta),$ where
\begin{multline*}
c_a^\star(E,O,\eta) = \frac{\postvarred{E} - \postvarred{O}}{C_E - C_O}\\
- \frac{\postvarred{E}(1 -\Upsilon(E)) - \postvarred{O}(1 -\Upsilon(O))}{C_E-C_O}.
\end{multline*}
The proposition follows
as we have that $0 \le \postvarred{\Delta} \le \eta^2$ (by construction)
and $1-\Upsilon(\Delta) = O((1-C_\Delta)^3)$ (by Lemma~\ref{lem:taylorUpsilon}\ref{part:upsilonBounds}) for all designs $\Delta$.

\subsection{Proof of Theorem \ref{prop:main_comparison} } \label{proof:prop:main_comparison}


We first consider the case in which $O$ is expensive, and apply that case to deduce the result for the case in which $O$ is cheap.  
When $O$ is expensive, by Lemma~\ref{lem:taylor_large}, we can write 
\begin{equation}
\label{eq:lossDiffExpensive}
\mathcal{L}_E^\star - \mathcal{L}_O^\star = C_Ec_a - \postvarred{E} \Upsilon(C_E) - C_O c_a + \postvarred{O} \Upsilon(C_O).
\end{equation} 

\paragraph*{Proof of (a)}
By Lemma~\ref{lem:taylorUpsilon}\ref{part:nondec}, since $C_E > C_O$ 
(in Setting~\ref{set:1}),
it follows from \eqref{eq:lossDiffExpensive} that
\[\mathcal{L}_E^\star - \mathcal{L}_O^\star < (C_E - C_O) c_a - \big(\postvarred{E} - \postvarred{O}\big) \Upsilon(C_E).\]
Note that as $C_E \in [0,1]$, Lemma~\ref{lem:taylorUpsilon}\ref{part:upsilonBounds} implies that
\[\Upsilon(C_E) \ge 1 - (1-C_E)^3 \ge 1 - (1-C_E) = C_E.\]
Since $\postvarred{E} \ge \postvarred{O}$ 
(in Setting~\ref{set:1}),
it follows that
\[\mathcal{L}_E^\star - \mathcal{L}_O^\star < (C_E - C_O) c_a - \big(\postvarred{E} - \postvarred{O}\big) C_E \le 0.\]
%
%

\paragraph*{Proof of (b)}
As $C_O < C_E$ (in Setting~\ref{set:1}) and $C_E \le 1$, we have $C_O < 1$.
Hence,
by Parts~\ref{part:upsilonBounds} and~\ref{part:upsilonUglyLowerBound} of Lemma~\ref{lem:taylorUpsilon}, it follows from \eqref{eq:lossDiffExpensive} that
\begin{multline*}
\mathcal{L}_E^\star - \mathcal{L}_O^\star > (C_E - C_O) c_a - \postvarred{E}\\
+ \postvarred{O} - \frac{1}{3}\postvarred{O} (1-C_O) \Phi^{-1}\Big(1 - \frac{C_O}{2}\Big)^2.
\end{multline*}

As $O$ is expensive and $X(O) \sim \mathcal{N}(0, S_O^2 + \eta^2)$, we have
\[C_O \ge \mathbb{P}(|X(O)| \ge \gamma_O^\star) =  2\bigg[1 - \Phi\bigg(\sqrt{\frac{c_a}{\postvarred{O}}}\bigg)\bigg].\]
Rearranging terms and applying $\Phi^{-1}$ yields that
\[\Phi^{-1}\Big(1 - \frac{C_O}{2}\Big)^2 \le \frac{c_a}{\postvarred{O}}\]

Hence, we have that
\begin{align*}
\mathcal{L}_E^\star - \mathcal{L}_O^\star &> (C_E - C_O) c_a - \postvarred{E} + \postvarred{O} - \frac{(1-C_O)c_a}{3}\\
&= \Big(C_E - \frac{2C_O+1}{3}\Big) c_a - \Big( \postvarred{E} - \postvarred{O}\Big) \ge 0.
\end{align*}

\paragraph*{Proof for the case in which $O$ is cheap}
Consider an unbiased design $O'$ with $S_{O'} = S_O$ and $C_{O'} = \mathbb{P}(|X(O)| \ge \gamma_O^\star)).$
By Lemma~\ref{lem:taylor_large}, we have that $\mathcal{L}^\star_{O'} = \mathcal{L}^\star_{O}$.
Moreover, $O'$ is expensive.
Hence, we can conclude the assertions for the comparison between $E$ and $O$ by applying Parts (a) and (b) of the theorem to compare $E$ to the expensive design $O'$.

\section{Proofs omitted from Section~\ref{sec:p_hacking}}
\label{app:proofsPhacking}

\newcommand\type{Y}

Let $\type = \theta + \varepsilon$ denote the type of the researcher 
in Setting~\ref{set:model_p_hacking}.
Let $\omega = \frac{\eta^2}{S^2 + \eta^2}$ denote the proportional posterior variance reduction from publishing any design in Setting~\ref{set:model_p_hacking}. 

We begin by characterizing the planner's expected loss conditional on $Y$.

\begin{lem}
\label{lem:lossThmAsymmetric}
In Setting~\ref{set:model_p_hacking}, writing $\omega = \frac{\eta^2}{S^2 + \eta^2}$:
\begin{enumerate}[label=(\alph*)]
\item The planner's expected loss conditional on $\type$ if the researcher chooses a design $\Delta$ that has publication probability $p$ (conditional on $\type$) is
\[\left(\omega^2 (\beta_\Delta^2 - \type^2)+ c_a\right)p + \omega^2 \type^2 + \omega S^2.\]
\item If no design is chosen,
then the planner's expected loss conditional on $\type$ is $\omega^2 \type^2 + \omega S^2.$
\end{enumerate}
\end{lem}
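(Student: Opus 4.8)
The plan is to compute the conditional expectation of the loss $\mathcal{L}_p(X(\Delta),\Delta,\theta)$ from Equation~\eqref{eqn:loss} given the researcher's type $\type = \theta + \varepsilon$, unpacking the two terms (audience squared error and attention cost) separately. First I would record the relevant conditional distributions. Since $\theta \sim \mathcal{N}(0,\eta^2)$ and $\varepsilon \mid \theta \sim \mathcal{N}(0,S^2)$ are independent, the pair $(\theta, \type)$ is jointly Gaussian, and a standard computation gives $\mathbb{E}[\theta \mid \type] = \frac{\eta^2}{S^2+\eta^2}\type = \omega \type$ and $\mathrm{Var}(\theta \mid \type) = \frac{S^2 \eta^2}{S^2 + \eta^2} = \omega S^2$, so $\mathbb{E}[\theta^2 \mid \type] = \omega^2 \type^2 + \omega S^2$. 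Note that $X(\Delta) = \type + \beta_\Delta$ is deterministic given $\type$ (the residual noise $\varepsilon$ is already absorbed into $\type$), so conditioning on $\type$ pins down $X(\Delta)$ and hence the audience's action whenever the study is published.

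Next I would evaluate the audience's posterior action. Conditional on publication, $a_p^\star(X(\Delta),\Delta) = \frac{X(\Delta)\eta^2}{S^2+\eta^2} = \omega(\type + \beta_\Delta)$, while conditional on non-publication it is $0$. The audience's squared-error term, conditional on $\type$ and on a publication decision with probability $p$, is therefore $p\,\mathbb{E}\big[(\theta - \omega(\type+\beta_\Delta))^2 \mid \type\big] + (1-p)\,\mathbb{E}[\theta^2 \mid \type]$. Expanding the first expectation using $\mathbb{E}[\theta \mid \type] = \omega\type$ and $\mathbb{E}[\theta^2\mid\type] = \omega^2\type^2 + \omega S^2$ gives $\omega^2\type^2 + \omega S^2 - 2\omega\type\cdot\omega(\type+\beta_\Delta) + \omega^2(\type+\beta_\Delta)^2 = \omega S^2 + \omega^2\beta_\Delta^2$ after cancellation. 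Subtracting the attention cost $c_a p$ and combining with the $(1-p)\mathbb{E}[\theta^2\mid\type]$ piece yields $p(\omega^2\beta_\Delta^2 + \omega S^2 - c_a) + (1-p)(\omega^2\type^2 + \omega S^2) - (-c_a p)$; collecting the $p$-terms gives $\big(\omega^2(\beta_\Delta^2 - \type^2) + c_a\big)p + \omega^2\type^2 + \omega S^2$, which is part~(a). Part~(b) is the special case $p = 0$ (equivalently, there is no published study, so the audience's action is $0$ and the loss is just $\mathbb{E}[\theta^2 \mid \type] = \omega^2\type^2 + \omega S^2$).

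The computation is entirely routine Gaussian conditioning; the only point requiring a little care — and the step I would flag as the main obstacle, though it is minor — is bookkeeping the fact that $\mathbb{E}_p$ in Equation~\eqref{eqn:loss} is expectation over the publication randomness only, so the outer conditional expectation over $(\theta,\varepsilon)$ given $\type$ must be taken afterwards, and one must keep straight that the researcher's choice of $\Delta$ (hence $\beta_\Delta$) is allowed to depend on $\type$ but the resulting publication probability $p$ is the quantity conditioned on. Once the cancellation $-2\omega^2\type(\type+\beta_\Delta) + \omega^2(\type+\beta_\Delta)^2 + \omega^2\type^2 = \omega^2\beta_\Delta^2$ is carried out correctly, both formulas drop out immediately.
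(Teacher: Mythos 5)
Your proposal is correct and follows essentially the same route as the paper: condition on $\type=\theta+\varepsilon$, use $\theta\mid\type\sim\mathcal{N}(\omega\type,\omega S^2)$ and $a_p^\star=\omega(\type+\beta_\Delta)$ upon publication, and carry out the cancellation $\mathbb{E}\big[(\theta-\omega(\type+\beta_\Delta))^2\mid\type\big]=\omega S^2+\omega^2\beta_\Delta^2$. One bookkeeping slip: in your intermediate display the $-c_a$ inside the first parenthesis and the trailing $-(-c_a p)$ cancel, so that expression as literally written has no $c_a p$ term and does not equal your (correct) final formula; the clean accounting is that the attention cost \emph{adds} $c_a p$ to the audience's squared error, consistent with the convention used in Lemma~\ref{lem:loss} (note the sign in the displayed Equation~\eqref{eqn:loss} is the opposite, so the derivation should follow the appendix's convention rather than that display).
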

\begin{proof}
Note that $\theta | \type \sim \mathcal{N}(\omega \type,\omega S^2)$.
The audience's action is $a^\star_p(X,\Delta) = \omega (\type + \beta_\Delta)$ if $X$ is published and $a_p^\star = 0$ if no results are published.
Hence, the expected loss conditional on $\type$ if the researcher chooses a design $\Delta$ that has publication probability $p$ (conditional on $\type$) is
\begin{equation*}
\left(\omega^2 \beta_\Delta^2 + c_a\right)p + \omega^2 \type^2 (1-p) + \omega S^2 = \left(\omega^2 (\beta_\Delta^2 - \type^2)+ c_a\right)p + \omega^2 \type^2 + \omega S^2,
\end{equation*}
which proves Part~(a).
Part~(b) follows directly from the fact that $\theta | \type \sim \mathcal{N}(\omega \type,\omega S^2)$.
\end{proof}

Motivated by Lemma~\ref{lem:lossThmAsymmetric}, for $0 \le v \le 1$,
consider the function
\begin{equation}
\label{eq:lstar}
\mathcal{L}^\star(\type,v) = \min_{p,\beta_\Delta \in [0,1] \times \mathbb{R} \big| p - c_m |\beta_\Delta| = v} \Big\{\left(\omega^2 (\beta_\Delta^2 - \type^2)+ c_a\right)p\Big\}.
\end{equation}
By Lemma~\ref{lem:lossThmAsymmetric}, 
in Setting~\ref{set:model_p_hacking},
$\mathcal{L}^\star(\type,v)$
gives the difference between the minimum expected loss conditional on $\type$ generated by a design $\Delta$ and publication probability that delivers utility $v-C_0$ to type $\type$,
and the conditional expected loss if no results are published.
We next derive two properties of $\mathcal{L}^\star(\type,v)$ and the optimizer in \eqref{eq:lstar}.
Here, we let $\gamma^\star = \frac{S^2 + \eta^2}{\eta^2} \sqrt{c_a}$.

\begin{lem}
\label{lem:asymmetrichelp}
\begin{enumerate}[label=(\alph*)]
\item \label{part:asymmetrichelpunique} If $|\type| \not= \gamma^\star$, then there is a unique optimizer $(\tilde{p}(\type,v),\beta_\Delta(\type,v))$ in \eqref{eq:lstar}, which is given by
\[\tilde{p}(\type,v) = \begin{cases}
v & \text{if } |\type| \le \gamma^\star\\
\min\left\{1,\frac{2v + \sqrt{v^2 + 3 c_m^2 (\type^2 - (\gamma^\star)^2)}}{3}\right\} & \text{if } |\type| > \gamma^\star
\end{cases}\]
and $\beta_\Delta(\type,v) = \frac{\tilde{p}(\type,v) - v}{c_m}$.
\item \label{part:asymmetrichelphardmonotonev} If $|\type| > \gamma^\star$ (resp.~$|\type| = \gamma^\star$, $|\type| < \gamma^\star$),
then
$\mathcal{L^\star}(\type,v)$ 
is negative (resp.~zero, positive) for $v > 0$,
and has negative (resp.~zero, positive) derivative in $v$ over $(0,1)$.
\end{enumerate}
\end{lem}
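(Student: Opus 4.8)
The plan is to reduce \eqref{eq:lstar} to a one-dimensional minimization over the publication probability $p$. The constraint $p - c_m|\beta_\Delta| = v$ forces $|\beta_\Delta| = (p-v)/c_m$, hence $p \ge v$, and since the objective depends on $\beta_\Delta$ only through $\beta_\Delta^2 = (p-v)^2/c_m^2$ we may take $\beta_\Delta \ge 0$. Using the identity $c_a = \omega^2 (\gamma^\star)^2$ (which follows from $\omega = \tfrac{\eta^2}{S^2+\eta^2}$ and $\gamma^\star = \tfrac{S^2+\eta^2}{\eta^2}\sqrt{c_a}$), this yields $\mathcal{L}^\star(\type,v) = \omega^2 \min_{p \in [v,1]} g(p)$, where $g(p) = \tfrac{(p-v)^2 p}{c_m^2} - D p$ and $D := \type^2 - (\gamma^\star)^2$. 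Then $g'(p) = \tfrac{(p-v)(3p-v)}{c_m^2} - D$, an upward-opening parabola in $p$ shifted down by $D$; note that on $[v,1]$ one has $3p-v \ge 2v \ge 0$.

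For part~(a) I would split on the sign of $D$. If $D \le 0$ then $g'(p) \ge -D \ge 0$ on $[v,1]$, strictly when $D<0$, and when $D = 0$ one still has $g'(p) > 0$ for $p > v$; in either case $g$ is increasing with unique minimizer $p = v$, giving $\tilde p(\type,v) = v$ and $\beta_\Delta = 0$ for $|\type| \le \gamma^\star$. If $D > 0$, solving $g'(p) = 0$, i.e.\ $3p^2 - 4vp + v^2 = c_m^2 D$, gives roots $p^{\pm} = \tfrac{2v \pm \sqrt{v^2 + 3c_m^2 D}}{3}$, and since $D>0$ forces $\sqrt{v^2+3c_m^2D} > v$ we get $p^- < v/3 \le v < p^+$. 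Hence $g' < 0$ on $[v,p^+)$ and $g' > 0$ beyond $p^+$, so on $[v,1]$ the function $g$ strictly decreases up to $\min\{1,p^+\}$ and (if $p^+ < 1$) strictly increases afterwards; its unique minimizer is $\tilde p(\type,v) = \min\{1,p^+\} = \min\big\{1, \tfrac{2v + \sqrt{v^2 + 3c_m^2(\type^2 - (\gamma^\star)^2)}}{3}\big\}$, with $\beta_\Delta = (\tilde p - v)/c_m$. This is exactly the claimed formula, and strict monotonicity of $g$ on the two pieces gives the asserted uniqueness for $|\type|\neq\gamma^\star$.

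For part~(b) I would substitute the minimizer back in. When $|\type| < \gamma^\star$ ($D<0$): $\mathcal{L}^\star(\type,v) = \omega^2 g(v) = -\omega^2 D v = \omega^2\big((\gamma^\star)^2 - \type^2\big) v$, positive for $v>0$ with positive derivative $\omega^2\big((\gamma^\star)^2 - \type^2\big)$. When $|\type| = \gamma^\star$ the same computation gives $\mathcal{L}^\star(\type,v) \equiv 0$. When $|\type| > \gamma^\star$: since $g(v) = -Dv < 0$ for $v>0$ and $\tilde p$ is a minimizer, $\mathcal{L}^\star(\type,v) = \omega^2 g(\tilde p) \le \omega^2 g(v) < 0$. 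For the derivative in $v$ I would use an envelope argument with two regimes: if $p^+(v) \le 1$ then $\partial_p g(\tilde p,v) = 0$, so $\tfrac{d}{dv}\mathcal{L}^\star = \omega^2 \partial_v g(\tilde p,v) = -\tfrac{2\omega^2(\tilde p - v)\tilde p}{c_m^2}$; if $p^+(v) > 1$ then $\tilde p = 1$ and $\mathcal{L}^\star(\type,v) = \omega^2\big(\tfrac{(1-v)^2}{c_m^2} - D\big)$, so $\tfrac{d}{dv}\mathcal{L}^\star = -\tfrac{2\omega^2(1-v)}{c_m^2}$. In both regimes this is strictly negative on $(0,1)$ because there $\tilde p > v > 0$ and $1-v > 0$; since $\mathcal{L}^\star(\type,\cdot)$ is a minimum of continuous functions it is continuous (indeed the two pieces match in value and slope at the transition point), hence strictly decreasing on $(0,1)$.

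The bookkeeping here is routine; the one step requiring care is the derivative in $v$ when $|\type| > \gamma^\star$, where $v$ enters both the objective and the feasible set $[v,1]$ and where the upper bound $p=1$ may bind — this is exactly what the two-regime envelope computation handles, and I expect it to be the only (minor) obstacle.
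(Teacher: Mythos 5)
Your proposal is correct and takes essentially the same route as the paper: both reduce \eqref{eq:lstar} to a one-dimensional minimization over $p\in[v,1]$ via $|\beta_\Delta|=(p-v)/c_m$, split on the sign of $Y^2-(\gamma^\star)^2$, solve the first-order condition to obtain the stated minimizer, and differentiate the value function by an envelope argument (the paper relaxes the constraint to $p\in[0,1]$ and invokes Milgrom--Segal, while you handle the interior and boundary regimes explicitly, but these are the same computation). Incidentally, your sign $-\tfrac{2\omega^2(\tilde p-v)\tilde p}{c_m^2}$ for the derivative is the correct one; the paper's displayed expression omits the minus sign, though its stated conclusion that the derivative is negative is right.
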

\begin{proof}
Writing $|\beta_\Delta| = \frac{p-v}{c_m}$,
note that \eqref{eq:lstar} can be written equivalently as
\begin{equation}
\label{eq:lossUtilSubstituted}
\mathcal{L^\star}(\type,v) = \min_{p \in [0,1] \mid p \ge v} \left\{ \left[\omega^2 \left(\frac{p-v}{c_m}\right)^2 - \omega^2 \type^2 + c_a\right] p \right\}.
\end{equation}
Noting that $\gamma^\star = \frac{1}{\omega} \sqrt{c_a}$,
we divide into cases based on the value of $|\type|$ to complete the~proof.
\begin{itemize}
    \item Case 1: $|\type|>\gamma^\star$.
    In this case, we claim that for $0 < v < 1$,
    all optimizers $\check{p}$ in the relaxed problem
    \begin{equation}
    \label{eq:lossUtilSubstitutedRelaxed}
    \min_{p \in [0,1]} \left\{ \left[\omega^2 \left(\frac{p-v}{c_m}\right)^2 - \omega^2 \type^2 + c_a\right] p \right\},
    \end{equation}
    satisfy $\check{p} > v$.
    Let $\check{p}$ be any optimizer in \eqref{eq:lossUtilSubstitutedRelaxed}.
    As the objective in \eqref{eq:lossUtilSubstitutedRelaxed} is negative at $p = v$, the optimum is negative. It follows that $\check{p} > 0$ and $\omega^2\left(\frac{\check{p}-v}{c_m}\right)^2 - \omega^2 \type^2 + c_a < 0$.
    The first-order condition then entails that $\check{p} = 1$ or
    \[0 = \frac{2 \omega^2(\check{p}-v)}{c_m} \check{p} + \left[\omega^2 \left(\frac{\check{p}-v}{c_m}\right)^2 - \omega^2 \type^2 + c_a\right] < \frac{2 \omega^2(\check{p}-v)}{c_m} \check{p},\]
    and it follows that $\check{p} > v$, as desired.
    Hence, the problems \eqref{eq:lossUtilSubstituted} and \eqref{eq:lossUtilSubstitutedRelaxed} are equivalent.
    
    It also follows from first-order condition that
    \[\check{p} = \min\left\{1,\frac{2v + \sqrt{v^2 + 3 c_m^2 (\type^2 - (\gamma^\star)^2)}}{3}\right\}\]
    is the unique optimizer in \eqref{eq:lossUtilSubstitutedRelaxed}.
    The case of Part~\ref{part:asymmetrichelpunique} of the lemma for $|Y| > \gamma^\star$ follows.

    To prove the analogous case of Part~\ref{part:asymmetrichelphardmonotonev}, note that as \eqref{eq:lossUtilSubstitutedRelaxed} has a negative optimum, we have $\mathcal{L}^\star(\type;v) < 0$.    
    The Envelope Theorem \citep[Theorem 3]{milgrom2002envelope} applied to \eqref{eq:lossUtilSubstitutedRelaxed} guarantees that $\mathcal{L}^\star(\type;v)$ is partially differentiable in $v$
    , and that
    $$\frac{\partial \mathcal{L}^\star(\type,v)}{\partial v} = \frac{2 \omega^2 (\check{p} - v\big) \check{p}}{c_m^2} < 0.$$
    
    \item Case 2: $|\type| \le \gamma^\star$.  In this case, the objective in \eqref{eq:lossUtilSubstituted} is increasing in $p$ on the interval $[v,1]$.  The optimum is therefore achieved at $\check{p} = v$ (uniquely for $|\type| < \gamma^\star$),
    so we have that
    $\mathcal{L^\star}(\type,v) = (c_a - \omega^2 \type^2) v$.
    For $|\type| = \gamma^\star$, this function is zero.
    For $|\type| < \gamma^\star$, this function is positive for $v > 0$, and has positive derivative in $v$. 
\end{itemize}
The cases exhaust all possibilities, which completes the proof.
\end{proof}

\subsection{Proof of Theorem~\ref{thm:optimal_assymetric_info}} \label{proof:thm:1}


Consider first the constrained problem in which the planner must choose a publication rule that provides type $ \gamma^\star$ an indirect utility of $u^\star \in [0,1-C_0]$.%
\footnote{Recall that indirect utility refers to the type's utility under any utility-maximizing design choice.}
The linearly smoothed cutoff rule $p^\star_{\gamma^\star + \frac{1-C_0-u^\star}{c_m},c_m}$ lies within this class.
In fact, we can use Lemma~\ref{lem:lossThmAsymmetric} to show that this publication rule is optimal, and essentially uniquely so, in the constrained problem.

\begin{lem}
\label{lem:asymmetrichelplinearsmooth}
Within the class of publication rules that provide type  $\gamma^\star$ an indirect utility of $u^\star \in [0,1-C_0]$,
for all types $\type > 0$:
\begin{enumerate}[label=(\alph*)]
    \item \label{part:asymmetrichelplinearsmoothopt} the linearly smoothed cutoff rule $p_{\gamma^\star + \frac{1-C_0-u^\star}{c_m},c_m}$ minimizes the expected loss conditional on $\type$ (under the planner's preferred equilibrium), and
    \item \label{part:asymmetrichelplinearsmoothunique} any other publication rule within this class minimizes the expected loss conditional on $\type$ must provide the same indirect utility to type $\type$ as $p_{\gamma^\star + \frac{1-C_0-u^\star}{c_m},c_m}$.
\end{enumerate}
\end{lem}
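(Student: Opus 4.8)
The plan is to translate the lemma into a statement about the researcher's indirect utility and then exploit the Lipschitz structure this forces. For a publication rule $p$ and a type $\type$, write $U_p(\type) := \sup_{\beta}\{p(\type+\beta) - c_m|\beta|\} = \sup_{x}\{p(x) - c_m|x-\type|\}$, so that $\type$'s indirect utility under $p$ equals $U_p(\type) - C_0$ and membership in the class is the single constraint $U_p(\gamma^\star) = u^\star + C_0 =: v^\star$. Three elementary observations will organize the argument. First, $U_p$ is a supremum over $x$ of the $c_m$-Lipschitz maps $\type \mapsto p(x) - c_m|x-\type|$, it dominates $p$, and each term is at most $1$, so $U_p$ is $c_m$-Lipschitz with $0 \le U_p \le 1$. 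Second, writing $X^\star := \gamma^\star + \frac{1-v^\star}{c_m}$ and reading off the kinks of the ramp $p^\dagger := p_{X^\star,c_m}$ (which has slope at most $c_m$) gives, for $\type > 0$, $U_{p^\dagger}(\type) = \min\{1,\max\{0,\,v^\star + c_m(\type-\gamma^\star)\}\}$; in particular $U_{p^\dagger}(\gamma^\star) = v^\star$, so $p^\dagger$ lies in the class. Third, by Lemma~\ref{lem:lossThmAsymmetric} and the definition \eqref{eq:lstar} of $\mathcal{L}^\star$, for every rule $p$ and type $\type$ the expected loss conditional on $\type$ under the planner-preferred equilibrium is at least $\mathcal{L}^\star(\type, U_p(\type)) + \omega^2\type^2 + \omega S^2$, since any best response $\beta^\star$ satisfies $p(\type+\beta^\star) - c_m|\beta^\star| = U_p(\type)$, so $(p(\type+\beta^\star),\beta^\star)$ is feasible in \eqref{eq:lstar} at $v = U_p(\type)$ and the planner-preferred equilibrium minimizes the loss contribution over a subset of that feasible set.

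The first substantive step is to show the bound in the third observation is attained by $p^\dagger$, i.e.\ that $p^\dagger$ generates conditional loss exactly $\mathcal{L}^\star(\type, U_{p^\dagger}(\type)) + \omega^2\type^2 + \omega S^2$ at every $\type>0$. I would argue this by a case analysis on where $\type$ lies relative to the kinks $X^\star - \frac{1}{c_m}$ and $X^\star$ of $p^\dagger$: on the flat pieces the researcher's unique best response is $\beta = 0$, while on the (positive) ramp the best-response set is the interval $[0, X^\star-\type]$, and along it the reachable pairs $(p^\dagger(\type+\beta),\beta)$ trace out exactly $\{(p,\beta): p - c_m|\beta| = U_{p^\dagger}(\type),\ p \in [0,1],\ \beta \ge 0\}$. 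Since Lemma~\ref{lem:asymmetrichelp}\ref{part:asymmetrichelpunique} gives $\tilde p(\type,v) \ge v$ (hence $\beta_\Delta(\type,v) \ge 0$) and also $\beta_\Delta(\type,v) \le \frac{1-v}{c_m}$, the unique minimizer of \eqref{eq:lstar} at $v = U_{p^\dagger}(\type)$ lies in this reachable set, so the planner-preferred equilibrium under $p^\dagger$ selects it and the bound holds with equality.

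The second step compares $U_p$ to $U_{p^\dagger}$ for an arbitrary $p$ in the class. By the first observation, $U_p$ is $c_m$-Lipschitz, nonnegative, bounded by $1$, and equal to $v^\star$ at $\gamma^\star$; hence $U_p(\type) \le \min\{1, v^\star + c_m(\type-\gamma^\star)\} = U_{p^\dagger}(\type)$ for $\type \ge \gamma^\star$, while $U_p(\type) \ge \max\{0, v^\star - c_m(\gamma^\star-\type)\} = U_{p^\dagger}(\type)$ for $0 < \type \le \gamma^\star$. By Lemma~\ref{lem:asymmetrichelp}\ref{part:asymmetrichelphardmonotonev}, $\mathcal{L}^\star(\type,\cdot)$ is non-increasing on $[\gamma^\star,\infty)$, non-decreasing on $(0,\gamma^\star]$, and identically $0$ at $\gamma^\star$, so in every case $\mathcal{L}^\star(\type, U_p(\type)) \ge \mathcal{L}^\star(\type, U_{p^\dagger}(\type))$. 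Chaining this with the third observation and the first step shows the conditional loss under $p$ is at least that under $p^\dagger$, which is part~\ref{part:asymmetrichelplinearsmoothopt}. For part~\ref{part:asymmetrichelplinearsmoothunique}, if a rule $p$ in the class ties $p^\dagger$ at some $\type>0$ then every inequality in the chain is an equality, so $\mathcal{L}^\star(\type, U_p(\type)) = \mathcal{L}^\star(\type, U_{p^\dagger}(\type))$; since Lemma~\ref{lem:asymmetrichelp}\ref{part:asymmetrichelphardmonotonev} gives \emph{strict} monotonicity of $\mathcal{L}^\star(\type,\cdot)$ for $\type \ne \gamma^\star$, this forces $U_p(\type) = U_{p^\dagger}(\type)$, i.e.\ $p$ delivers type $\type$ the same indirect utility as $p^\dagger$; for $\type=\gamma^\star$ this is immediate from membership in the class.

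The step I expect to be the main obstacle is the tightness claim in the first substantive step: pinning down the best-response correspondence of $p^\dagger$ on the ramp and verifying it contains the minimizer of \eqref{eq:lstar}. This requires ruling out profitable deviations to the left of $\type$ and onto the negative branch of $p^\dagger$ — which follows because $p^\dagger$ is even and $c_m$-Lipschitz and the cost is exactly $c_m|\beta|$, so a deviation to $|x|$ weakly dominates one to $x<0$ and a leftward deviation strictly loses along the ramp — and then carefully matching the reachable $(p^\dagger(\type+\beta),\beta)$ pairs to the feasible set of \eqref{eq:lstar}, including the boundary types that coincide with a kink. A minor technical point, already built into Setting~\ref{set:model_p_hacking}, is the existence of a best response; the lower bound in the third observation does not actually need it, since it can be obtained via near-optimal deviations.
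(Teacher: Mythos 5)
Your proof is correct and follows essentially the same route as the paper's: the paper's case-by-case ``mimicking'' deviations are exactly your observation that the indirect utility $U_p$ is $c_m$-Lipschitz with $U_p(\gamma^\star)=u^\star+C_0$, and both arguments then combine the resulting bounds $U_p \le U_{p^\dagger}$ above $\gamma^\star$ and $U_p \ge U_{p^\dagger}$ below with the (strict) monotonicity of $\mathcal{L}^\star(\type,\cdot)$ from Lemma~\ref{lem:asymmetrichelp}\ref{part:asymmetrichelphardmonotonev}. Your packaging via $U_p$ and the explicit verification that $p_{\gamma^\star+\frac{1-C_0-u^\star}{c_m},c_m}$ attains the pointwise lower bound are just a cleaner write-up of the same steps.
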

\begin{proof}
Under the planner's preferred equilibrium, the publication rule $p_{\gamma^\star + \frac{1-C_0-u^\star}{c_m},c_m}$ delivers expected loss conditional on $\type$ given by
\[
\begin{cases}
\omega^2 \type^2 + \omega S^2 & \text{if } |\type| \le \gamma^\star - \frac{u^\star}{c_m}\\
\mathcal{L}^\star(\type,u^\star + c_m (\type - \gamma^\star) + C_0) + \omega^2 \type^2 + \omega S^2 & \text{if } \gamma^\star - \frac{u^\star}{c_m} < |\type| < \gamma^\star + \frac{1-C_0-u^\star}{c_m}\\
\mathcal{L}^\star(\type,1) + \omega^2 \type^2 + \omega S^2 & \text{if } |\type| \ge \gamma^\star + \frac{1-C_0-u^\star}{c_m}
\end{cases}.
\]
We divide into cases based on the value of $\type$ to prove the lemma.
\begin{itemize}
    \item Case 1: $\type > \gamma^\star$.  If type $\type$ obtains utility $u$, then type $\gamma^\star$ could obtain utility at least $u - c_m (\type - \gamma^\star)$ by choosing a design to obtain the same results as type $\type$.
    Hence, we must have that $u - c_m (\type - \gamma^\star) \le u^\star$.
    As $C_\Delta \ge C_0$, we have $u \le 1 - C_0$.
    By Lemma~\ref{lem:asymmetrichelp}\ref{part:asymmetrichelphardmonotonev} and the definition of $\mathcal{L}^\star$, the expected loss conditional on $\type$ must be at least \[\mathcal{L}^\star\big(\type,\min\big\{u^\star + c_m (\type - \gamma^\star) + C_0,1\big\}\big) + \omega^2 \type^2 + \omega S^2 \]
    with equality only if the indirect utility of type $\type$ is $\min\big\{u^\star + c_m (\type - \gamma^\star) ,1 - C_0\big\}.$
    \item Case 2: $\type = \gamma^\star$.  By Lemma~\ref{lem:asymmetrichelp}\ref{part:asymmetrichelphardmonotonev} and the definition of $\mathcal{L}^\star$, the expected loss conditional on $\type$ must be at least $\omega^2 \type^2 + \omega S^2 = \mathcal{L}^\star(\type,u^\star + C_0) + \omega^2 \type^2 + \omega S^2$.
    \item Case 3: $\gamma^\star - \frac{u^\star}{c_m} \le \type < \gamma^\star$.  Type $\type$ could obtain utility at least $u^\star + c_m (\type - \gamma^\star) \ge 0$ by choosing a design to obtain the same results as type $\gamma^\star$.
    By Lemma~\ref{lem:asymmetrichelp}\ref{part:asymmetrichelphardmonotonev} and the definition of $\mathcal{L}^\star$, the expected loss conditional on $\type$ must be at least \[\mathcal{L}^\star(\type,u^\star + c_m (\type - \gamma^\star) + C_0) + \omega^2 \type^2 + \omega S^2,\]
    with equality only if the indirect utility of type $\type$ is $u^\star + c_m (\type - \gamma^\star)$.
    \item Case 4: $\type < \gamma^\star - \frac{u^\star}{c_m}$.
    By Lemma~\ref{lem:asymmetrichelp}\ref{part:asymmetrichelphardmonotonev} and the definition of $\mathcal{L}^\star$, the expected loss conditional on $\type$ must be at least $\omega^2 \type^2 + \omega S^2$,
    with equality only if the indirect utility of type $\type$ is 0.
\end{itemize}
The cases exhaust all possibilities, which completes the proof.
\end{proof}

We are now ready to complete the proof of the theorem.

\paragraph*{Proof of Part~(a)} Lemma~\ref{lem:asymmetrichelplinearsmooth}\ref{part:asymmetrichelplinearsmoothopt} implies there exists a utility level $u^\star \in [0,1-C_0]$ for type $\gamma^\star$ such that $p_{\gamma^\star + \frac{1-C_0-u^\star}{c_m},c_m}$ (under the planner's preferred equilibrium) is optimal.
The expected loss of $p_{\gamma^\star + \frac{1-C_0-u^\star}{c_m},c_m}$ (under the planner's preferred equilibrium) is
\begin{equation}
\label{eq:optXAsymmetricHelp}
\mathcal{E}(u^\star,C_0) = \mathbb{E}_{\type}\left[\mathcal{L}^\star\big(\type,\min\{u^\star+C_0+c_m(\type-\gamma^\star),1\}\big) \, 1\left\{|\type| \ge \gamma^\star - \frac{u^\star}{c_m}\right\}\right] + \eta^2.
\end{equation}
Differentiating under the integral sign using Lemma~\ref{lem:asymmetrichelp}\ref{part:asymmetrichelphardmonotonev},
we see that $\frac{\partial \mathcal{E}}{\partial u^\star} \big|_{u^\star = 0} < 0$ and that $\frac{\partial \mathcal{E}}{\partial u^\star} \big|_{u^\star = 1-C_0} > 0$.
So for $p_{\gamma^\star + \frac{1-C_0-u^\star}{c_m},c_m}$ to be optimal,
we must have $0 < u^\star < 1-C_0$, and hence that
$\gamma^\star < \gamma^\star + \frac{1-C_0-u^\star}{c_m} < \gamma^\star + \frac{1 - C_0}{c_m}.$

\paragraph*{Proof of Part (b)}
Consider an optimal publication rule $p^\star$
that delivers indirect utility $u^\star$ to type $\gamma^\star$.
By Lemma~\ref{lem:asymmetrichelplinearsmooth},
the publication rule $p^\star$ (under the planner's preferred equilibrium) must deliver expected loss conditional on $\type$ equal to that of $p_{\gamma^\star + \frac{1-C_0-u^\star}{c_m},c_m}$ for almost all types $\type > 0$,
and $p_{\gamma^\star + \frac{1-C_0-u^\star}{c_m},c_m}$ must be optimal.
In particular, we have $0 < u^\star < 1 - C_0$.
Using these consequences of optimality,
we prove the two assertions in this part separately.

First, suppose for sake of deriving a contradiction that $p^\star(X) \not= p_{\gamma^\star + \frac{1-C_0-u^\star}{c_m},c_m}(X)$ for a positive measure of $X > 0$ with $p_{\gamma^\star + \frac{1-C_0-u^\star}{c_m},c_m}(X) > C_0$.
Then, at least one of the following must occur.
\begin{itemize}
    \item Case 1: $p^\star(X) \not= 1$ for a positive measure of results $X > \gamma^\star + \frac{1 - C_0-u^\star}{c_m}$.
    Then types $\type > \gamma^\star + \frac{1 - C_0-u^\star}{c_m}$ with $p^\star(\type) < 1$ must obtain indirect utility less than $1 - C_0$,
    which, by Lemma~\ref{lem:asymmetrichelplinearsmooth}\ref{part:asymmetrichelplinearsmoothunique}, must lead to expected loss conditional on $\type$ strictly greater than that of $p_{\gamma^\star + \frac{1-C_0-u^\star}{c_m},c_m}$---a contradiction.
    \item Case 2: the set $A =\left\{X \in \left(\gamma^\star - \frac{u^\star}{c_m}, \gamma^\star + \frac{1 - C_0-u^\star}{c_m}\right) \Big| p^\star(X) \not= u^\star + C_0 + c_m(X - \gamma^\star)\right\}$
    has positive measure.
    As $p$ is assumed Borel measurable, the set $A$ is a Borel set.
    Letting $\tilde{p}(\type,v)$ be as defined in Lemma~\ref{lem:asymmetrichelp}\ref{part:asymmetrichelpunique},
    $\tilde{p}(\type,u^\star + C_0 +  c_m(\type - \gamma^\star))$ is absolutely continuous in $\type$ and evaluates to $C_0$ (resp.~1) at $\type = \gamma^\star - \tfrac{u^\star}{c_m}$ (resp.~$\type = \gamma^\star + \frac{1-C_0-u^\star}{c_m}$).
    In particular, by the Intermediate Value Theorem, $\tilde{p}(\type,u^\star + C_0 +  c_m(\type - \gamma^\star))$ is surjective from $\left(\gamma^\star - \tfrac{u^\star}{c_m}, \gamma^\star + \tfrac{1 - C_0-u^\star}{c_m}\right)$ to $\left(\gamma^\star - \tfrac{u^\star}{c_m}, \gamma^\star + \tfrac{1 - C_0-u^\star}{c_m}\right).$
    Hence, the set
    \[A' = \left\{Y \in \left(\gamma^\star - \tfrac{u^\star}{c_m}, \gamma^\star + \tfrac{1 - C_0-u^\star}{c_m}\right) \bigg| \type + \tfrac{\tilde{p}(\type,u^\star + C_0 +  c_m(\type - \gamma^\star)) - u^\star - C_0 - c_m(\type - \gamma^\star)}{c_m} \in A\right\}\]
    has positive measure.

    Lemma~\ref{lem:asymmetrichelplinearsmooth}\ref{part:asymmetrichelplinearsmoothunique} implies for $\type \in \left(\gamma^\star - \frac{u^\star}{c_m}, \gamma^\star + \frac{1 - C_0-u^\star}{c_m}\right)$,
    to obtain expected loss conditional on $\type$ equal to that of $p_{\gamma^\star + \frac{1-C_0-u^\star}{c_m},c_m}$,
    type $\type$ must obtain indirect utility $u^\star +  c_m(\type - \gamma^\star)$.
    But if a type $\type \in A'$ obtains indirect utility $u^\star +  c_m(\type - \gamma^\star)$,
    they must have publication probability different from $\tilde{p}(X,u^\star + C_0 +  c_m(\type - \gamma^\star))$. By Lemma~\ref{lem:asymmetrichelp}\ref{part:asymmetrichelpunique}, this would lead to expected loss conditional on $\type$ strictly greater than under $p_{\gamma^\star + \frac{1-C_0-u^\star}{c_m},c_m}$---which is a contradiction as $A'$ has positive measure.
\end{itemize}

Next, suppose for sake of deriving a contradiction that $p^\star(X) > C_0$ for a positive measure of $X > 0$ with $p_{\gamma^\star + \frac{1-C_0-u^\star}{c_m},c_m}(X) \le C_0$.
Then types $\type \in \big(0,\gamma^\star - \frac{1-u^\star}{c_m}\big)$ with $p^\star(\type) > C_0$ must obtain positive indirect utility,
which, by Lemma~\ref{lem:asymmetrichelplinearsmooth}\ref{part:asymmetrichelplinearsmoothunique}, must lead to expected loss conditional on $\type$ strictly greater than that of $p_{\gamma^\star + \frac{1-C_0-u^\star}{c_m},c_m}$---a contradiction.

\subsection{Proof of Propositions~\ref{prop:non_surprisingPublish},~\ref{prop:surprisingNotPublish},~\ref{prop:manipulation_equilibrium}, and~\ref{prop:bunching_equilibrium}}
\label{proof:prop:non_surprising}

Let $X^\star$ be as in Theorem~\ref{thm:optimal_assymetric_info}(b), and
%
let $\type \in (X^\star - \frac{1-C_0}{c_m},X^\star)$.

Under publication rule $p^\star$, type $\type$ is indifferent between all designs $\Delta$ with $0 \le \beta_\Delta \le X^\star - \type$,
and strictly prefers such designs to all other designs.
These designs yield utility $v - C_0$, where $v = 1 - c_m(X^\star - \type) > C_0$.
Optimality then implies that almost surely, the lowest loss design $\Delta$ with $0 \le \beta_\Delta \le X^\star - \type$ will be chosen.
Lemma~\ref{lem:lossThmAsymmetric} implies that the expected loss conditional on $\type$ if type $\type$ chooses such a design $\Delta$ is
\begin{equation*}
\left(\omega^2 (\beta_\Delta^2 - \type^2)+ c_a\right)p + \omega^2 \type^2 + \omega S^2 \quad \text{where} \quad p = 1 - c_m(X^\star - \type - \beta_\Delta).
\end{equation*}
Hence, defining $\tilde{p}(\type,v)$ and $\beta_\Delta(\type,v)$ as in Lemma~\ref{lem:asymmetrichelp}\ref{part:asymmetrichelpunique}, the loss-minimizing design has bias $\beta_\Delta(\type,v)$,
and the publication probability will be $p^\star(X) = \tilde{p}(\type,v)$.

To prove Proposition~\ref{prop:non_surprisingPublish},
suppose that $\type \in \big(X^\star-\frac{1-C_0}{c_m},\gamma^\star\big)$.
Lemma~\ref{lem:asymmetrichelp}\ref{part:asymmetrichelpunique}
implies that $\tilde{p}(\type,v)= v$ and $\beta_\Delta(\type,v) = 0$.
Hence, almost surely, we have $p^\star(X) > C_0$ and $\beta_{\Delta^\star_{p^\star}} = 0$.

To prove Proposition~\ref{prop:surprisingNotPublish},
suppose that $\type > \gamma^\star$ is such that $\tilde{p}(\type,v) < 1$; there is a positive measure of such $\type$ as it follows from Lemma~\ref{lem:asymmetrichelp}\ref{part:asymmetrichelpunique} that $\tilde{p}(\type,1-c_m(X^\star-\type))$ is continuous in $\type$ and satisfies $\tilde{p}(\gamma^\star,1-c_m(X^\star-\gamma^\star)) = 1-c_m(X^\star-\gamma^\star) < 1$.
Almost surely, we then have that $p^\star(X) = \tilde{p}(\type,v) < 1$.

\label{proof:prop:manipulation_bunching_equilibrium}


To prove 
Proposition~\ref{prop:manipulation_equilibrium},
suppose that $\type \in (\gamma^\star,X^\star)$.
If $\tilde{p}(\type,v) < 1$, then by Lemma~\ref{lem:asymmetrichelp}\ref{part:asymmetrichelpunique}, we have
\begin{align*}
c_m \beta_\Delta(\type,v) &= \tilde{p}(\type,v) - v = \frac{2v + \sqrt{v^2 + 3 c_m^2 (\type^2 - (\gamma^\star)^2)}}{3} - v > v - v = 0.
\end{align*}
On the other hand, if $\tilde{p}(\type,v) = 1$, then  then by Lemma~\ref{lem:asymmetrichelp}\ref{part:asymmetrichelpunique}, we have $\beta_\Delta(\type,v) = 1-v > 0$.
Hence, almost surely, we have that $\beta_{\Delta^\star_{p^\star}} > 0$.

To prove
Proposition~\ref{prop:bunching_equilibrium},
note that 
at $v = 1$ and $\type = X^\star$, we have
$$
\frac{2 v + \sqrt{v^2 + 3 c_m^2(\type^2 - \gamma^{\star 2})}}{3} = 
\frac{2  + \sqrt{1 + 3 c_m^2(X^{\star 2} - \gamma^{\star 2})}}{3} \ge \frac{7}{3}  > 1
$$
Hence, by continuity, there exists $\zeta > 0$ such that for all $\type \in (X^\star - \zeta,X^\star)$, we have
\[\frac{2v + \sqrt{v^2 + 3 c_m^2 (\type^2 - (\gamma^\star)^2)}}{3} > 1.\]
For such $\type$, by Lemma~\ref{lem:asymmetrichelp}\ref{part:asymmetrichelpunique}, we have $\tilde{p}(\type,v) = 1$ and $\type + \beta_\Delta(\type,v) = \type + \frac{1-v}{c_m} = X^\star$.
Hence, almost surely, we have that $\type + \beta_{\Delta^\star_{p^\star}} = X^\star$.

\section{Proof of Proposition \ref{prop:comparison_pap} } \label{proof:pre_analysis}

\paragraph*{Proof of (a)}
In the notation of Appendix~\ref{app:proofsPhacking}, by Lemma~\ref{lem:lossThmAsymmetric}, we have
\begin{align*}
\mathcal{L}^\star_M &\ge \mathbb{E}_{\type}\left[\min_{\Delta,p} \left\{\big(\omega^2(\beta_\Delta^2-\type^2)+c_a\big)p + \omega^2 \type^2+\omega S^2 \right\}\right]\\
&= \mathbb{E}_{\type}\left[\min_{p} \left\{(c_a - \omega^2 \type^2)p + \omega^2 \type^2+ \omega S^2 \right\}\right]\\
&= \mathbb{E}_{\type}\Big[(c_a - \omega^2 \type^2) 1\{|\type| \ge \gamma^\star\} + \omega^2 \type^2+\omega S^2 \Big],
\end{align*}
where the last equality holds as $\gamma^\star = \frac{\sqrt{c_a}}{\omega}$ (because $\omega = \frac{\eta^2}{\eta^2 + S^2}$).
It follows from Proposition~\ref{prop:manipulation_equilibrium} that the inequality in the equation above must be strict.
Hence, considering the publication rule $p^\star(X) = 1\{|X| \ge \gamma^\star\}$, by Lemma~\ref{lem:loss}, it follows that
\begin{align*}
\mathcal{L}^\star_M &> \mathbb{E}_{X(E)}\bigg[\mathbb{E}\Big[\mathcal{L}_{p^\star}\big(X(E),E,\theta\big)\Big|X(E)\Big]\bigg] = \mathcal{L}^\star_E,
\end{align*}
where the last equality holds by Corollary~\ref{cor:t_testCheapExpensive}(a) as $E$ is cheap.

\paragraph*{Proof of (b)} Consider the publication rule defined by $p(X) = 1\big\{|X| \ge \gamma^\star_E + \frac{1}{c_m}\big\}$.
By construction, writing $\Delta^\star_{p}$ for a best response of the researcher under manipulation, we have
\begin{equation*} 
\mathcal{L}_M^\star \le \mathbb{E}_{\theta,\varepsilon}\Big[\mathcal{L}_{p}\big(X(\Delta_{p}^\star), \Delta_{p}^\star, \theta\big)\Big].
\end{equation*}
Expanding the expression on the right-hand side and writing $X(\Delta_{p}^\star)$ yields that
\begin{align*}
\mathcal{L}_M^\star & \le \mathbb{E}_{\theta,\varepsilon}\left[\Big(\varepsilon + \beta_{\Delta_{p}^\star}\Big)^2 p(X) + \eta^2\big(1 - p(X)\big) + c_a p(X)\right] \\ 
& =\underbrace{\mathbb{E}_{\theta,\varepsilon}\Big[ \varepsilon^2 p(X) + \eta^2\big(1 - p(X)\big) + c_a p(X)\Big]}_{(A)} + \underbrace{\mathbb{E}_{\theta,\varepsilon}\Big[\beta_{\Delta_{p}^\star}^2 p(X)\Big] + 2 \mathbb{E}_{\theta,\varepsilon}\Big[\beta_{\Delta_{p}^\star} \varepsilon p(X)\Big]}_{(B)}
\end{align*}

Under $p$, all researcher types with $|\theta + \varepsilon| \in \big[\gamma_E^\star, \gamma_E^\star + \frac{1}{c_m}\big]$ choose $|\beta_{\Delta_{p}^\star}| = \gamma_E^\star + \frac{1}{c_m} - |\theta + \varepsilon|$. All other types choose $|\beta_{\Delta_{p}^\star}| = 0$. In particular, we have that $p(X(\Delta_{p}^\star)) = 1\{|\theta + \varepsilon| \ge \gamma_E^\star\}$. Hence, writing $\hat{p}(X) = 1\{|X| \ge \gamma_E^\star\}$, we have that
$$
\begin{aligned}
(A) & =  \mathbb{E}_{\theta,\varepsilon}\Big[ \varepsilon^2 p(X) + \eta^2\big(1 - p(X)\big) + c_a p(X)\Big] = \mathbb{E}_{\theta,\varepsilon}\Big[\mathcal{L}_{\hat{p}}(\theta + \varepsilon, 0, \theta)\Big] \\ 
(B) & = \mathbb{E}_{\theta,\varepsilon}\Big[\beta_{\Delta_{p}^\star}^2 \hat{p}(\theta + \varepsilon)\Big] + 2 \mathbb{E}\Big[\beta_{\Delta_{p}^\star} \varepsilon \hat{p}(\theta + \varepsilon)\Big]. 
\end{aligned} 
$$
That is $(A)$ equals the loss function as if there was no manipulation, under a publication rule $\hat{p}$. $(B)$ captures the manipulation component.

To bound $(B)$,
note that the researcher will never choose a bias larger than $1/c_m$ under publication rule $p$, we have $\beta_{\Delta_{p}^\star}^2 \le \frac{1}{c_m^2}.$
Hence, for the first term of $(B)$, we have
$$
\mathbb{E}_{\theta,\varepsilon}\Big[\beta_{\Delta_{p}^\star}^2 \hat{p}(\theta + \varepsilon)\Big] \le \frac{1}{c_m^2}
$$ 
For the second term of $(B)$, by the Cauchy-Schwarz inequality, we have
$$
\mathbb{E}_{\theta,\varepsilon}\Big[\beta_{\Delta_{p}^\star} \varepsilon \hat{p}(\theta + \varepsilon)\Big] \le \sqrt{\mathbb{E}_{\theta,\varepsilon}[\beta_{\Delta_{p}^\star}^2] S^2} \le \frac{S}{c_m}.
$$

Collecting the terms, we have that
$$
\mathcal{L}_E^\star - \mathcal{L}_M^\star \ge \mathcal{L}_E^\star - \mathbb{E}\Big[\mathcal{L}_{\hat{p}}(\theta + \varepsilon, 0, \theta)\Big]  - \frac{1 + 2 S c_m}{c_m^2}. 
$$ 
On the other hand, $\mathbb{E}\Big[\mathcal{L}_{\hat{p}}(\theta + \varepsilon, 0, \theta)\Big] = \mathcal{L}_{E'}^\star$ for a design $E'$ with cost $C_{E'} = 0$ and variance $S_{E'}^2 = S^2$. It follows that $\mathcal{L}_E^\star - \mathbb{E}\Big[\mathcal{L}_{\hat{p}}(\theta + \varepsilon, 0, \theta)\Big] = \mathrm{IC}(E)$,
and hence we have
\[\mathcal{L}_E^\star - \mathcal{L}_M^\star \ge \mathrm{IC}(E) - \frac{1 + 2 S c_m}{c_m^2} > 0.\]

\section{Additional figure}
\label{app:figs}

\begin{figure}[!ht]
\centering 
\includegraphics[scale = 0.5]{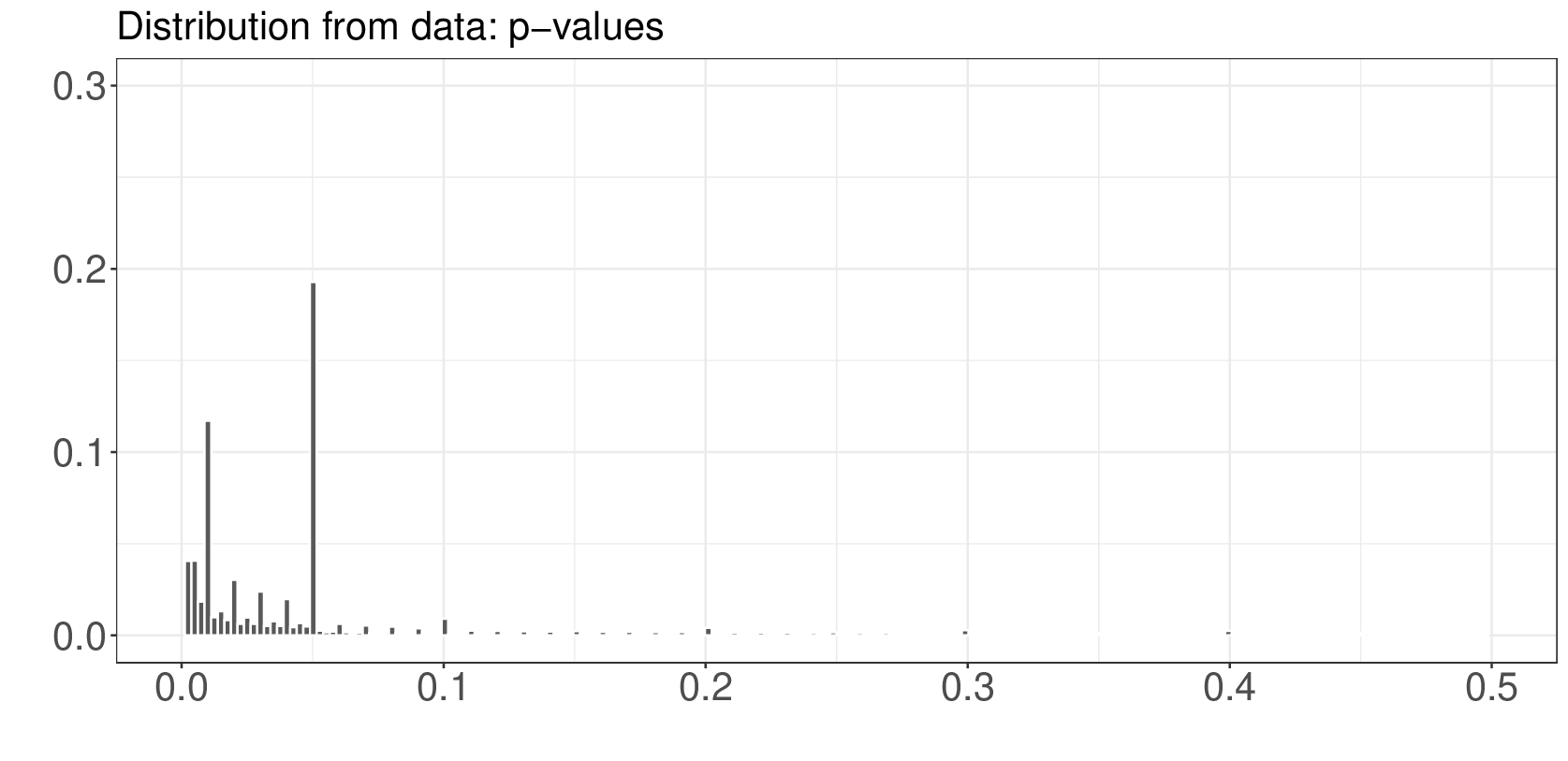}
\caption{Distribution of $p$-values for medical and pharmaceutical articles in \cite{head2015extent}.} \label{fig:raw}
\end{figure}

\end{document}